\documentclass{article}

\usepackage[preprint]{neurips_2026}

\usepackage[utf8]{inputenc} 
\usepackage[T1]{fontenc}    
\usepackage{hyperref}       
\usepackage{url}            
\usepackage{booktabs}       
\usepackage{amsfonts}       
\usepackage{nicefrac}       
\usepackage{microtype}      
\usepackage{xcolor}         
\usepackage{amsmath}
\usepackage{multirow}
\usepackage{graphicx}
\usepackage{algorithm}
\usepackage{algorithmic}
\usepackage{amsthm}
\usepackage{amssymb}
\usepackage{bbding}
\usepackage{stfloats}

\newtheorem{theorem}{Theorem}

\newtheorem{corollary}[theorem]{Corollary}
\newtheorem{lemma}[theorem]{Lemma}
\newtheorem{definition}{Definition}
\newtheorem{remark}{Remark}
\newtheorem{assumption}{Assumption}

\newcommand{\E}{\mathbb{E}}

\newcommand{\Cov}{\mathrm{Cov}}
\newcommand{\rank}{\mathrm{rank}}
\newcommand{\F}{\mathcal{F}}

\newcommand{\KL}{D}

\title{FedAttr: Towards Privacy-preserving Client-Level Attribution in Federated LLM Fine-tuning}

\author{%
  Su Zhang \\
  Department of Computer Science\\
  University of Maryland, College Park\\
  \texttt{suzhang1@umd.edu} \\
  \And
  Junfeng Guo \\
  Department of Computer Science\\
  University of Maryland, College Park\\
  \texttt{gjf2023@umd.edu} \\
  \And
  Heng Huang \\
  Department of Computer Science\\
  University of Maryland, College Park\\
  \texttt{heng@umd.edu} \\
}

\begin{document}

\maketitle

\begin{abstract}
Watermark radioactivity testing type of methods can detect whether a model was trained on watermarked documents, and have become key tools for protecting data ownership in the fine-tuning of large language models (LLMs). Existing works have proved their effectiveness in centralized LLM fine-tuning. However, this type of method faces several challenges and remains underexplored in federated learning (FL), a widely-applied paradigm for fine-tuning LLMs collaboratively on private data across different users. FL mainly ensures privacy through secure aggregation (SA), which allows the server to aggregate updates while keeping clients' updates private. This mechanism preserves privacy but makes it difficult to identify
which client trained on watermarked documents. In this work, we propose \textbf{FedAttr}, a new \textit{client-level attribution} protocol for FL. FedAttr identifies which clients trained on watermarked data via a paired-subset-difference mechanism, while preserving the privacy guarantees of SA and FL performance. FedAttr proceeds in three steps: (i) estimate each client's update by differencing two SA queries, (ii) score the estimate with the watermark detector via differential scoring,
and (iii) combine scores across rounds via Stouffer method. We theoretically show that FedAttr produces an unbiased estimator of each client's update with bounded mutual information leakage (\textit{i.e.,} $O(d^*/N)$ per-round update). Moreover, FedAttr empirically achieves 100\% TPR and 0\% FPR, outperforming all baselines by at least 44.4\% in TPR or 19.1\% in FPR, with only
6.3\% overhead relative to FL training time. Ablation studies confirm that FedAttr is robust to protocol parameters and configurations.
\end{abstract}


\section{Introduction}\label{Introduction}



Large language models (LLMs) are increasingly fine-tuned on documents obtained from external sources, often under license terms that restrict data use to the licensee's own training. Watermark radioactivity testing type of methods have emerged as a practical tool for detecting violations of such terms: the data provider embeds a watermark into the documents, and a watermark detection test can later determine whether a model was trained on watermarked documents~\citep{sander2024radioactive,cui2025fictitious}. This approach has been validated for centralized LLM fine-tuning.

In federated learning (FL), where multiple institutions jointly fine-tune a model on their private data without sharing it~\citep {ye2024openfedllm,fan2023fatellm}, however, the watermark radioactivity test faces two challenges and remains underexplored. A global model radioactivity test can still detect that the trained model was influenced by watermarked documents. However, it cannot identify which client used them. This distinction matters because license terms are often granted to individual institutions: the data provider needs to know exactly which institutions violated the terms. Identifying the responsible clients is challenging for two reasons. First, FL systems widely adopt the secure aggregation (SA) mechanism to protect client privacy, which hides each client's individual update from the server. This mechanism preserves privacy but makes it difficult to identify 
which client trained on watermarked documents. Second, even if individual updates were available, the global model already carries watermark signals from previous rounds, so even a benign client's update appears watermarked when tested, giving over 57\% FPR in our experiments. Existing FL forensic methods~\citep{zhang2022fldetector,jia2024tracing} do not resolve these challenges: they are designed to detect \emph{adversarial} 
poisoning attacks, in which malicious clients 
send manipulated updates. In our setting, all clients faithfully follow the 
protocol, producing none of the adversarial 
signals these methods rely on. Moreover, both 
methods require plaintext access to updates, violating SA.

In this work, we propose \textbf{FedAttr}, a novel \textit{client-level attribution protocol} for federated LLM fine-tuning. FedAttr preserves the standard FL training and SA protocol. It identifies which clients were trained on watermarked data via the paired-subset-difference mechanism, while preserving the privacy guarantees of SA and FL performance. 
FedAttr proceeds in three steps. First, to overcome SA's restriction on 
observing individual updates, FedAttr estimates each client's update by differencing two authorized SA subset queries, one that includes the target client and one that excludes it, yielding 
an unbiased estimator with bounded variance 
(proved by Theorems~\ref{thm:unbiased}--\ref{thm:variance}). Second, since the global model accumulates 
watermark bias across rounds, 
FedAttr reduces this bias by scoring each estimate \emph{relative} 
to the current global model. Third, since the per-round watermark signal is weak, FedAttr combines per-round differential scores 
across $T$ communication rounds via Stouffer's 
method to identify which client is watermarked. The score gaps between watermarked and benign clients grow with 
$\sqrt{T}$, driving error rates to zero 
exponentially (proved by Theorem~\ref{thm:stouffer}).

We summarize our contributions as follows.

\noindent\textbf{(1) Problem and protocol.} We formalize 
\textit{client-level attribution} problem and propose \textbf{FedAttr}, which combines a client-level update estimator, differential scoring, and cross-round Stouffer combination to decide which client uses the watermarked documents through the SA mechanism.

\noindent\textbf{(2) Theoretical guarantees.} We prove the client-level update estimator is unbiased with bounded variance 
(Theorems~\ref{thm:unbiased},~\ref{thm:variance}), and derive two-sided 
exponential error bounds for cross-round Stouffer combination that drive 
false negatives to zero in $T$ 
(Theorem~\ref{thm:stouffer}).

\noindent\textbf{(3) Privacy analysis.} We bound the per-round mutual 
information leakage of the estimator about each client's update by 
$O(d^*/N)$, where $d^*$ is the effective subspace dimension and $N$ is 
the subset size of SA queries (Theorem~\ref{thm:diffuse-query}).

\noindent\textbf{(4) Empirical validation.} In federated LoRA fine-tuning experiment, FedAttr achieves 100\% TPR at 0\% FPR within $5$ rounds across two watermark families and two aggregation 
strategies, outperforming all baselines by at least 44.4\% in TPR or 19.1\% in FPR, with only 
6.3\% overhead relative to FL training time. Ablation studies confirm that FedAttr is robust to parameters and configurations.

\begin{figure*}[t]
\centering
\includegraphics[width=\textwidth, trim=0 10 0 10, clip]{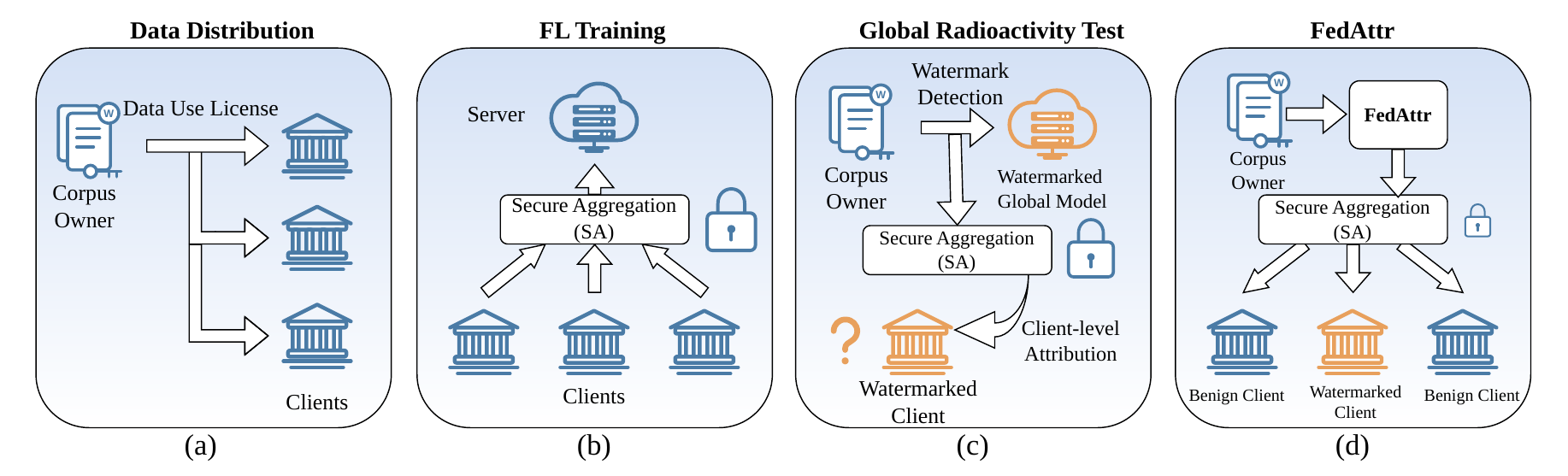}
\vspace{-14pt}
\caption{%
Overview of Data Attribution in Federated Learning.
\textbf{(a)}~The corpus owner distributes 
watermarked documents to clients under a data-use license.
\textbf{(b)}~Clients collaboratively fine-tune a 
shared model via federated learning through secure 
aggregation (SA).
\textbf{(c)}~A global radioactivity test can 
detect the watermark signal of the trained model, but cannot identify which clients are responsible without violating SA.
\textbf{(d)}~FedAttr identifies which clients use the watermark documents through SA.
}
\label{fig:overview}
\vspace{-7.62mm}
\end{figure*}
\section{Related Work}\label{sec:related}

\noindent\textbf{Federated LLM fine-tuning.}
Federated learning~\citep{mcmahan2017communication} 
has been extended to LLM fine-tuning via 
parameter-efficient adapters such as 
LoRA~\citep{hu2022lora}, with aggregation 
strategies including 
FedIT~\citep{zhang2024fedit} and 
FLoRA~\citep{wang2024flora}. We evaluate both 
in our experiments. Detailed descriptions of two aggregation strategies are in Appendix~\ref{app:exper}.

\noindent\textbf{Training-data watermarking.}
Embedding detectable signals into training data 
so that downstream models inherit measurable 
traces, known as 
\emph{radioactivity}~\citep{sablayrolles2020radioactive}, was 
extended to LLMs in two forms: 
\citet{sander2024radioactive} generate documents 
with a watermarked LLM whose green-token bias 
transfers to models fine-tuned on them, 
and~\citet{cui2025fictitious} inject fabricated 
entity-attribute pairs that the fine-tuned model 
memorizes and can be detected via QA probes. These methods assume a single 
training party; in FL, the global model aggregates 
all clients' updates, so a detection on the global model no longer identifies which client used the 
watermarked data. FedAttr reuses these detectors 
as black-box scoring functions and resolves 
this attribution problem.

\noindent\textbf{Federated forensics.}
FLDetector~\citep{zhang2022fldetector} and 
FLForensics~\citep{jia2024tracing} trace 
malicious clients in poisoning attacks by 
detecting update inconsistency and 
misclassification influence, respectively. 
Both require plaintext access to the individual 
updates and rely on adversarial signals absent 
in our non-adversarial setting, where clients 
faithfully follow the FL protocol.

\noindent\textbf{Secure aggregation and privacy.}
SA protocols~\citep{bonawitz2017practical} 
enable the server to compute subset sums of 
client updates without observing the individual 
updates.~\citet{elkordy2023howmuch} 
provide the first mutual information bound on 
per-round leakage under standard SA protocol. FedAttr's privacy analysis 
extends this framework to the multi-query 
setting required by client-level attribution.
\section{Problem setup}
\label{sec:setup}
 
\subsection{Problem Formulation}\label{sec:fl-setup}

\textbf{Federated fine-tuning system.} FL enables $K$ clients $\{1, \ldots, K\}$ to collaboratively train a shared global LLM over $T$ communication rounds under the coordination of a central server. At each round $t$, the server distributes the current global model parameters $w^{t-1} \in \mathbb{R}^d$ to all clients; client $i$ locally fine-tune $w^{t-1}$ on its private dataset $\mathcal{D}_i$ and returns the resulting parameter update $\Delta_i^t \in \mathbb{R}^d$. Then the server aggregates updates into the new global model according to a federated aggregation rule~\citep{mcmahan2017communication}:
\vspace{-2mm}
\begin{equation}
w^t = w^{t-1} + \sum
_{i=1}^K p_i \Delta_i^t,
\label{eq:agg-rule}
\end{equation}where aggregation weights $p_i \geq 0$ satisfy $\sum_{i=1}^K p_i = 1$, typically $p_i = |\mathcal{D}_i| / \sum_j |\mathcal{D}_j|$. We assume all $K$ clients participate in every communication round.

\noindent\textbf{Secure aggregation.} Secure aggregation (SA) aims to preserve client updates' privacy in FL systems~\citep{bonawitz2017practical}. SA allows the server to compute the sum of client updates $S^t(W)$ over any subset of clients $W \subseteq [K]$ with $|W| \geq N_{\mathrm{sa}}$, while keeping each individual update hidden:
\vspace{-1mm}
\begin{equation}
S^t(W) = \sum_{j \in W} \Delta_j^t.
\vspace{-2mm}
\label{eq:sa-interface}
\end{equation}
The threshold $N_{\mathrm{sa}}$ prevents individual updates from being exposed.\footnote{SA can be implemented via multi-party 
computation~\citep{bonawitz2017practical} or homomorphic 
encryption~\citep{zhang2020batchcrypt}; FedAttr is agnostic to the 
choice of instantiation.}

\textbf{Client-level attribution problem.} An unknown set of clients trained on watermarked documents in violation of license terms, i.e., use the watermarked documents. Given access to SA aggregations $\{S^t(W)\}$ over admissible subsets $W$ and rounds 
$t \in \{1, \ldots, T\}$, the \textit{client-level attribution} problem is to output a per-client binary decision 
$r_i \in \{0, 1\}$ for each client $i \in [K]$, where $r_i = 1$ iff client $i$'s dataset $\mathcal{D}_i$ contains watermarked documents. 

\subsection{Watermark Families}
\label{sec:watermark}

FedAttr requires only a scoring function 
$\textsc{Score}(w;\,\mathcal{P})$ that returns a larger value when $w$ has been trained on watermarked data, where $w$ is the model under test and $\mathcal{P}$ is a set of evaluation prompts.

\textbf{KGW watermark}~\citep{kirchenbauer2023watermark}. Before 
distributing the documents, the data provider rephrases them with a 
watermarked LLM that partitions the vocabulary into green and red lists 
via a pseudorandom function and boosts green-token logits by $\delta$ 
during decoding. A model fine-tuned on these documents inherits the 
green-token bias~\citep{sander2024radioactive}. To detect this bias, a 
z-test compares the observed green-token ratio against the expected 
null rate $\gamma = |G|/|\mathcal{V}|$.

\textbf{Fictitious knowledge watermark}~\citep{cui2025fictitious}. The 
data provider injects fabricated entity-attribute tuples (e.g., 
``\textit{Arlo Vance was born in 1987}'') into the documents. A model 
fine-tuned on these documents memorizes the fictitious attributes. To detect memorization, each attribute is queried via QA, and per-attribute results are aggregated via Fisher's method. 

\subsection{Threat Model and Considered Scenarios}\label{sec:threat}


FedAttr targets a non-adversarial license-violation setting in which all parties are honest-but-curious: they execute the protocol faithfully but may attempt to infer private information from observation. An 
unknown subset of clients trains on watermarked documents in violation of license terms.

We consider a FL system with three parties: \emph{clients} $\{1,\ldots,K\}$ with private datasets, a \emph{server} that coordinates training via the SA interface, and a \emph{corpus owner} that holds the watermark detection key. During training, the server coordinates FL training with clients and observes only authorized subset sums $S^t(W)$ through SA. The corpus owner is not involved. After training, the 
server sends FedAttr estimates to the corpus owner, who applies the detection key and identifies which clients use the watermarked data. Neither party sees the other's private inputs: the server never learns the detection key, and the corpus 
owner never observes individual updates.

\section{FedAttr Protocol}\label{sec:protocol}

FedAttr preserves the standard FL training and SA protocol, and enables the corpus owner to identify which clients use the watermarked documents. Specifically, FedAttr estimates the update via a paired-subset-difference mechanism motivated and supported by Theorems~\ref{thm:unbiased}-~\ref{thm:variance}. FedAttr contains three steps: (i) the server estimates each client's update 
from paired SA queries and sends the estimate to the corpus owner, (ii) the corpus owner scores each estimate by the 
watermark detector, and (iii) the corpus owner combines per-round scores 
across rounds to identify the client via Stouffer method. For each stage, we perform theoretical analyses to illustrate that FedAttr can preserve the utility of client update estimates, demonstrating the effectiveness of our protocol. \textit{Algorithm~\ref{alg:fedattr}} summarizes it.

\subsection{Client-level Update Estimator}
\label{sec:surrogate}

Our goal here is to construct an unbiased estimator of any single client's update through the SA interface. The challenge is that SA hides clients' individual updates. The first step is based on a key observation: Two subset SA queries differing only in whether they include a target client $i$ must differ only by client $i$'s update in expectation. We call it the paired-subset-difference mechanism. As such, we can construct an unbiased estimator of any single client's update via paired subset SA queries. 


\noindent\textbf{Constructing the unbiased update estimator via paired subset SA queries:} Given a target client $i \in [K]$ and $N \in [K-1]$, the number of non-target clients per query.\footnote{Both subset sizes $N$ and $N+1$ must satisfy the SA protocol's authorization threshold $N_{\mathrm{sa}}$~\citep{bonawitz2017practical}; we assume this throughout. In practice $N_{\mathrm{sa}}$ is small relative to $K$.} We define two sampling families over the non-target clients $[K]\setminus\{i\}$, distinguished by whether they include the target client~$i$:
\begin{equation}\label{eq:UV}
\mathcal{U}_i^N \;=\; \bigl\{\, U \subseteq [K] : i \in U,\; |U| = N{+}1 \,\bigr\},
\qquad
\mathcal{V}_i^N \;=\; \bigl\{\, V \subseteq [K] : i \notin V,\; |V| = N \,\bigr\}.
\end{equation}
The server draws $M$ include-target subsets $U_1^t, \dots, U_{M}^t$ i.i.d.\ uniformly from $\mathcal{U}_i^N$ and $M$ exclude-target subsets $V_1^t, \dots, V_{M}^t$ i.i.d.\ uniformly from $\mathcal{V}_i^N$, then forms the round-$t$ \emph{update estimator}
\begin{equation}\label{eq:estimator}
\widehat{\Delta}_i^{\,t}
\;:=\; \frac{1}{M}\sum_{m=1}^{M} S_t(U_m^t)
\;-\; \frac{1}{M}\sum_{m=1}^{M} S_t(V_m^t).
\end{equation}
For convenience, we denote the paired
queries at round $t$ to target client $i$ by $\mathcal{Q}_i^t:=
(U_{1}^t,\ldots,U_{M}^t;
  V_{1}^t,\ldots,V_{M}^t)$, and all queries in round $t$ by $\mathcal{Q}^t=\{\mathcal Q^t_i\}_{i=1}^K$. For each non-target client 
$j \neq i$, the \emph{masking coefficient} is
$
\alpha_j^t := \frac{1}{M}\sum_{m=1}^{M} 
\mathbf{1}\{j \in U_m^t\} - \frac{1}{M}
\sum_{m=1}^{M} \mathbf{1}\{j \in V_m^t\}.
$ Note that the queries process is independent with the global model and client updates.

\noindent\textbf{Rejecting estimator when privacy condition fails:} To ensure that the non-target updates provide 
sufficient masking noise for the privacy analysis 
(Section~\ref{sec:privacy}), for instance, to 
exclude the degenerate case where the include and 
exclude subsets draw identical non-target clients, 
leaving $\widehat{\Delta}_i^{\,t} = \Delta_i^t$, the server checks the following privacy condition before querying the SA interface:
\begin{equation}\label{eq:diffuse-check}
c^t_i := \sum_{j \neq i}(\alpha_j^t)^2 \geq aN
,\qquad
M_{\mathrm{eff},i}^t := 
\frac{(c^t_i)^2}{\sum_{j \neq i}(\alpha_j^t)^4} 
\geq aN, \qquad\text{and} \qquad N < K-1,
\end{equation}
where $a := (1-\rho)/M$ and 
$\rho := N/(K-1)<1$. If the condition fails, the server 
resamples the subsets. This rejection policy depends only on the subset choice and never on client updates and the global model. We define the acceptance event at round $t$: $\mathcal A^t_i
:=
\left\{
c^t_i \ge aN,\;
M_{\mathrm{eff},i}^t \ge aN,\;N < (K-1)
\right\}$.

By the symmetry of the accepted sampling 
distribution, each 
non-target client's updates cancel in 
expectation. We prove that FedAttr constructs an unbiased estimator (Theorem~\ref{thm:unbiased}) with variance 
controlled by the non-target updates 
(Theorem~\ref{thm:variance}).

\begin{theorem}[Unbiasedness under rejection sampling]
\label{thm:unbiased}
Given any round $t$ and target client $i \in [K]$. 
Under the subset sampling with rejection described 
above, for any deterministic updates 
$\Delta_1^t, \ldots, \Delta_K^t$,
\[
\mathbb{E}\bigl[\widehat{\Delta}_i^{\,t} 
\,\big|\, \Delta_1^{\,t},\dots,\Delta_K^{\,t},\mathcal{A}^t_i
\bigr] = \Delta_i^{\,t}.
\]
\end{theorem}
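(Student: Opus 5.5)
The plan is to write the estimator in terms of the masking coefficients and then show that each coefficient has conditional mean zero by a symmetry argument. Since every $U_m^t$ contains $i$ and no $V_m^t$ does, the estimator in \eqref{eq:estimator} decomposes as
\[
\widehat{\Delta}_i^{\,t} \;=\; \Delta_i^t + \sum_{j\neq i} \alpha_j^t\,\Delta_j^t .
\]
The updates are deterministic and the queries are drawn independently of them (as noted after the definition of $\alpha_j^t$). Conditioning on $\mathcal{A}_i^t$ then gives
\[
\mathbb{E}\bigl[\widehat{\Delta}_i^{\,t}\mid \Delta^t,\mathcal{A}_i^t\bigr] = \Delta_i^t + \sum_{j\ne i}\mathbb{E}[\alpha_j^t\mid\mathcal{A}_i^t]\,\Delta_j^t .
\]
So it suffices to show $\mathbb{E}[\alpha_j^t\mid\mathcal{A}_i^t]=0$ for every $j\neq i$. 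I would also note that $\mathcal{A}_i^t$ must have positive probability for the conditioning to make sense. This holds whenever $N<K-1$; for instance, disjoint non-target parts are possible when $2N\le K-1$. If needed, I would state this as an implicit assumption of the protocol.

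The key step is a symmetry argument. For each $m$, write $U_m^t=\{i\}\cup U_m'$, where $U_m'$ is a uniform $N$-subset of $[K]\setminus\{i\}$. Then the $2M$ sets $(U_1',\dots,U_M',V_1,\dots,V_M)$ are i.i.d.\ uniform $N$-subsets of $[K]\setminus\{i\}$, so their joint law is invariant under swapping the $U'$-block with the $V$-block. This swap, denoted $\tau$, sends every $\alpha_j^t$ to $-\alpha_j^t$. The quantities $c_i^t$ and $M_{\mathrm{eff},i}^t$ depend only on $(\alpha_j^t)^2$ and $(\alpha_j^t)^4$, and the condition $N<K-1$ involves no randomness, so the event $\mathcal{A}_i^t$ is $\tau$-invariant. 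Hence the conditional law of $(\alpha_j^t)_{j\ne i}$ given $\mathcal{A}_i^t$ is symmetric under negation. Every $\alpha_j^t$ lies in $[-1,1]$, so it is integrable, and therefore $\mathbb{E}[\alpha_j^t\mid\mathcal{A}_i^t]=0$.

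The main obstacle is making the invariance of the acceptance event airtight. The symmetry argument needs only a measure-preserving involution that fixes $\mathcal{A}_i^t$ and negates every $\alpha_j^t$. Block swapping provides this because the acceptance test uses only even powers of the coefficients. Relying on the alternative symmetry (permuting the non-target clients) would show only that all $\mathbb{E}[\alpha_j^t\mid\mathcal{A}_i^t]$ are equal, not that they vanish, so the sign flip is essential. Finally, because the server resamples until acceptance, the output distribution equals the i.i.d.\ draw conditioned on $\mathcal{A}_i^t$. This is the standard property of rejection sampling, and it justifies identifying the protocol's expectation with the conditional expectation above.
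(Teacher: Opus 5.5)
Your proposal is correct and matches the paper's proof essentially step for step. Both arguments use the same three ingredients: the decomposition $\widehat{\Delta}_i^{\,t}=\Delta_i^t+\sum_{j\ne i}\alpha_j^t\Delta_j^t$; the involution swapping the include-side non-target parts $U_m^t\setminus\{i\}$ with the exclude-side sets $V_m^t$, which negates every $\alpha_j^t$ while preserving the proposal law and $\mathcal{A}_i^t$, since acceptance depends only on even powers of the $\alpha_j^t$; and the independence of the query design from the updates.

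Your remarks that $\Pr(\mathcal{A}_i^t)>0$ and that rejection sampling produces exactly the conditional law are correct points the paper leaves implicit. For positivity, your disjoint example only covers $2N\le K-1$. In general, take all include-side parts equal to one $N$-set and all exclude-side parts equal to another with minimal overlap. This gives $c_i^t=M_{\mathrm{eff},i}^t=2\min(N,K-1-N)>aN$ for every $1\le N<K-1$.
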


\begin{theorem}[Conditional variance under 
rejection sampling]\label{thm:variance}
Under the same setting as 
Theorem~\ref{thm:unbiased}, the conditional 
covariance satisfies
\begin{equation}\label{eq:var}
\mathrm{Cov}\bigl(\widehat{\Delta}_i^{\,t} 
\,\big|\, \Delta_1^{\,t},\dots,\Delta_K^{\,t}
,\mathcal{A}^t_i\bigr)
\preceq
\frac{1}{p_{a,i}^t}
\cdot \frac{2}{M}
\cdot \frac{N(K-1-N)}{K-2} 
\cdot \Sigma_{-i}^{\,t},
\end{equation}
where $p_{a,i}^t := \Pr_{\mathcal Q^t}(\mathcal A^t_i)$ and 
$\Sigma_{-i}^t := \tfrac{1}{K-1}\sum_{j \ne i}
(\Delta_j^t - \bar\Delta_{-i}^t)
(\Delta_j^t - \bar\Delta_{-i}^t)^\top$.
\end{theorem}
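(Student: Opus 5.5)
Since Theorem~\ref{thm:unbiased} gives conditional mean $\Delta_i^t$, I would start from the exact decomposition $\widehat{\Delta}_i^{\,t}-\Delta_i^t=\sum_{j\neq i}\alpha_j^t\Delta_j^t$. This holds because every $U_m^t$ contains $i$ and no $V_m^t$ does. Next I note that $\sum_{j\ne i}\alpha_j^t = \frac1M\sum_m(|U_m^t|-1)-\frac1M\sum_m|V_m^t| = N-N=0$ deterministically. Hence I may center: $\widehat{\Delta}_i^{\,t}-\Delta_i^t=\sum_{j\neq i}\alpha_j^t(\Delta_j^t-\bar\Delta_{-i}^t)=:\sum_j\alpha_j^t x_j$. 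The updates are fixed, so the conditional covariance is at most the conditional second moment $\mathbb{E}[(\sum_j\alpha_jx_j)(\sum_j\alpha_jx_j)^\top\mid\mathcal A_i^t]$. For any event $\mathcal A$ and nonnegative random matrix-quadratic form $Z$ we have $\mathbb{E}[Z\mid\mathcal A]\preceq \mathbb{E}[Z]/\Pr(\mathcal A)$, since $Z\succeq0$ pointwise. This reduces the claim to the unconditional bound
\[
\mathbb{E}\Bigl[\Bigl(\textstyle\sum_j\alpha_jx_j\Bigr)\Bigl(\textstyle\sum_j\alpha_jx_j\Bigr)^\top\Bigr]\preceq \frac{2}{M}\frac{N(K-1-N)}{K-2}\,\Sigma_{-i}^t ,
\]
and dividing by $p_{a,i}^t$ then gives the result.

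For the unconditional computation I split $\alpha_j=A_j-B_j$, where $A_j=\frac1M\sum_m\mathbf 1\{j\in U_m\}$ and $B_j$ is the analogous average over the $V_m$. The non-target parts $U_m\setminus\{i\}$ and the $V_m$ are all i.i.d.\ uniform $N$-subsets of the $n:=K-1$ non-target clients. The $A$ and $B$ families are independent with equal means, so the cross terms vanish after centering, and the covariance of $\sum_j\alpha_jx_j$ is twice that of $\sum_jA_jx_j$. The latter is an average of $M$ i.i.d.\ copies of $\sum_j\mathbf 1\{j\in U\}x_j$, so it equals $\frac1M$ times the covariance of the sum of a uniform random $N$-subset (without replacement) of the centered vectors $x_j$. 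Because $\sum_jx_j=0$, the mean of that sum is zero.

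The remaining step is the finite-population sampling computation, which I regard as the main bookkeeping obstacle. It uses $\Pr(j\in U)=N/n$ and $\Pr(j,k\in U)=\frac{N(N-1)}{n(n-1)}$, together with $\sum_{k\neq j}x_k=-x_j$. Then
$\mathbb{E}[(\sum_{j\in U}x_j)(\sum_{j\in U}x_j)^\top]=\bigl(\frac Nn-\frac{N(N-1)}{n(n-1)}\bigr)\sum_jx_jx_j^\top=\frac{N(n-N)}{n(n-1)}\, n\,\Sigma_{-i}^t=\frac{N(K-1-N)}{K-2}\Sigma_{-i}^t$.
Multiplying by $2/M$ gives the desired bound, in fact with equality before conditioning. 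The only delicate point is justifying the inequality $\mathbb{E}[Z\mid\mathcal A]\preceq\mathbb{E}[Z]/\Pr(\mathcal A)$ while using the second moment rather than the centered covariance. This works because Theorem~\ref{thm:unbiased} makes the conditional mean of the error zero, so the two coincide.
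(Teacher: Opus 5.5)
Your proposal is correct and follows essentially the same route as the paper: write the error as $\widehat{\Delta}_i^{\,t}-\Delta_i^t=\sum_{j\neq i}\alpha_j^t\Delta_j^t$, bound its accepted second moment by $1/p_{a,i}^t$ times the unconditional one using positive semidefiniteness, and evaluate the unconditional one with the finite-population correction for the $2M$ independent subset sums. The differences are cosmetic: you center explicitly via $\sum_{j\neq i}\alpha_j^t=0$ and derive the finite-population factor from inclusion probabilities rather than citing it, and the appeal to Theorem~\ref{thm:unbiased} is not actually needed, since a covariance is always dominated by the corresponding second moment.
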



\noindent Complete proofs are in Appendix~\ref{app:unbiased}-\ref{app:variance}. The rejection check introduces negligible overhead: the threshold $aN$ equals half the expected masking strength and the rejection probability decays as $e^{-\Omega(N)}$
nearly identical to unrestricted sampling, so the expected number of redraws $1/p_{a,i}^t\to 1$ exponentially fast. Theorem~\ref{thm:variance} shows that increasing the query count $M$ reduces estimator 
noise at the cost of additional SA queries. Detailed analysis of the acceptance rate is in the Appendix~\ref{app:rejection}.

\subsection{Differential Scoring}
\label{sec:diff-scoring}

In this stage, we aim to score the client's estimate with the watermark detector. The challenge is that the global model $w^{t-1}$ already carries watermark 
signals absorbed from previous rounds, causing 
the detector to assign high scores to all 
clients, including benign ones. Applying the detector directly to the updates causes 57\% FPR (demonstrated in Table~\ref{tab:main}) even with access to plaintext updates.

FedAttr addresses this via \textit{differential scoring}: it evaluates the detector at both the global model $w^{t-1}$ and the estimate model $w^{t-1} + \widehat{\Delta}_i^t$, and calculate the difference. For each round $t$ and target client $i$:
\begin{equation}
z_i^{(t)} := \textsc{Score}(w^{t-1} + \widehat{\Delta}_i^t;\; \mathcal{P}_t) - \textsc{Score}(w^{t-1};\; \mathcal{P}_t),
\label{eq:diff-score}
\end{equation}
where $\mathcal{P}_t$ is the evaluation prompt set containing watermark pattern at round $t$. Motivated by analysis in Appendix~\ref{app:differential-scoring}, differential scoring can effectively reduce the watermark bias in the global model, and $z_i^{(t)}$ measures only the contribution of client $i$'s estimated update. As demonstrated by Theorems~\ref{thm:unbiased}--\ref{thm:variance}, $\widehat{\Delta}_i^t$ is unbiased for $\Delta_i^t$, so other clients contribute only to sampling variance. 

\subsection{Cross-round Stouffer Combination}
\label{sec:stouffer}

Our goal here is to combine per-round scores across rounds to identify the watermarked client. A single round yields a weak signal because the watermark signal is not completely learned after one round of local fine-tuning; the full watermark signal emerges gradually as training $T$ increases. Moreover, the estimator would introduce sampling noise in the first stage (Section~\ref{sec:surrogate}). FedAttr accumulates this growing signal via cross-round Stouffer's 
combination~\citep{stouffer1949american}:
\begin{equation}
Z_i = \frac{1}{\sqrt{T}} \sum_{t=1}^{T} z_i^{(t)}.
\label{eq:stouffer}
\end{equation}
The corpus owner flags client $i$ as watermarked if $Z_i > \gamma$ for a fixed threshold $\gamma > 0$. 

To state the formal guarantee, we introduce a 
separation condition on the per-round scores. 
Let $\mathcal{F}_{t-1} = \sigma(w^{t-1}, \mathcal Q^1, 
\ldots, \mathcal Q^{t-1})$ denote the global model and all queries history up to round $t{-}1$.

\begin{assumption}[Watermark signal separation condition]\label{asm:separation}
There exist constants $m > 0$, $\epsilon \in [0, m)$, and $\nu > 0$ such that for each client $i \in [K]$ and round $t \in [T]$: (i)~the conditional mean $\mu_i^{(t)} := \mathbb{E}[z_i^{(t)} \mid \mathcal{F}_{t-1}]$ satisfies $\mu_i^{(t)} \geq m$ if client $i$ is watermarked and $|\mu_i^{(t)}| \leq \epsilon$ if client $i$ is benign; (ii)~the centered increment $z_i^{(t)} - \mu_i^{(t)}$ is conditionally $\nu^2$-sub-Gaussian given $\mathcal{F}_{t-1}$.
\end{assumption}

We verify this assumption empirically in Figure~\ref{fig:mechanism}(b,c). Then we introduce the following theorem:
\begin{theorem}[Stouffer error]\label{thm:stouffer}
Under Assumption~\ref{asm:separation}, for any threshold satisfying $\sqrt{T}\,\epsilon < \gamma < \sqrt{T}\,m$,
\begin{equation}\label{eq:stouffer-bounds}
\Pr\bigl(\text{error for client } i\bigr) \;\leq\;
\begin{cases}
\exp\!(-{(\gamma - \sqrt{T}\,\epsilon)^2}/{2\nu^2}) & \text{if } i \text{ is benign},\\
\exp\!(-{(\sqrt{T}\,m - \gamma)^2}/{2\nu^2}) & \text{if } i \text{ is watermarked}.
\end{cases}
\end{equation}
\end{theorem}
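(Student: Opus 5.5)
The plan is to treat the centered per-round scores as a martingale difference sequence and apply a Chernoff (Azuma--Hoeffding type) bound to the normalized sum $Z_i$. Fix a client $i$ and define $D_t := z_i^{(t)} - \mu_i^{(t)}$, so that
\[
Z_i = \frac{1}{\sqrt{T}}\sum_{t=1}^T D_t + \frac{1}{\sqrt{T}}\sum_{t=1}^T \mu_i^{(t)} .
\]
By Assumption~\ref{asm:separation}(i), each $D_t$ has conditional mean zero given $\mathcal{F}_{t-1}$. By Assumption~\ref{asm:separation}(ii), $\E[e^{\lambda D_t}\mid \mathcal{F}_{t-1}] \le e^{\lambda^2\nu^2/2}$ for every $\lambda\in\mathbb{R}$.

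\textbf{Step 1: MGF of the partial sums.} I will show
\[
\E\bigl[e^{\lambda \sum_{t=1}^T D_t}\bigr] \le e^{\lambda^2 T\nu^2/2}
\]
by peeling off one round at a time. Condition on $\mathcal{F}_{T-1}$, pull out $e^{\lambda\sum_{t<T}D_t}$, bound the last conditional MGF by $e^{\lambda^2\nu^2/2}$, and induct.

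This is where I expect the only real obstacle. Pulling out $e^{\lambda\sum_{t<T}D_t}$ requires $D_1,\dots,D_{T-1}$ to be $\mathcal{F}_{T-1}$-measurable. As defined, $\mathcal{F}_{t-1}=\sigma(w^{t-1},\mathcal{Q}^1,\ldots,\mathcal{Q}^{t-1})$, whereas $z_i^{(t)}$ also depends on $\mathcal{Q}^t$, the updates $\Delta^t$ and the prompts $\mathcal{P}_t$. I would handle this in one of two ways. The first is to argue that $w^t$ together with $\mathcal{Q}^t$ determines the realized round-$t$ quantities relevant to scoring. The cleaner alternative is to enlarge the filtration to $\mathcal{G}_t := \mathcal{F}_t\vee\sigma(z_i^{(1)},\ldots,z_i^{(t)})$ and read Assumption~\ref{asm:separation} as holding with respect to $\mathcal{G}_{t-1}$. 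Under that reading, $(D_t)$ is adapted and the tower argument goes through verbatim. I will state this adaptedness explicitly as the interpretation of the assumption.

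\textbf{Step 2: benign case.} If client $i$ is benign, then $\mu_i^{(t)}\le\epsilon$ for all $t$, so $Z_i\le \frac{1}{\sqrt T}\sum_t D_t + \sqrt T\,\epsilon$. The error event $\{Z_i>\gamma\}$ is therefore contained in
\[
\Bigl\{\sum_{t=1}^T D_t > \sqrt T\, s\Bigr\}, \qquad s:=\gamma-\sqrt T\epsilon>0 .
\]
Markov's inequality applied to $e^{\lambda\sum_t D_t}$, together with Step 1, gives the bound $\exp(-\lambda\sqrt T s + \lambda^2 T\nu^2/2)$. Choosing $\lambda = s/(\sqrt T\nu^2)$ yields $\exp(-s^2/(2\nu^2))$, which is the first line of \eqref{eq:stouffer-bounds}.

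\textbf{Step 3: watermarked case.} If client $i$ is watermarked, then $\mu_i^{(t)}\ge m$ for all $t$, so $Z_i\ge \frac{1}{\sqrt T}\sum_t D_t+\sqrt T m$. The error event $\{Z_i\le\gamma\}$ is then contained in
\[
\Bigl\{\sum_{t=1}^T D_t\le -\sqrt T\, s'\Bigr\}, \qquad s':=\sqrt T m-\gamma>0 .
\]
Applying Step 1 with $-\lambda$ gives the symmetric lower-tail bound $\exp(-s'^2/(2\nu^2))$.

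The condition $\sqrt T\epsilon<\gamma<\sqrt T m$ is used only to make $s$ and $s'$ positive, so that the optimizing $\lambda$ is admissible.
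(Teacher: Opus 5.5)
Your proposal is correct and matches the paper's proof: the paper likewise writes $z_i^{(t)}=\mu_i^{(t)}+\eta_i^{(t)}$, uses the conditional tower property to get $\E[\exp(\lambda\sum_t\eta_i^{(t)})]\le\exp(T\lambda^2\nu^2/2)$, and then applies the sub-Gaussian tail to $T^{-1/2}\sum_t\eta_i^{(t)}$ after bounding $\mu_i^{(t)}$ by $\epsilon$ or $m$. The adaptedness issue you raise is genuine and the paper passes over it silently. With $\mathcal{F}_t=\sigma(w^t,\mathcal{Q}^1,\ldots,\mathcal{Q}^t)$, the score $z_i^{(t)}$ is not measurable, because $\widehat{\Delta}_i^t$ depends on subset sums that $w^t$ does not determine, so your first option fails; reading the assumption relative to a nested history filtration that contains the past scores, as in your second option, is exactly what the paper's argument implicitly needs.
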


The proof is in Appendix~\ref{app:stouffer}. A fixed $\gamma$ controls both errors: the false-positive bound depends on $\gamma - \sqrt{T}\,\epsilon$, while the false-negative rate decays exponentially once $\sqrt{T}\,m > \gamma$. 

\section{Privacy Analysis}
\label{sec:privacy}


We analyze the information leakage of FedAttr's estimation to the corpus owner with respect to clients' updates. FedAttr operates entirely through SA, preserving the SA's privacy guarantee for each model update. However, a residual information-leakage threat remains~\citep{elkordy2023howmuch}: the server obtains a noisy estimation $\widehat{\Delta}_i^{\,t}$ of client~$i$'s actual update $\Delta_i^t$. We quantify this residual leakage using mutual information (MI), following the framework of \citet{elkordy2023howmuch}. Our analysis extends theirs from the standard SA setting to the \emph{subset-query} setting where FedAttr performs.



\noindent\textbf{Leakage metric.}
Given a target client~$i$ within round~$t$, the per-round leakage is
\begin{equation}\label{eq:leakage-metric}
I_{\mathrm{priv}}^{(t)}
:= I\!\left(\Delta_i^t;\;\widehat{\Delta}_i^{\,t}
   \;\middle|\; \mathcal{Q}^t_i,\, \mathcal{F}_{t-1}\right).
\end{equation}
This quantity measures how much information the estimator $\widehat{\Delta}_i^{\,t}$ reveals about client~$i$'s actual update, beyond what is already known from $\mathcal{F}_{t-1} = \sigma(w^{t-1}, \mathcal Q^1, 
\ldots, \mathcal Q^{t-1})$ and the round $t$ query $\mathcal{Q}^t_i$. 


Inspired by ~\citet{elkordy2023howmuch}, we propose two assumptions on the properties of the model to shed light on the leakage of MI during the update process.

\begin{assumption}[Independent under whitening]
\label{asm:bcg-regularity}
Let
\[
  Z_j^t:=(K_G^t)^{-1/2}\xi_j^t
\]
be the whitened update.  Conditioned on \(\F_{t-1}\), the
coordinates of \(Z_j^t\) are independent, centered, and have unit variance.
For every coordinate \(\ell\in[d^*]\), the scalar distribution
\(Z_{j,\ell}^t\) has finite fourth moment and finite entropic distance to the
Gaussian distribution with the same mean and variance.  More explicitly, if
\(G_\ell\sim\mathcal N(0,1)\), then there exist constants \(M_{4,\ell}<\infty\)
and \(D_{0,\ell}<\infty\), independent of \(j\), such that
\[
  \E |Z_{j,\ell}^t|^4\le M_{4,\ell},
  \qquad
  \KL(Z_{j,\ell}^t\|G_\ell)
  =
  h(G_\ell)-h(Z_{j,\ell}^t)
  \le D_{0,\ell}.
\]
These are the one-dimensional regularity conditions needed to apply the
Bobkov--Chistyakov--G{\"o}tze entropic Berry--Esseen bound used in the
independent-under-whitening case of \citet{elkordy2023howmuch}.
\end{assumption}

\begin{assumption}
\label{asm:iid-fluctuations}
The local datasets $\mathcal{D}_1, \ldots, 
\mathcal{D}_K$ are sampled i.i.d.\ from a common 
distribution, i.e., the local dataset of client~$j$ 
consists of i.i.d.\ data samples from a distribution 
$\mathcal{P}_j$, where $\mathcal{P}_j = \mathcal{P}$ 
for all $j \in [K]$. This implies that given round \(t\) and condition on the $\mathcal{F}_{t-1} = \sigma(w^{t-1}, \mathcal Q^1, 
\ldots, \mathcal Q^{t-1})$, each client update can decompose as
\[
  \Delta_j^t=\mu^t+\xi_j^t,
\]
where \(\mu^t\) is deterministic conditioned on \(\F_{t-1}\), and
\[
  \E[\xi_j^t\mid \F_{t-1}]=0,
  \qquad
  \Cov(\xi_j^t\mid \F_{t-1})=K_G^t.
\]
The \(\{\xi_j^t\}_{j=1}^K\) are conditionally i.i.d. on a common
\(d^*\)-dimensional effective subspace, where
\[
  d^*:=\rank(K_G^t).
\]
All determinants and entropies below are taken on this effective subspace.
\end{assumption}

\begin{remark}
\label{rem:regularity-role}
Assumption~\ref{asm:iid-fluctuations} is the same condition 
as \citet[Assumption~1]{elkordy2023howmuch} and ensures 
the non-target updates form an i.i.d.\ additive mask 
whose entropy can be controlled via the entropic CLT. 
The independence-under-whitening condition 
(Assumption~\ref{asm:bcg-regularity}) is satisfied when the 
stochastic gradient can be approximated by a 
distribution with independent components or by a 
multivariate Gaussian~\citep[Definition~1]{elkordy2023howmuch}.
\end{remark}

Based on two assumptions, we propose our main privacy results.

\begin{theorem}[Release-level MI leakage]
\label{thm:diffuse-query}
Suppose Assumptions~\ref{asm:bcg-regularity} and
\ref{asm:iid-fluctuations} hold.  If a query $Q_i^t$ satisfies
\[
  c_i^t\ge aN,
  \qquad
  M_{\mathrm{eff},i}^t\ge aN,\qquad \text{and} \qquad N<K-1
\]
then
\[
  I(\Delta_i^t;\widehat\Delta_i^t\mid Q_i^t,\F_{t-1})
  \le
  \frac{d^*}{2}\log\left(1+\frac1{aN}\right)
  +
  \frac{C_\xi d^*}{aN}=O\left(\frac{d^*}{N}\right).
\]

\end{theorem}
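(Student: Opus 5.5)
Conditional on $\mathcal{Q}_i^t$ and $\F_{t-1}$, the masking coefficients $\alpha_j^t$ are fixed, and the estimator in \eqref{eq:estimator} takes the form
\[
\widehat\Delta_i^t=\Delta_i^t+\sum_{j\neq i}\alpha_j^t\Delta_j^t
=\Delta_i^t+\Bigl(\sum_{j\ne i}\alpha_j^t\Bigr)\mu^t+\sum_{j\neq i}\alpha_j^t\xi_j^t .
\]
The coefficient of $\Delta_i^t$ is exactly $1$, because $i$ lies in every $U_m^t$ and in no $V_m^t$. The term $(\sum_j\alpha_j^t)\mu^t$ is deterministic given the conditioning, so it can be dropped without changing the mutual information. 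By Assumption~\ref{asm:iid-fluctuations}, the mask $E:=\sum_{j\ne i}\alpha_j^t\xi_j^t$ is independent of $\Delta_i^t$. The problem therefore reduces to an additive-noise channel $X\mapsto X+E$ with $X=\Delta_i^t$, and
\[
I(X;X+E)=h(X+E)-h(E).
\]
The approach is to compare $E$ with its Gaussian surrogate, following the entropic-CLT route of \citet{elkordy2023howmuch}. We then whiten by $(K_G^t)^{-1/2}$ and work on the $d^*$-dimensional effective subspace; this transformation changes both entropies by the same $\log\det$ term, so the difference is unaffected.

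\textbf{Step 1 (upper bound on $h(X+E)$).} The covariance of $X+E$ is $K_G^t(1+c_i^t)$, where $c_i^t=\sum_{j\ne i}(\alpha_j^t)^2$. The Gaussian maximum-entropy bound gives
\[
h(X+E)\le \tfrac12\log\det\bigl(2\pi e(1+c_i^t)K_G^t\bigr).
\]
\textbf{Step 2 (lower bound on $h(E)$).} If $E$ were Gaussian with covariance $c_i^tK_G^t$, then $h(X+E)-h(E)$ would be at most $\frac{d^*}{2}\log(1+1/c_i^t)$. Using $c_i^t\ge aN$, this is at most $\frac{d^*}{2}\log(1+1/(aN))$, which is the first term of the bound.

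To handle the non-Gaussianity, we use the coordinate independence from Assumption~\ref{asm:bcg-regularity}. Each whitened coordinate $\sum_j\alpha_j^t Z_{j,\ell}^t/\sqrt{c_i^t}$ is a weighted sum of independent, unit-variance, non-identically-scaled summands. Since the coordinates are independent, $h(E)$ splits as a sum over the $d^*$ coordinates. For each coordinate we apply the Bobkov--Chistyakov--G\"otze entropic Berry--Esseen bound for weighted sums. Its relative-entropy deficit is of order
\[
\frac{\sum_j\alpha_j^4}{(\sum_j\alpha_j^2)^2}=\frac{1}{M_{\mathrm{eff},i}^t},
\]
with a constant that depends on $M_{4,\ell}$ and $D_{0,\ell}$. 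The check $M_{\mathrm{eff},i}^t\ge aN$ then bounds each coordinate's deficit by $C_\ell/(aN)$. Summing over the coordinates and setting $C_\xi:=\max_\ell C_\ell$ yields $h(E)\ge h(\text{Gaussian surrogate})-C_\xi d^*/(aN)$.

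Combining Steps 1 and 2 gives the stated bound. Since $a=(1-\rho)/M$ with $\rho<1$ fixed (this is where $N<K-1$ enters), the right-hand side is $O(d^*/N)$.

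\textbf{Main obstacle.} The hardest step is invoking the weighted entropic CLT with an explicit rate. The BCG result is stated for normalized sums, and its weighted version needs the weights to be sufficiently spread out. Some $\alpha_j^t$ may be zero, and the weights can be unequal. Controlling $\max_j\alpha_j^2/c_i^t$ through $M_{\mathrm{eff},i}^t$ is therefore essential, and if necessary the constant $C_\xi$ will absorb a condition of the form $M_{\mathrm{eff},i}^t\ge$ const. I would first try the Lyapunov-ratio form of the bound, $D\le C\,L_4$ with $L_4=\sum_j\alpha_j^4\E Z^4/c^2$, which applies when the summands have finite $D_0$. If that fails, I would use a small-ratio regime argument, where for large $aN$ the inequality $\log(1+x)\le x$ lets everything be absorbed into the $O(d^*/N)$ term.
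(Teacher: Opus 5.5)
Your proposal is correct and follows essentially the same route as the paper. Both arguments write the leakage as $h(\xi_i^t+\eta_i^t)-h(\eta_i^t)$ (your $E$ is the paper's $\eta_i^t$) and bound the first term by Gaussian maximum entropy with covariance $(1+c_i^t)K_G^t$. Both lower-bound the mask entropy by whitening, normalizing by $\sqrt{c_i^t}$, tensorizing over the $d^*$ independent coordinates, and applying the weighted Bobkov--Chistyakov--G\"otze bound in Lyapunov-ratio form to get the deficit $C_\xi d^*/M_{\mathrm{eff},i}^t$, then substitute the acceptance thresholds $c_i^t\ge aN$ and $M_{\mathrm{eff},i}^t\ge aN$. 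The paper takes that Lyapunov-ratio BCG bound for independent, non-identically distributed summands as given, and notes that scalar rescaling preserves the entropic distance to the matched Gaussian, so $D_{0,\ell}$ transfers to $\beta_j^t Z_{j,\ell}^t$; your fallback arguments are therefore not needed.
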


\noindent The bound has two terms: the first captures 
leakage when the masking noise is exactly Gaussian; the 
second accounts for non-Gaussianity. Compared with the single-aggregate 
bound of \citet[Theorem~1]{elkordy2023howmuch}, whose 
subset size $N{-}1$ is deterministic, FedAttr's 
effective subset size $aN$ arises from the random 
subset queries. 
The detailed 
proof is deferred to Appendix~\ref{app:privacy}.

\section{Experiments}



\subsection{Experimental Setup}

\noindent\textbf{Federated Learning Configurations.} Consistent with previous work~\citep{ye2024openfedllm,Wu2025ASO}, 
we fine-tune Llama-3.2-3B~\citep{llama3} with LoRA on UltraChat200K~\citep{ultrachat200k}, partitioned IID across $K{=}10$ clients for $T{=}5$ rounds. Default protocol parameters are $r{=}3$ watermarked clients, subset size $N{=}5$, query count $M{=}5$. We evaluate two aggregation strategies (FedIT~\citep{zhang2024fedit}, FLoRA~\citep{wang2024flora}) and two watermark families (KGW~\citep{kirchenbauer2023watermark}, Fictitious Knowledge~\citep{cui2025fictitious}). 



\noindent\textbf{Baselines.} We compare against four baselines.
(i)~\textit{Global model test}: which applies the 
watermark detector to the global model but 
cannot attribute to clients,
(ii)~\textit{Direct (oracle)}: which applies the detector to each 
client's plaintext update, violating SA,
(iii)~\textit{FLDetector}~\citep{zhang2022fldetector}, (iv)~\textit{FLForensics}~\citep{jia2024tracing}. Notably, (ii)-(iv) require 
plaintext updates and violate SA. 

\noindent\textbf{Metrics.} We report TPR (fraction 
of watermarked clients correctly flagged) and FPR 
(fraction of benign clients incorrectly flagged) for each approach. 
FedAttr flags client $i$ when its FedAttr Stouffer score $Z_i \geq 4$.
We implement each baseline following its default configurations. For each approach under different settings, we report results (\textit{i.e., mean \& std}) calculated over three random seeds. We also report the $p$-value obtained by converting 
the Stouffer statistic $Z_i$ to a one-sided standard 
normal tail probability, \textit{i.e.}, 
$p_i = 1 - \Phi(Z_i)$. In ablation studies, we additionally report $\bar{z}_{\rm pos}$ and $\bar{z}_{\rm neg}$, the mean Stouffer statistics of watermarked and benign clients, to quantify signal strength beyond the binary TPR/FPR. 
Full hyperparameters, watermark details, and baselines are in 
Appendix~\ref{app:exper}.

\subsection{Main Results}\label{sec:exp-main}

Table~\ref{tab:main} reports client-level 
attribution performance for different watermark families and FL algorithms. FedAttr achieves 100\% TPR and 0\% FPR in all 
four settings ($p < 10^{-6}$), even completely performing 
through SA . No baseline matches this 
under the same privacy constraint with SA. The direct oracle has plaintext access to each client's update but incurs FPR $\geq$ 57\%. As the global model has already learned watermark signals from earlier rounds, the detector thus assigns high scores to all clients, including benign ones. Instead, differential scoring applied in \textbf{FedAttr} significantly reduces the watermark bias of the global model, accurately isolating each client's individual effect to the watermark effectiveness.
Moreover, we observe previous forensics approach~\cite{jia2024tracing,zhang2022fldetector} performs ineffectively in our considered scenarios as the watermark signal cannot be adapted as adversary 
patterns. As a result, FLDetector 
achieves 0\% TPR, and FLForensics achieves 
at most 33--100\% TPR with 19--24\% FPR. Even more, all existing approaches require 
plaintext access to individual updates, 
violating SA, and cannot be used for watermark attribution (no $p$-value).

\begin{table}[t]
\centering
\caption{Client-level watermark attribution performance 
($T{=}5$ rounds, $\gamma{=}4.0$, 
mean$_{\pm\text{std}}$ over 3 seeds). 
FedAttr is the only method that achieves 
100\% TPR, 0\% FPR in four settings
through secure aggregation.
For FLForensics, $^{\dag}$~denotes the original implementation using HDBSCAN clustering\protect\footnotemark,
and $^{\ddag}$~denotes our adaptation using k-means clustering, since HDBSCAN fails at small $K{=}10$.}
\label{tab:main}
\small
\setlength{\tabcolsep}{3pt}
\begin{tabular}{clccccccc}
\toprule
& & & \multicolumn{3}{c}{KGW} & \multicolumn{3}{c}{Fictitious Knowledge} \\
\cmidrule(lr){4-6} \cmidrule(lr){7-9}
FL Algorithm & Baseline & SA & TPR $\uparrow$ & FPR $\downarrow$ & $p$-value & TPR $\uparrow$ & FPR $\downarrow$ & $p$-value \\
\midrule
\multirow{5}{*}{FedIT}
 & Direct (oracle)         & \XSolidBrush & $55.6_{\pm 19.3}$  & $57.1_{\pm 14.3}$  & $<10^{-3}$   & $100.0_{\pm 0.0}$ & $57.1_{\pm 14.3}$  & $<10^{-15}$   \\
 & FLDetector              & \XSolidBrush & $0.0_{\pm 0.0}$    & $19.1_{\pm 8.3}$   & ---   & $0.0_{\pm 0.0}$   & $19.1_{\pm 8.3}$   & ---   \\
 & FLForensics$^{\dag}$    & \XSolidBrush & $0.0_{\pm 0.0}$    & $0.0_{\pm 0.0}$    & ---   & $0.0_{\pm 0.0}$   & $0.0_{\pm 0.0}$    & ---   \\
 & FLForensics$^{\ddag}$   & \XSolidBrush & $33.3_{\pm 0.0}$   & $23.8_{\pm 8.3}$   & ---   & $100.0_{\pm 0.0}$ & $19.1_{\pm 8.3}$   & ---   \\
 & \textbf{FedAttr (ours)} & \Checkmark   & $\mathbf{100.0_{\pm 0.0}}$ & $\mathbf{0.0_{\pm 0.0}}$ & $<10^{-6}$ & $\mathbf{100.0_{\pm 0.0}}$ & $\mathbf{0.0_{\pm 0.0}}$ & $<10^{-29}$ \\
\midrule
\multirow{5}{*}{FLoRA}
 & Direct (oracle)         & \XSolidBrush & $100.0_{\pm 0.0}$  & $71.4_{\pm 14.3}$  & $<10^{-6}$   & $100.0_{\pm 0.0}$ & $66.7_{\pm 8.3}$   & $<10^{-20}$   \\
 & FLDetector              & \XSolidBrush & $0.0_{\pm 0.0}$    & $14.3_{\pm 0.0}$   & ---   & $0.0_{\pm 0.0}$   & $23.8_{\pm 8.3}$   & ---   \\
 & FLForensics$^{\dag}$    & \XSolidBrush & $0.0_{\pm 0.0}$    & $0.0_{\pm 0.0}$    & ---   & $0.0_{\pm 0.0}$   & $0.0_{\pm 0.0}$    & ---   \\
 & FLForensics$^{\ddag}$   & \XSolidBrush & $55.6_{\pm 19.3}$  & $23.8_{\pm 8.3}$   & ---   & $100.0_{\pm 0.0}$ & $19.1_{\pm 8.3}$   & ---   \\
 & \textbf{FedAttr (ours)} & \Checkmark   & $\mathbf{100.0_{\pm 0.0}}$ & $\mathbf{0.0_{\pm 0.0}}$ & $<10^{-10}$ & $\mathbf{100.0_{\pm 0.0}}$ & $\mathbf{0.0_{\pm 0.0}}$ & $<10^{-50}$ \\
\bottomrule
\end{tabular}
\vspace{-2pt}
\end{table}
\footnotetext{\url{https://github.com/jyqhahah/FLForensics}}
\subsection{Mechanism Analysis}\label{sec:exp-mechanism}

We further investigate the effect and soundness for each component within \textbf{FedAttr}. Figure~\ref{fig:mechanism} presents the results. Figure~\ref{fig:mechanism}(a) compares direct 
and differential scoring at round $t{=}5$. 
Direct scoring yields 57\% FPR because the 
global model has learned watermark signals
from earlier rounds. Differential scoring 
subtracts the reference score, reducing the 
accumulated bias: only watermarked clients 
retain a detectable signal.
Figure~\ref{fig:mechanism}(b) shows the 
per-round performance of differential scores. 
The watermark efficacy stays above 
$\hat{m}{=}3.3$ in each round, while the 
benign ones remain within 
$\pm\hat{\epsilon}{=}1.2$, consistent with
Assumption~\ref{asm:separation}(i)(Separation). 
Figure~\ref{fig:mechanism}(c) shows the 
empirical distribution of centered residuals 
$z_i^{(t)} - \hat{\mu}_i$, closely aligns with 
a (sub)Gaussian distribution ($\hat{\nu}{=}0.85$), supporting 
Assumption~\ref{asm:separation}(ii)(Sub-Gaussianity).
Figure~\ref{fig:mechanism}(d) shows the 
Stouffer statistic computed with varying $T'$. The watermark efficacy becomes larger than $\gamma{=}4$ 
when round $T' \geq 2$ and reaches $Z_i > 8.0$ at 
$T'{=}5$, yielding a margin of $4.0$ above 
$\gamma$. Benign clients remain near zero 
throughout. Therefore, the Stouffer process amplifies 
the signal.

\begin{figure*}[t]
\centering
\includegraphics[width=\textwidth, trim=0 0 0 0, clip]{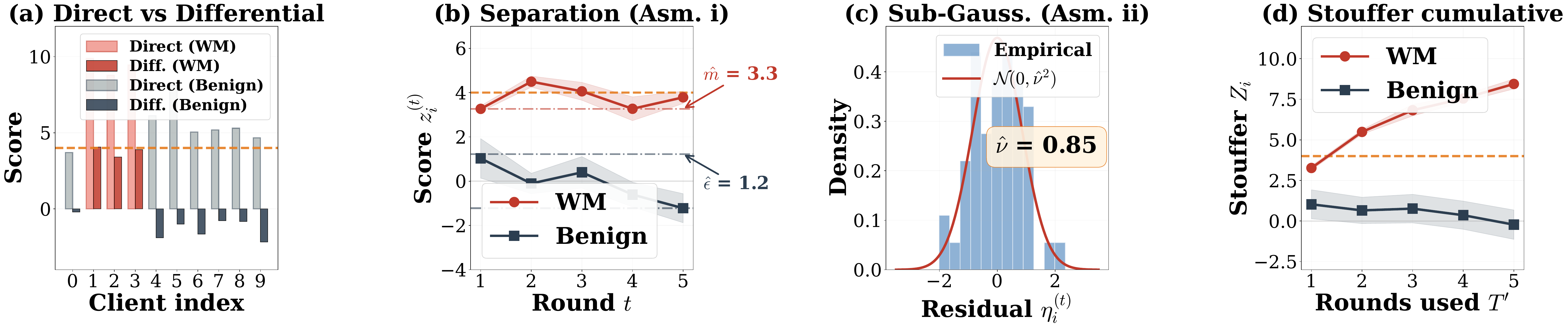}
\vspace{-18pt}
\caption{%
The effect for each 
protocol component. 
\textbf{(a)}~Differential scoring removes the accumulated 
bias. 
\textbf{(b)}~Per-round watermarked mean stays above 
$\hat{m}{=}3.3$, benign mean within 
$\pm\hat{\epsilon}{=}1.2$, validating 
Assumption~\ref{asm:separation}(i). 
\textbf{(c)}~Centered residuals $z^{(t)}_i-\hat{\mu}_i$ match a Gaussian 
($\hat{\nu}{=}0.85$), supporting 
Assumption~\ref{asm:separation}(ii). 
\textbf{(d)}~Stouffer statistic crosses $\gamma{=}4$ 
from round 2. At round 5, the statistic achieves margin $Z_i-\gamma= 4.0$.
}
\label{fig:mechanism}
\vspace{-10pt}
\end{figure*}

\vspace{-2mm}
\subsection{Ablation Studies}

We study the impact of different parameters (\textit{e.g.,} the number of watermark clients, subset size\textit{, etc}) and different configurations (\textit{e.g.,} LoRA rank, dataset, \textit{etc}); 
Figures~\ref{fig:sensitivity} (protocol parameters)
and~\ref{fig:robustness} (training configurations)
summarize the results.

\noindent\textbf{Protocol parameters 
(Figure~\ref{fig:sensitivity}).}
FedAttr achieves consistently 100\% TPR and 0\% FPR under varying amounts of watermark clients, subsets, and watermark 
ratio. The watermark efficacy $\bar{z}_{\rm pos}$ exhibits 
the U-shaped dependence on $N$ consistent with  
Theorem~\ref{thm:variance}: lowest at $N{=}5$ 
where the variance factor $N(K{-}1{-}N)/(K{-}2)$ 
peaks, and highest at $N{=}1$ where the estimator 
becomes exact 
(Figure~\ref{fig:sensitivity}(b)). Query count 
$M \geq 3$ achieves 100\% TPR and 0\% FPR.
$M{=}2$ incurs 29\% FPR due to high estimator 
noise (Figure~\ref{fig:sensitivity}(c)). The 
watermark efficacy scales roughly linearly $\textit{w.r.t.}$ watermark 
ratio, consistent with radioactivity 
theory~\citep{sander2024radioactive}.

\noindent\textbf{Robustness 
(Figure~\ref{fig:robustness}).}
\textbf{FedAttr} performs robustly under different configurations of LoRA ranks,  evaluated models, and datasets, consistently achieving 100\% TPR and 0\% FPR. 
Under severe non-IID heterogeneity 
($\alpha{=}0.1$), TPR degrades to 67\% while FPR remains 0\%; at $\alpha{=}0.05$, FPR rises to 11\%. This degradation is consistent with increased estimator variance when client 
updates diverge, and can be mitigated by 
increasing $T$ (Table~\ref{tab:noniid-recovery}). The detailed analyses are included in the Appendix.

\begin{figure*}[t]
\centering
\includegraphics[width=\textwidth, trim=0 10 0 0, clip]{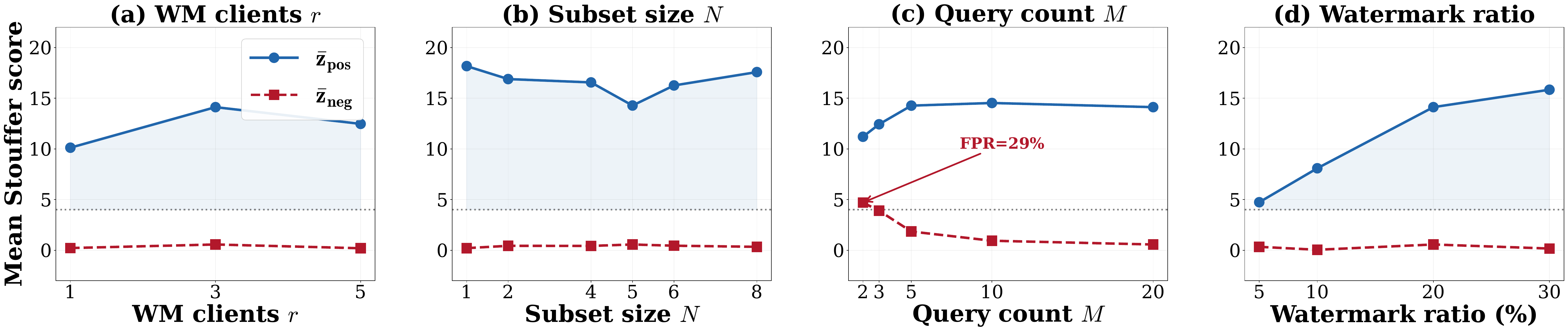}
\vspace{-15pt}
\caption{%
Protocol parameter sensitivity. 
FedAttr achieves 100\% TPR / 0\% FPR across 
all tested values except $M{=}2$, which incurs 
29\% FPR, and shows robustness under protocol parameter selection.
\textbf{(a)}~Number of watermarked clients $r$. 
\textbf{(b)}~Subset size $N$: the U-shaped curve 
validates Theorem~\ref{thm:variance}. 
(c)~Query count $M$. 
(d)~Watermark ratio: The signal scales roughly linearly with watermark 
ratio.
}
\vspace{-2mm}
\label{fig:sensitivity}
\end{figure*}

\begin{figure*}[t]
\centering
\includegraphics[width=\textwidth, trim=0 10 0 0, clip]{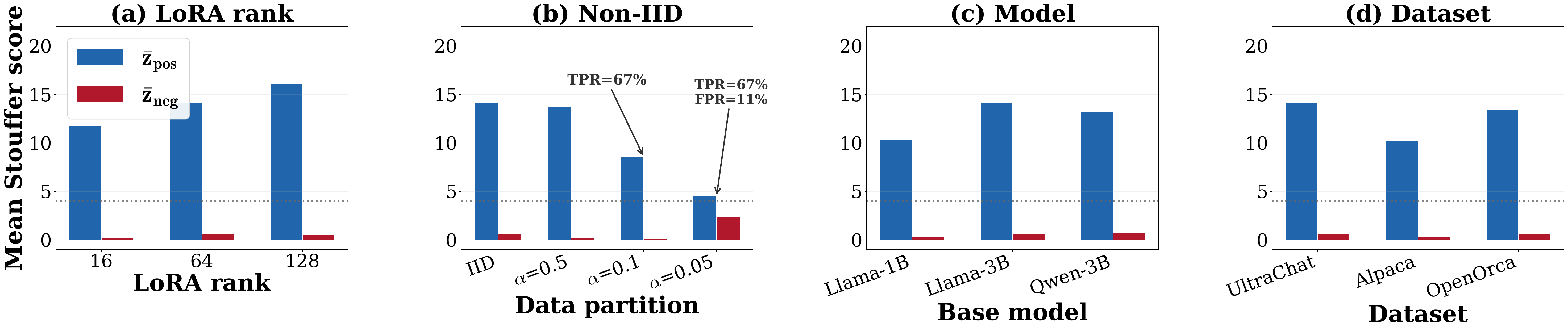}
\vspace{-13pt}
\caption{Robustness to training configurations. FedAttr achieves 100\% TPR and 0\% FPR across 
LoRA ranks, base models, and datasets. 
Attribution accuracy decreases moderately under severe non-IID partitions ($\alpha \leq 0.1$).
\textbf{(a)}~LoRA rank. 
\textbf{(b)}~Non-IID heterogeneity. (smaller Dirichlet 
$\alpha$ means more heterogeneous)
\textbf{(c)}~Base model. 
\textbf{(d)}~Training dataset.}
\label{fig:robustness}
\vspace{-10pt}
\end{figure*}

\subsection{Scalability and Overhead}
\label{sec:exp-scalability}

\noindent\textbf{Scalability.} We scale $K$ from 10 to 100. Table~\ref{tab:scalability} shows that watermark efficacy $\bar{z}_{\rm pos}$ 
decreases from 10.12 to 7.83 as $K$ increases but remains higher above threshold $\gamma$ (100\% TPR, 0\% FPR in all cases). FedAttr also performs effectively in a partial-participation setting (Table~\ref{tab:partial} in Appendix).  

\noindent\textbf{Overhead.}
FedAttr's overhead consists of SA queries 
and watermark scoring. In our main experiment, 
FedAttr issues $2MKT{=}500$ SA queries, adding 
5 minutes (1.0\%) to the 8.5-hour FL training 
time; watermark scoring adds 27 minutes (5.3\%), 
for a total overhead of 6.3\%. Both costs scale 
linearly in $K$. Since all computation runs on the server, 
it can be overlapped with clients' local 
training in the next round, effectively hiding 
the latency. Analysis of scalability and overhead is in 
Appendix~\ref{app:overhead}.

\section{Conclusion}

We introduced FedAttr, a client-level attribution protocol for federated LLM fine-tuning that identifies clients who trained on watermarked documents while preserving SA privacy. FedAttr combines unbiased update estimation from SA queries, differential scoring, and cross-round Stouffer aggregation. We provided theoretical guarantees on the estimator's unbiasedness and variance, and bounded mutual information leakage of $O(d^*/N)$ per round. Empirically, FedAttr achieves 100\% TPR, 0\% FPR, outperforming all baselines while being the only method that preserves SA privacy. 

\bibliographystyle{plainnat}
\bibliography{reference}

\newpage
\newpage
\appendix

\section{Algorithm}
\label{app:algorithm}

Algorithm~\ref{alg:fedattr} presents the FedAttr protocol. 
At each communication round $t$, clients perform standard local 
training and submit updates through secure aggregation (Lines~1--3). 
For each target client $i$, the server repeatedly samples paired 
include/exclude subsets and checks the privacy condition 
(Eq.~\eqref{eq:diffuse-check}) via rejection sampling (Lines~5--10).
Upon acceptance, the server queries the SA oracle to form the unbiased 
update estimator $\widehat{\Delta}_i^{\,t}$ (Line~11) and forwards 
it to the corpus owner, who computes the differential score 
$z_i^{(t)}$ (Lines~12--13). After all $T$ rounds, the corpus owner 
aggregates per-round scores via Stouffer's method and applies 
the threshold $\gamma$ to produce the final attribution decision 
(Lines~15--21). 


\begin{algorithm}[H]
\caption{FedAttr: Client-level Attribution through Secure Aggregation}
\label{alg:fedattr}
\begin{algorithmic}[1]
\REQUIRE Clients $[K]$, communication rounds $T$, subset size $N$, query count $M$, threshold $\gamma$, SA oracle $S_t(\cdot)$, watermark detector $\textsc{Score}(\cdot\,;\cdot)$, prompt sets $\{\mathcal{P}_t\}_{t=1}^T$
\ENSURE Attribution decision for each client $i \in [K]$
\FOR{$t = 1, \ldots, T$}

    \STATE Each client $i$ computes local update $\Delta_i^t$ and submits to the SA oracle
    \STATE Server updates global model: $w^{t} \leftarrow w^{t-1} + \sum_{i=1}^{K} p_i \Delta^t_i$

    \FOR{each target client $i \in [K]$} 

        \REPEAT
            \STATE Draw $U_1^t, \ldots, U_M^t \overset{\text{i.i.d.}}{\sim} \mathrm{Unif}(\mathcal{U}_i^N)$, {Include-target, $|U_m^t| = N{+}1$}
            \STATE Draw $V_1^t, \ldots, V_M^t \overset{\text{i.i.d.}}{\sim} \mathrm{Unif}(\mathcal{V}_i^N)$, {Exclude-target, $|V_m^t| = N$}
            \STATE Compute $\alpha_j^t \leftarrow \frac{1}{M}\sum_{m} \mathbf{1}\{j \in U_m^t\} - \frac{1}{M}\sum_{m} \mathbf{1}\{j \in V_m^t\}$ for all $j \neq i$
            \STATE Compute $c^t_i \leftarrow \sum_{j \neq i}(\alpha_j^t)^2$, \quad $M_{\mathrm{eff},i}^t \leftarrow (c^t_i)^2 \big/ \sum_{j \neq i}(\alpha_j^t)^4$
        \UNTIL{$c^t_i \geq aN$ \textbf{and} $M_{\mathrm{eff},i}^t \geq aN$ \textbf{and} $N < K{-}1$} (Eq.~\eqref{eq:diffuse-check})
        \STATE Query SA and compute: $\widehat{\Delta}_i^{\,t} \leftarrow \frac{1}{M}\sum_{m=1}^{M} S_t(U_m^t) - \frac{1}{M}\sum_{m=1}^{M} S_t(V_m^t)$
        \STATE Server sends $\widehat{\Delta}_i^{\,t}$ to corpus owner

        \STATE $z_i^{(t)} \leftarrow \textsc{Score}\bigl(w^{t-1} + \widehat{\Delta}_i^{\,t};\; \mathcal{P}_t\bigr) - \textsc{Score}\bigl(w^{t-1};\; \mathcal{P}_t\bigr)$

    \ENDFOR
\ENDFOR

\FOR{each client $i \in [K]$}
    \STATE $Z_i \leftarrow \frac{1}{\sqrt{T}} \sum_{t=1}^{T} z_i^{(t)}$
    \IF{$Z_i > \gamma$}
        \STATE Flag client $i$ as \textbf{watermarked}
    \ELSE
        \STATE Label client $i$ as \textbf{benign}
    \ENDIF
\ENDFOR
\end{algorithmic}
\end{algorithm}

\section{Proofs for Attribution Protocol}
\label{app:attribution}

\subsection{Proof of Theorem~\ref{thm:unbiased} 
(Unbiasedness under rejection sampling)}
\label{app:unbiased}

\begin{proof}
Substituting $S^t(W)=\sum_{j\in W}\Delta_j^t$ into~\eqref{eq:estimator}
gives
\[
\widehat{\Delta}_i^{\,t}
=
\frac{1}{M}\sum_{m=1}^M
\left(
\Delta_i^t + \sum_{j\in U_m^t\setminus\{i\}}\Delta_j^t
\right)
-
\frac{1}{M}\sum_{m=1}^M
\sum_{j\in V_m^t}\Delta_j^t .
\]
Since $i\in U_m^t$ for all $m$ and $i\notin V_m^t$ for all $m$, this can be
written as
\[
\widehat{\Delta}_i^{\,t}
=
\Delta_i^t
+
\sum_{j\ne i}\alpha_j^t\Delta_j^t ,
\]
where
\[
\alpha_j^t
=
\frac{1}{M}\sum_{m=1}^M
\mathbf{1}\{j\in U_m^t\setminus\{i\}\}
-
\frac{1}{M}\sum_{m=1}^M
\mathbf{1}\{j\in V_m^t\},
\qquad j\ne i .
\]
Thus it suffices to show
\[
\mathbb{E}[\alpha_j^t\mid \mathcal{A}^t_i]=0
\qquad
\text{for every } j\ne i,
\]
where $\mathcal{A}^t_i$ denotes the acceptance event
\[
\mathcal{A}^t_i
=
\{c^t\ge aN,\; M_{\mathrm{eff},i}^t\ge aN\}.
\]

For each $m$, define the non-target part of the include-target query by
\[
X_m^t := U_m^t\setminus\{i\},
\]
and define
\[
Y_m^t := V_m^t .
\]
By construction, both $X_m^t$ and $Y_m^t$ are $N$-subsets of
$[K]\setminus\{i\}$. Moreover, the proposal distribution samples
\[
X_1^t,\ldots,X_M^t,Y_1^t,\ldots,Y_M^t
\]
i.i.d. uniformly from the same family of $N$-subsets of
$[K]\setminus\{i\}$.

Let
\[
\mathcal{Q}^t
=
(X_1^t,\ldots,X_M^t;Y_1^t,\ldots,Y_M^t)
\]
denote the non-target query design. Define the swap map $T$ by
\[
T(\mathcal{Q}^t)
=
(Y_1^t,\ldots,Y_M^t;X_1^t,\ldots,X_M^t).
\]
Equivalently, after applying $T$, the corresponding include-target and
exclude-target subsets are reconstructed as
\[
(U_m^t)' = \{i\}\cup Y_m^t,
\qquad
(V_m^t)' = X_m^t .
\]
This map is well-defined because $X_m^t$ and $Y_m^t$ have the same
cardinality and both lie in $[K]\setminus\{i\}$. It is also a bijection.
Since the proposal distribution samples the $X$'s and $Y$'s i.i.d. from
the same distribution, the proposal distribution is invariant under $T$.

For every non-target client $j\ne i$,
\[
\alpha_j^t(T(\mathcal{Q}^t))
=
\frac{1}{M}\sum_{m=1}^M \mathbf{1}\{j\in Y_m^t\}
-
\frac{1}{M}\sum_{m=1}^M \mathbf{1}\{j\in X_m^t\}
=
-\alpha_j^t(\mathcal{Q}^t).
\]
Therefore the swap map sends the coefficient vector
\[
\alpha^t=(\alpha_j^t)_{j\ne i}
\]
to $-\alpha^t$.

The acceptance event $\mathcal{A}$ depends on the query design only
through
\[
c^t=\sum_{j\ne i}(\alpha_j^t)^2
\]
and
\[
M_{\mathrm{eff}}^t
=
\frac{(c^t)^2}{\sum_{j\ne i}(\alpha_j^t)^4}.
\]
Both quantities are invariant under the sign change
$\alpha^t\mapsto -\alpha^t$. Hence
\[
\mathbf{1}\{\mathcal{A}^t_i(\mathcal{Q}^t)\}
=
\mathbf{1}\{\mathcal{A}^t_i(T(\mathcal{Q}^t))\}.
\]
Since $T$ preserves the proposal distribution and preserves the acceptance
event, the accepted query design is also invariant under $T$:
\[
\mathcal{Q}^t\mid \mathcal{A}^t_i
\;\stackrel{d}{=}\;
T(\mathcal{Q}^t)\mid \mathcal{A}^t_i.
\]
Consequently, for every $j\ne i$,
\[
\mathbb{E}[\alpha_j^t\mid \mathcal{A}^t_i]
=
\mathbb{E}[\alpha_j^t(T(\mathcal{Q}^t))\mid \mathcal{A}^t_i]
=
-\mathbb{E}[\alpha_j^t\mid \mathcal{A}^t_i],
\]
which implies
\[
\mathbb{E}[\alpha_j^t\mid \mathcal{A}^t_i]=0.
\]

Finally, the query design $\mathcal{Q}^t$ is sampled independently of
the client updates $\Delta_1^t,\ldots,\Delta_K^t$ and the
current global model, and the acceptance event $\mathcal{A}^t_i$ is a
function only of the query design. Therefore the same identity holds
after conditioning on the realized updates:
\[
\mathbb{E}
[
\alpha_j^t
\mid
\Delta_1^t,\ldots,\Delta_K^t,\mathcal{A}^t_i
]
=0,
\qquad j\ne i .
\]
Thus
\[
\begin{aligned}
\mathbb{E}
\left[
\widehat{\Delta}_i^{\,t}
\mid
\Delta_1^t,\ldots,\Delta_K^t,\mathcal{A}^t_i
\right]
&=
\Delta_i^t
+
\sum_{j\ne i}
\mathbb{E}
[
\alpha_j^t
\mid
\Delta_1^t,\ldots,\Delta_K^t,\mathcal{A}^t_i
]
\Delta_j^t \\
&=
\Delta_i^t .
\end{aligned}
\]
This proves the unbiasedness of the accepted estimator.
\end{proof}

\subsection{Proof of Theorem~\ref{thm:variance} 
(Conditional variance under rejection sampling)}
\label{app:variance}

\begin{proof}
Let $\zeta(\mathcal{Q}^t) := \sum_{j \neq i} 
\alpha_j^t\,\Delta_j^t = 
\widehat{\Delta}_i^{\,t} - \Delta_i^t$. By 
Theorem~\ref{thm:unbiased}, 
$\mathbb{E}[\zeta \mid \mathcal{A}^t_i, 
\Delta_1^t, \ldots, \Delta_K^t] = 0$, so the 
accepted covariance is
\[
\operatorname{Cov}(\widehat{\Delta}_i^{\,t} \mid 
\mathcal{A}^t_i, \Delta_1^t, \ldots, \Delta_K^t)
= \mathbb{E}[\zeta\zeta^\top \mid \mathcal{A}^t_i]
= \frac{1}{p_a}\,
\mathbb{E}[\mathbf{1}_{\mathcal{A}}\,
\zeta\zeta^\top]
\preceq \frac{1}{p_a}\,
\mathbb{E}[\zeta\zeta^\top],
\]
where $p^t_{a,i} = \Pr(\mathcal{A}^t_i)$. Each 
non-target part $U_m^t \setminus \{i\}$ is a 
simple random sample of size $N$ drawn without 
replacement from the $K-1$ non-target clients. By 
the finite-population correction 
formula~\citep{cochran1977sampling}, each 
include-target sum 
$A_m := \sum_{j \in U_m^t \setminus \{i\}} 
\Delta_j^t$ has covariance 
$\operatorname{Cov}(A_m) = N(K-1-N)/(K-2)\,
\Sigma_{-i}^t$, and likewise for each 
exclude-target sum. The $2M$ sums are mutually 
independent, so
\[
\mathbb{E}[\zeta\zeta^\top]
= \frac{2}{M} \cdot \frac{N(K-1-N)}{K-2}\,
\Sigma_{-i}^t.
\]
Combining gives
\[
\operatorname{Cov}(\widehat{\Delta}_i^{\,t} \mid 
\mathcal{A}^t_i, \Delta_1^t, \ldots, \Delta_K^t)
\preceq \frac{1}{p_a} \cdot \frac{2}{M} 
\cdot \frac{N(K-1-N)}{K-2}\,\Sigma_{-i}^t.
\qedhere
\]
\end{proof}

\subsection{Proof of Theorem~\ref{thm:stouffer} (Stouffer concentration)}
\label{app:stouffer}

\begin{proof}
Under Assumption~\ref{asm:separation},
write $z_i^{(t)} = \mu_i^{(t)} + \eta_i^{(t)}$
where $\eta_i^{(t)}$ is conditionally $\nu^2$-sub-Gaussian given $\mathcal{F}_{t-1}$.
Define $S_T := \sum_{t=1}^T \eta_i^{(t)}$.
By the conditional sub-Gaussian tower property,
$\mathbb{E}[\exp(\lambda S_T)] \leq \exp(T\lambda^2\nu^2/2)$.
Hence $W_i := T^{-1/2} S_T$ is $\nu^2$-sub-Gaussian.

\paragraph{Clean client.}
Since $|\mu_i^{(t)}| \leq \epsilon$,
$Z_i \leq \sqrt{T}\,\epsilon + W_i$,
so $\Pr(Z_i \geq \gamma) \leq \exp(-(\gamma - \sqrt{T}\,\epsilon)^2 / (2\nu^2))$
for $\gamma > \sqrt{T}\,\epsilon$.

\paragraph{Watermarked client.}
Since $\mu_i^{(t)} \geq m$,
$Z_i \geq \sqrt{T}\,m + W_i$,
so $\Pr(Z_i \leq \gamma) \leq \exp(-(\sqrt{T}\,m - \gamma)^2 / (2\nu^2))$
for $\sqrt{T}\,m > \gamma$.
\end{proof}

\section{Proofs for Privacy Analysis}
\label{app:privacy}

This section adapts the proof strategy of \citet{elkordy2023howmuch} to the
client-level update estimates produced by FedAttr. We use the same data IID assumption and 
distributional assumptions as the independent-under-whitening case in \citet{elkordy2023howmuch}. The difference of proof structure is that the equal-weight secure aggregation is replaced by a query-dependent estimation. Accordingly, the number of masking users in \citet{elkordy2023howmuch} is replaced by the effective
masking size $M_{\mathrm{eff},i}^t$, where
\[
  M_{\mathrm{eff},i}^t
  =
  \frac{\left(c_i^t\right)^2}{s_i^t},
  \qquad
  c_i^t:=\sum_{j\ne i}(\alpha_j^t)^2,
  \qquad
  s_i^t:=\sum_{j\ne i}(\alpha_j^t)^4.
\]

The proof follows the same high-level route as the independent-under-whitening
case of \citet{elkordy2023howmuch} in 4 stages:
\begin{enumerate}
  \item Decompose the mutual information into a difference of two entropies.
  \item Upper-bound the entropy of signal plus noise by Gaussian maximum
  entropy.
  \item Lower-bound the entropy of the masking noise using the
  Bobkov--Chistyakov--G{\"o}tze entropic Berry--Esseen bound
  \cite{bobkov2014berry}.
  \item Subtract the two bounds.
\end{enumerate}

\subsection{FedAttr Notation}
\label{sec:notation}
For convenience, we first list the notations in \textbf{FedAttr}.

Given a communication round \(t\) and a target client \(i\).  FedAttr protocol queries are
\[
  \mathcal Q_i^t=(U_1^t,\ldots,U_M^t;V_1^t,\ldots,V_M^t),
\]
where each include subset \(U_m^t\) contains \(i\), and each exclude subset
\(V_m^t\) does not contain \(i\).  Given a secure aggregate
\(S^t(W)=\sum_{j\in W}\Delta_j^t\), the released estimate for target $i$ is
\[
  \widehat\Delta_i^t
  :=
  \frac1M\sum_{m=1}^M S^t(U_m^t)
  -
  \frac1M\sum_{m=1}^M S^t(V_m^t).
\]
Expanding this linear combination gives
\begin{equation}
  \widehat\Delta_i^t
  =
  \Delta_i^t+
  \sum_{j\ne i}\alpha_j^t(\mathcal Q_i^t)\Delta_j^t,
\end{equation}
where
\[
  \alpha_j^t(\mathcal Q_i^t)
  :=
  \frac1M\sum_{m=1}^M {\bf 1}\{j\in U_m^t\}
  -
  \frac1M\sum_{m=1}^M {\bf 1}\{j\in V_m^t\}.
\]
After conditioning on \(\mathcal Q_i^t\), the coefficients \(\alpha_j^t\) are
deterministic. For convenience, given \(\mathcal Q_i^t\), we notate \(\alpha_j^t:=\alpha_j^t(\mathcal Q_i^t)\). Therefore, 

\begin{equation}
\label{eq:fedattr-surrogate-expansion}
  \widehat\Delta_i^t
  =
  \Delta_i^t+
  \sum_{j\ne i}\alpha_j^t\Delta_j^t,
\end{equation}
where
\[
  \alpha_j^t
  :=
  \frac1M\sum_{m=1}^M {\bf 1}\{j\in U_m^t\}
  -
  \frac1M\sum_{m=1}^M {\bf 1}\{j\in V_m^t\}.
\]

\begin{definition}
\label{def:masking-quantities}
Given queries \(Q_i^t\), define
\[
  c_i^t
  :=
  \sum_{j\ne i}\bigl(\alpha_j^t\bigr)^2,
\]
\[
  s_i^t
  :=
  \sum_{j\ne i}\bigl(\alpha_j^t\bigr)^4.
\]
When \(c_i^t>0\), define
\[
  M_{\mathrm{eff},i}^t
  :=
  \frac{\bigl(c_i^t\bigr)^2}{s_i^t}.
\]
Equivalently, if
\[
  \beta_j^t:=\frac{\alpha_j^t}{\sqrt{c_i^t}},
\]
then
\[
  \sum_{j\ne i}(\beta_j^t)^2=1,
  \qquad
  M_{\mathrm{eff},i}^t=\frac{1}{\sum_{j\ne i}(\beta_j^t)^4}.
\]
\end{definition}

\subsection{Stage 1: Two Entropies Decomposition}
\label{sec:step1}

In stage 1, we decompose the mutual information into the difference of two entropies.

\begin{lemma}[FedAttr estimate decomposition]
\label{lem:signal-noise}
Under Assumption~\ref{asm:iid-fluctuations}, conditioned on
\((Q_i^t,\F_{t-1})\), the FedAttr estimate can be written as
\[
  \widehat\Delta_i^t
  =
  \left(1+\sum_{j\ne i}\alpha_j^t\right)\mu^t
  +
  \xi_i^t+
  \eta_i^t,
\]
where
\[
  \eta_i^t:=\sum_{j\ne i}\alpha_j^t\xi_j^t.
\]
Moreover, \(\xi_i^t\) is conditionally independent of \(\eta_i^t\) given
\((Q_i^t,\F_{t-1})\).  Consequently,
\begin{equation}
\label{eq:mi-entropy-diff}
  I(\Delta_i^t;\widehat\Delta_i^t\mid Q_i^t,\F_{t-1})
  =
  h(\xi_i^t+\eta_i^t\mid Q_i^t,\F_{t-1})
  -
  h(\eta_i^t\mid Q_i^t,\F_{t-1}).
\end{equation}
\end{lemma}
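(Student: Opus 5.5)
The plan is to work throughout conditionally on $(Q_i^t,\F_{t-1})$, so that the coefficients $\alpha_j^t$ are fixed numbers and, by Assumption~\ref{asm:iid-fluctuations}, $\mu^t$ is deterministic. Substituting $\Delta_j^t=\mu^t+\xi_j^t$ for every $j$ into the expansion \eqref{eq:fedattr-surrogate-expansion} gives
\[
\widehat\Delta_i^t=\Bigl(1+\sum_{j\ne i}\alpha_j^t\Bigr)\mu^t+\xi_i^t+\sum_{j\ne i}\alpha_j^t\xi_j^t .
\]
This is exactly the stated decomposition with $\eta_i^t=\sum_{j\ne i}\alpha_j^t\xi_j^t$.

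For conditional independence, I will use the conditional i.i.d.\ property of $\{\xi_j^t\}_{j=1}^K$ given $\F_{t-1}$ from Assumption~\ref{asm:iid-fluctuations}. I also use the fact, noted in Section~\ref{sec:surrogate}, that the query process, including the rejection step, is drawn independently of the client updates and the global model. Hence conditioning additionally on $Q_i^t$ does not change the joint law of $(\xi_1^t,\dots,\xi_K^t)$, and they remain mutually independent given $(Q_i^t,\F_{t-1})$. Since $\eta_i^t$ is a deterministic linear function of $(\xi_j^t)_{j\ne i}$ under this conditioning, it is conditionally independent of $\xi_i^t$.

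For the entropy identity:
\begin{enumerate}
\item The shift $b:=(1+\sum_{j\ne i}\alpha_j^t)\mu^t$ and $\mu^t$ itself are constants given the conditioning. So the map $\Delta_i^t\mapsto\xi_i^t$ is a bijection, and $\widehat\Delta_i^t\mapsto\widehat\Delta_i^t-b=\xi_i^t+\eta_i^t$ is a translation.
\item Mutual information is invariant under such bijections, so $I(\Delta_i^t;\widehat\Delta_i^t\mid\cdot)=I(\xi_i^t;\xi_i^t+\eta_i^t\mid\cdot)$.
\item Write this as $h(\xi_i^t+\eta_i^t\mid\cdot)-h(\xi_i^t+\eta_i^t\mid\xi_i^t,\cdot)$.
\item Given $\xi_i^t$, translation invariance of differential entropy together with the conditional independence above gives $h(\xi_i^t+\eta_i^t\mid\xi_i^t,\cdot)=h(\eta_i^t\mid\xi_i^t,\cdot)=h(\eta_i^t\mid\cdot)$, which yields \eqref{eq:mi-entropy-diff}.
\end{enumerate}

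The main obstacle is that the differential entropies must be well defined and finite. The $\xi_j^t$ live on the $d^*$-dimensional effective subspace, so all entropies are taken with respect to Lebesgue measure there, as the assumption stipulates. In particular, $\eta_i^t$ needs a density on that subspace, which requires $c_i^t>0$; this is guaranteed by the acceptance condition $c_i^t\ge aN>0$. Finiteness of the entropies I will take from the finite-second-moment and finite-entropic-distance conditions in Assumption~\ref{asm:bcg-regularity}, after whitening by $(K_G^t)^{-1/2}$. Independence of the whitened coordinates ensures the densities exist, so that the subtraction $h-h$ is not of the form $\infty-\infty$.
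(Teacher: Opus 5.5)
Your proposal is correct and follows the paper's proof essentially step for step. Like the paper, you substitute $\Delta_j^t=\mu^t+\xi_j^t$, get independence of $\xi_i^t$ and $\eta_i^t$ from the conditional independence of the fluctuations, drop the deterministic shift, and apply $I(X;X+Z)=h(X+Z)-h(Z)$. Your two additions are points the paper leaves implicit, and both are sound: you use the fact that the rejection-sampled query draw is independent of the updates, so conditioning on $Q_i^t$ preserves the independence, and you check that the entropies are finite (for $c_i^t>0$), so the subtraction is never $\infty-\infty$.
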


\begin{proof}
Substitute \(\Delta_j^t=\mu^t+\xi_j^t\) into
\eqref{eq:fedattr-surrogate-expansion}:
\[
\begin{aligned}
  \widehat\Delta_i^t
  &=
  \mu^t+\xi_i^t+
  \sum_{j\ne i}\alpha_j^t(\mu^t+\xi_j^t) \\
  &=
  \left(1+\sum_{j\ne i}\alpha_j^t\right)\mu^t
  +
  \xi_i^t+
  \sum_{j\ne i}\alpha_j^t\xi_j^t.
\end{aligned}
\]
This proves the stated decomposition.

Conditioned on $(Q_i^t,\mathcal F_{t-1})$, the coefficients
$\alpha_j^t$ are deterministic. The vector $\xi_i^t$ is client $i$'s
centered update, while
\[
  \eta_i^t
  =
  \sum_{j\ne i}\alpha_j^t \xi_j^t
\]
is a deterministic function of the centered updates of the non-target
clients $\{\xi_j^t:j\ne i\}$. By Assumption~\ref{asm:iid-fluctuations},
the centered updates $\{\xi_j^t\}_{j=1}^K$ are conditionally independent
given $\mathcal F_{t-1}$. Therefore, $\xi_i^t$ is conditionally
independent of the collection $\{\xi_j^t:j\ne i\}$, and hence is
conditionally independent of any deterministic function of the
collection $\{\xi_j^t:j\ne i\}$, including $\eta_i^t$.

The deterministic shift
\[
  \left(1+\sum_{j\ne i}\alpha_j^t\right)\mu^t
\]
does not affect mutual information.  Since \(\Delta_i^t=\mu^t+\xi_i^t\),
we have
\[
  I(\Delta_i^t;\widehat\Delta_i^t\mid Q_i^t,\F_{t-1})
  =
  I(\xi_i^t;\xi_i^t+\eta_i^t\mid Q_i^t,\F_{t-1}).
\]
For independent \(X\) and \(Z\),
\[
  I(X;X+Z)=h(X+Z)-h(X+Z\mid X)=h(X+Z)-h(Z).
\]
Applying this identity with \(X=\xi_i^t\) and \(Z=\eta_i^t\) gives
\eqref{eq:mi-entropy-diff}.
\end{proof}

\subsection{Stage 2: Upper Bound the $h(\xi_i^t+\eta_i^t\mid Q_i^t,\F_{t-1})$ by Gaussian Maximum Entropy.}
\label{sec:step2}

In this stage, we use the Gaussian maximum entropy to upper-bound the $h(\xi_i^t+\eta_i^t\mid Q_i^t,\F_{t-1})$ (first term).

\begin{lemma}[Gaussian maximum-entropy upper bound]
\label{lem:upper-signal-noise}
Under Assumption~\ref{asm:iid-fluctuations}, given queries $Q_i^t$ with
\(c_i^t:=
  \sum_{j\ne i}\bigl(\alpha_j^t\bigr)^2>0\),
\[
  h(\xi_i^t+\eta_i^t\mid Q_i^t,\F_{t-1})
  \le
  \frac12\log\det\bigl(2\pi e(1+c_i^t)K_G^t\bigr).
\]
\end{lemma}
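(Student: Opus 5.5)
The plan is to compute the conditional covariance of $\xi_i^t+\eta_i^t$ and then invoke the Gaussian maximum-entropy principle on the effective subspace.

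\textbf{Covariance.} Condition on $(Q_i^t,\F_{t-1})$. The coefficients $\alpha_j^t$ are then deterministic. By Assumption~\ref{asm:iid-fluctuations}, the centered updates $\xi_1^t,\ldots,\xi_K^t$ are conditionally i.i.d., mean zero, with covariance $K_G^t$. Conditional independence is used twice: between $\xi_i^t$ and $\eta_i^t$ (as in Lemma~\ref{lem:signal-noise}), and among the summands of $\eta_i^t=\sum_{j\ne i}\alpha_j^t\xi_j^t$. Together these give
\[
\Cov(\xi_i^t+\eta_i^t\mid Q_i^t,\F_{t-1})
=K_G^t+\sum_{j\ne i}(\alpha_j^t)^2K_G^t=(1+c_i^t)K_G^t .
\]
Every cross term $\alpha_j^t\alpha_k^t\E[\xi_j^t(\xi_k^t)^\top]$ with $j\ne k$ vanishes because each factor has mean zero and the factors are independent.

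\textbf{Support.} The sum $\xi_i^t+\eta_i^t$ lies in the $d^*$-dimensional range of $K_G^t$, since each $\xi_j^t$ does. Following the convention stated in Assumption~\ref{asm:iid-fluctuations}, entropies and determinants are taken on this subspace. The restricted covariance $(1+c_i^t)K_G^t$ is positive definite there, and $c_i^t>0$ keeps the scaling factor strictly positive. (In fact $1+c_i^t\ge 1$ already suffices.)

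\textbf{Maximum entropy.} Among all random vectors on $\mathbb{R}^{d^*}$ with a given covariance $\Sigma$, the Gaussian maximizes differential entropy, with value $\tfrac12\log\det(2\pi e\,\Sigma)$. We apply this pointwise for each fixed value of the conditioning variables and then average over them. This yields
\[
h(\xi_i^t+\eta_i^t\mid Q_i^t,\F_{t-1})\le\tfrac12\log\det\bigl(2\pi e(1+c_i^t)K_G^t\bigr).
\]

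\textbf{Main obstacle.} The only delicate point is this dimensional bookkeeping. The vectors live in $\mathbb{R}^d$ but are supported on a $d^*$-dimensional subspace, so differential entropy in $\mathbb{R}^d$ would be $-\infty$. We must therefore fix an orthonormal basis of $\mathrm{range}(K_G^t)$, which is $\F_{t-1}$-measurable, and apply the max-entropy bound there. With that choice, $\det$ denotes the pseudo-determinant, i.e.\ the product of the nonzero eigenvalues. The rest is routine.
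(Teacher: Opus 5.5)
Your proposal is correct and follows the paper's proof essentially line for line. Conditional independence gives $\Cov(\xi_i^t+\eta_i^t\mid Q_i^t,\F_{t-1})=(1+c_i^t)K_G^t$, and the Gaussian maximum-entropy bound for that covariance yields the claim. Your restriction to $\mathrm{range}(K_G^t)$ with the pseudo-determinant just spells out the convention the paper fixes in Assumption~\ref{asm:iid-fluctuations}; the only slip is "then average over them", since the right-hand side depends on the realized $(Q_i^t,\F_{t-1})$, so the bound holds pointwise and averaging would not reproduce the displayed expression.
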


\begin{proof}
Because \(\xi_i^t\) is conditionally independent of \(\eta_i^t\), covariance
adds:
\[
  \Cov(\xi_i^t+\eta_i^t\mid Q_i^t,\F_{t-1})
  =
  \Cov(\xi_i^t\mid\F_{t-1})
  +
  \Cov(\eta_i^t\mid Q_i^t,\F_{t-1}).
\]
The first term equals \(K_G^t\).  For the second term,
\[
\begin{aligned}
  \Cov(\eta_i^t\mid Q_i^t,\F_{t-1})
  &=
  \Cov\left(\sum_{j\ne i}\alpha_j^t\xi_j^t\mid Q_i^t,\F_{t-1}\right) \\
  &=
  \sum_{j\ne i}(\alpha_j^t)^2K_G^t
  =
  c_i^tK_G^t,
\end{aligned}
\]
where the cross-covariances vanish because of conditional independence.  Therefore,
\[
  \Cov(\xi_i^t+\eta_i^t\mid Q_i^t,\F_{t-1})=(1+c_i^t)K_G^t.
\]
Among all distributions with the same covariance matrix \(\Sigma\), the
Gaussian has the largest differential entropy, equal to
\(\frac12\log\det(2\pi e\Sigma)\).  Taking
\(\Sigma=(1+c_i^t)K_G^t\) proves the claim.
\end{proof}

\subsection{Step 3: Lower Bound the $h(\eta_i^t\mid Q_i^t,\F_{t-1})$ by Entropic Berry–Esseen Bound}
\label{sec:step3}

In this stage, we lower bound the $h(\eta_i^t\mid Q_i^t,\F_{t-1})$ by entropic Berry–Esseen bound. We first introduce a lemma to show scalar weighted entropic Berry--Esseen bound.

\begin{lemma}[Scalar weighted entropic Berry--Esseen bound]
\label{lem:scalar-bcg}
Let \(X_1,\ldots,X_m\) be independent centered scalar random variables.
Omit any zero-variance summands, and assume the remaining summands have
positive variances, finite fourth moments, densities, and finite
differential entropies. Let
\[
  V_m:=\sum_{r=1}^m \mathrm{Var}(X_r),
\]
and assume \(V_m=1\). For each \(r\), let \(Z_r\) be a Gaussian random
variable with the same mean and variance as \(X_r\). Assume the
Bobkov--Chistyakov--G\"otze entropic Berry--Esseen regularity holds
uniformly; in particular, assume there exists \(D_0<\infty\) such that
\[
  \KL(X_r\|Z_r)=h(Z_r)-h(X_r)\le D_0
  \qquad\text{for every }r.
\]
Let \(G\sim\mathcal N(0,1)\). Then there is a constant
\(C_{\rm BCG}\), depending only on the corresponding BCG regularity
constants, such that
\[
  \KL\left(\sum_{r=1}^m X_r\,\middle\|\,G\right)
  \le
  C_{\rm BCG}\sum_{r=1}^m \E |X_r|^4.
\]
Equivalently,
\[
  h\left(\sum_{r=1}^m X_r\right)
  \ge
  \frac12\log(2\pi e)
  -
  C_{\rm BCG}\sum_{r=1}^m \E |X_r|^4.
\]
\end{lemma}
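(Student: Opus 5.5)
The plan is to reduce the statement to the Bobkov--Chistyakov--G\"otze entropic Berry--Esseen theorem for non-identically distributed summands \citep{bobkov2014berry}, which we take as a black box. Since the entropy form follows immediately from the relative-entropy form, we prove the relative-entropy bound first. For a centered scalar $S$ with unit variance, $\KL(S\|G)=h(G)-h(S)$ with $h(G)=\tfrac12\log(2\pi e)$. Rearranging the first inequality therefore gives the second with no further work.

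The first step is preliminary. We drop the zero-variance summands: being centered with zero variance, they vanish almost surely and change neither the sum nor the fourth-moment sum. We then set $\sigma_r^2:=\mathrm{Var}(X_r)$, so that $\sum_r\sigma_r^2=1$, and write $X_r=\sigma_r Y_r$ with $Y_r$ centered of unit variance. In these variables the Lyapunov-type ratio is $L_4:=\sum_r \E|X_r|^4=\sum_r\sigma_r^4\,\E|Y_r|^4$. The key observation is scale invariance: $\KL(X_r\|Z_r)=\KL(Y_r\|\mathcal N(0,1))$. Hence the uniform hypothesis $\KL(X_r\|Z_r)\le D_0$ transfers directly to the normalized summands $Y_r$.

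The second step is to invoke the BCG theorem for weighted sums of independent unit-variance summands with finite entropic distance $\le D_0$ and finite fourth moments. It states that
\[
\KL\Bigl(\sum_r X_r\Bigm\| G\Bigr)\le C(D_0)\,L_4,
\]
or more precisely $\le C(D_0)\,L_4$ up to an exponentially small correction in $1/L_4$. In the regime $L_4\le c_0(D_0)$, that correction term is dominated by a constant multiple of $L_4$, and we absorb it into $C_{\rm BCG}$.

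The third step, and the main obstacle, is the complementary regime $L_4>c_0$, where the theorem's small-Lyapunov-ratio hypothesis fails. Here we need a uniform bound $\KL(\sum_rX_r\|G)\le D_0$. For that we use the convexity-type property of relative entropy to Gaussians under independent weighted sums: the entropy power inequality applied to $\sum_r\sigma_rY_r$, together with $h(\sigma_rY_r)\ge h(\sigma_rG)-D_0$, gives
\[
e^{2h(S)}\ge\sum_r e^{2h(\sigma_rY_r)}\ge e^{-2D_0}\sum_r2\pi e\,\sigma_r^2 .
\]
This yields $\KL(S\|G)\le D_0\le (D_0/c_0)L_4$. Taking $C_{\rm BCG}:=\max(C(D_0),D_0/c_0)$ covers both regimes and completes the argument. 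The delicate points are matching the precise BCG hypotheses (they need finite fourth moments, supplied by Assumption~\ref{asm:bcg-regularity}) and handling the exponential remainder cleanly. The rest is bookkeeping.
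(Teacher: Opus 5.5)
Your proposal is correct and follows the same route as the paper. Both treat the Bobkov--Chistyakov--G\"otze non-i.i.d.\ entropic Berry--Esseen theorem as a black box with $V_m=1$, then convert via $\KL(S\|G)=\tfrac12\log(2\pi e)-h(S)$ for centered unit-variance $S$. Your extra entropy-power-inequality step, which gives $\KL(S\|G)\le D_0$ when $\sum_r\E|X_r|^4$ is not small, is a sound safeguard the paper omits, since the paper asserts the BCG bound with no smallness condition; it ensures $C_{\rm BCG}$ is valid in every regime.
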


This lemma does not introduce a FedAttr-specific modeling assumption.
The condition
\[
  \KL(X_r\|Z_r)=h(Z_r)-h(X_r)\le D_0
\]
is part of the regularity needed by the Bobkov--Chistyakov--G\"otze
entropic Berry--Esseen theorem. In the independent-under-whitening case,
\citet{elkordy2023howmuch} use this entropic Berry--Esseen tool to
lower-bound the entropy of an equal-weight normalized aggregate; the
corresponding regularity is absorbed into their constant. We state it
explicitly because FedAttr applies the same tool to query-dependent
weighted summands, which are independent but not necessarily identically
distributed.

\begin{proof}
This is the one-dimensional entropic Berry--Esseen theorem of
Bobkov--Chistyakov--G{\"o}tze for independent, not necessarily identically
distributed, summands.  In their notation, for
\[
  S_m:=\frac{\sum_{r=1}^m X_r}{\sqrt{V_m}},
\]
the entropic distance from \(S_m\) to the standard Gaussian is bounded by a
constant depending on the uniform entropic-distance parameter, times the
Lyapunov fourth-moment ratio
\[
  \frac{\sum_{r=1}^m \E |X_r|^4}{V_m^2}.
\]
Since \(V_m=1\), this gives
\[
  \KL\left(\sum_{r=1}^m X_r\,\middle\|\,G\right)
  \le
  C_{\rm BCG}\sum_{r=1}^m \E |X_r|^4.
\]

It remains only to translate the relative-entropy statement into an entropy
lower bound.  Let
\[
  S:=\sum_{r=1}^m X_r.
\]
Then \(\E S=0\) and \(\mathrm{Var}(S)=1\).  The density of
\(G\sim\mathcal N(0,1)\) is
\[
  \phi(x)=\frac1{\sqrt{2\pi}}\exp(-x^2/2).
\]
Thus
\[
\begin{aligned}
  \KL(S\|G)
  &=
  \int p_S(x)\log\frac{p_S(x)}{\phi(x)}\,dx \\
  &=
  -h(S)-\E[\log\phi(S)].
\end{aligned}
\]
Since
\[
  \log\phi(x)=-\frac12\log(2\pi)-\frac{x^2}{2},
\]
and \(\E S^2=1\),
\[
  \E[\log\phi(S)]
  =
  -\frac12\log(2\pi)-\frac12.
\]
Therefore
\[
  \KL(S\|G)=\frac12\log(2\pi e)-h(S).
\]
Rearranging the BCG bound yields the claimed entropy lower bound.
\end{proof}

\begin{lemma}[Coordinate tensorization]
\label{lem:tensorization}
Condition on \(\mathcal G:=(Q_i^t,\F_{t-1})\), and suppose
Assumption~\ref{asm:bcg-regularity} holds.  Let
\[
  S_\beta^t:=\sum_{j\ne i}\beta_j^t Z_j^t,
  \qquad
  \sum_{j\ne i}(\beta_j^t)^2=1.
\]
Then the coordinates of \(S_\beta^t\) are conditionally independent given
\(\mathcal G\), and
\[
  h(S_\beta^t\mid\mathcal G)
  =
  \sum_{\ell=1}^{d^*}h(S_{\beta,\ell}^t\mid\mathcal G),
\]
where
\[
  S_{\beta,\ell}^t:=\sum_{j\ne i}\beta_j^t Z_{j,\ell}^t.
\]
\end{lemma}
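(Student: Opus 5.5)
The plan is to prove the two claims in order: first conditional independence of the coordinates of $S_\beta^t$ given $\mathcal G$, then additivity of differential entropy over independent coordinates.

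\textbf{Step 1: the weights are constants.} Conditioning on $\mathcal G=(Q_i^t,\F_{t-1})$ makes the coefficients $\alpha_j^t$ deterministic, and hence also $c_i^t$ and $\beta_j^t=\alpha_j^t/\sqrt{c_i^t}$. The whitening matrix $K_G^t$ is $\F_{t-1}$-measurable by Assumption~\ref{asm:iid-fluctuations}. The query draw is independent of the client updates, so conditioning on $Q_i^t$ in addition to $\F_{t-1}$ leaves the joint law of $\{\xi_j^t\}_{j}$, and therefore of $\{Z_j^t\}_j$, unchanged. Consequently, given $\mathcal G$, each coordinate $S_{\beta,\ell}^t=\sum_{j\ne i}\beta_j^t Z_{j,\ell}^t$ is a fixed linear combination of the scalars $\{Z_{j,\ell}^t\}_{j\ne i}$ that sit in coordinate $\ell$ only.

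\textbf{Step 2: independence of the coordinate family.} I would consider the whole array $\{Z_{j,\ell}^t: j\ne i,\ \ell\in[d^*]\}$. Assumption~\ref{asm:iid-fluctuations} gives independence across clients $j$. Assumption~\ref{asm:bcg-regularity} gives independence of the coordinates within each $Z_j^t$. Together these make the whole array mutually independent given $\mathcal G$, because the joint law factorizes first over $j$ and then over $\ell$ within each $j$. Grouping the array by $\ell$ yields the $d^*$ disjoint blocks $B_\ell=\{Z_{j,\ell}^t\}_{j\ne i}$, which are therefore independent. Since each $S_{\beta,\ell}^t$ is a measurable function of $B_\ell$ alone, the coordinates $S_{\beta,1}^t,\dots,S_{\beta,d^*}^t$ are conditionally independent.

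\textbf{Step 3: entropy additivity.} Conditional independence means the conditional density of $S_\beta^t$ on the effective subspace is the product of the coordinate densities. Taking $-\E\log$ of this product gives $h(S_\beta^t\mid\mathcal G)=\sum_\ell h(S_{\beta,\ell}^t\mid\mathcal G)$.

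Two points need care. First, each coordinate must have a density with finite entropy. Since $\sum_{j\ne i}(\beta_j^t)^2=1$, at least one $\beta_j^t$ is nonzero, so $S_{\beta,\ell}^t$ is a convolution involving at least one absolutely continuous summand with finite entropy (Assumption~\ref{asm:bcg-regularity}). It therefore has a density, and its entropy is finite: it is at least that of the nonzero summand, and at most the Gaussian value for unit variance. Second, the step I expect to be the main obstacle is the \emph{justification of joint independence in Step 2}. Assumption~\ref{asm:bcg-regularity} is stated coordinatewise per client, so it has to be combined explicitly with the cross-client independence of Assumption~\ref{asm:iid-fluctuations}, and the argument must also check that conditioning on $Q_i^t$ does not break either property. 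The first thing I would write out is that check, using independence of the query process from the updates, followed by the two-level factorization of the joint law.
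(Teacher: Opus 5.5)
Your proposal is correct and follows essentially the same route as the paper. Both factorize the joint law of the array $\{Z_{j,\ell}^t\}$ over clients and coordinates, group it by coordinate $\ell$ so that each $S_{\beta,\ell}^t$ is a function of an independent block, and then integrate the product density to get additivity of entropy. Your extra checks are points the paper leaves implicit, and they make the argument slightly more complete: you verify that conditioning on $Q_i^t$ leaves the law of the updates unchanged, and that each coordinate has a finite-entropy density because some $\beta_j^t\neq 0$.
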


\begin{proof}
After conditioning on \(\mathcal G\), the coefficients \(\beta_j^t\) are
deterministic.  By Assumption~\ref{asm:bcg-regularity}, each vector
\(Z_j^t=(Z_{j,1}^t,\ldots,Z_{j,d^*}^t)\) has independent coordinates, and by
Assumption~\ref{asm:iid-fluctuations}, the vectors \(Z_j^t\) are independent
across \(j\).  Hence the full scalar collection
\[
  \{Z_{j,\ell}^t:j\ne i,\ell\in[d^*]\}
\]
has a joint density that factorizes as
\[
  \prod_{j\ne i}\prod_{\ell=1}^{d^*}f_\ell(z_{j,\ell}),
\]
where the same coordinate density \(f_\ell\) is used across clients because
the fluctuations are conditionally i.i.d.

For a fixed coordinate \(\ell\), the random variable
\[
  S_{\beta,\ell}^t=\sum_{j\ne i}\beta_j^tZ_{j,\ell}^t
\]
depends only on the collection \(\{Z_{j,\ell}^t:j\ne i\}\).  The collections
corresponding to different coordinates are independent because the joint
density factorizes over \(\ell\).  Therefore
\(S_{\beta,1}^t,\ldots,S_{\beta,d^*}^t\) are conditionally independent.

If \(p_\ell\) is the density of \(S_{\beta,\ell}^t\), the joint density of
\(S_\beta^t\) is
\[
  p(s_1,\ldots,s_{d^*})=\prod_{\ell=1}^{d^*}p_\ell(s_\ell).
\]
Thus
\[
\begin{aligned}
  h(S_\beta^t\mid\mathcal G)
  &=
  -\int \prod_{\ell=1}^{d^*}p_\ell(s_\ell)
  \log\left(\prod_{\ell=1}^{d^*}p_\ell(s_\ell)\right)ds \\
  &=
  -\sum_{\ell=1}^{d^*}\int \prod_{r=1}^{d^*}p_r(s_r)
  \log p_\ell(s_\ell)ds \\
  &=
  -\sum_{\ell=1}^{d^*}\int p_\ell(s_\ell)\log p_\ell(s_\ell)ds_\ell \\
  &=
  \sum_{\ell=1}^{d^*}h(S_{\beta,\ell}^t\mid\mathcal G).
\end{aligned}
\]
\end{proof}

\begin{lemma}[Weighted entropy lower bound for the normalized mask]
\label{lem:weighted-vector-entropy}
Under Assumptions~\ref{asm:iid-fluctuations} and
\ref{asm:bcg-regularity}, for any fixed query design with \(c_i^t>0\),
\[
  h(S_\beta^t\mid Q_i^t,\F_{t-1})
  \ge
  \frac{d^*}{2}\log(2\pi e)
  -
  \frac{C_\xi d^*}{M_{\mathrm{eff},i}^t},
\]
where \(C_\xi\) depends only on the one-dimensional regularity constants in
Assumption~\ref{asm:bcg-regularity}.
\end{lemma}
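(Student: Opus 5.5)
The plan is to reduce the vector statement to $d^*$ scalar statements using Lemma~\ref{lem:tensorization}, and then apply the scalar weighted entropic Berry--Esseen bound (Lemma~\ref{lem:scalar-bcg}) coordinate by coordinate. Throughout we condition on $\mathcal G=(Q_i^t,\F_{t-1})$. Under this conditioning the weights $\beta_j^t=\alpha_j^t/\sqrt{c_i^t}$ are deterministic and satisfy $\sum_{j\ne i}(\beta_j^t)^2=1$. Since $c_i^t>0$, at least one $\beta_j^t$ is nonzero. Lemma~\ref{lem:tensorization} gives
\[
h(S_\beta^t\mid\mathcal G)=\sum_{\ell=1}^{d^*}h(S_{\beta,\ell}^t\mid\mathcal G),
\qquad
S_{\beta,\ell}^t=\sum_{j\ne i}\beta_j^tZ_{j,\ell}^t,
\]
so it suffices to show, for each $\ell$,
\[
h(S_{\beta,\ell}^t\mid\mathcal G)\ge \tfrac12\log(2\pi e)-C_\xi/M_{\mathrm{eff},i}^t,
\]
and then sum over $\ell$.

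For a fixed coordinate $\ell$, set $X_j:=\beta_j^tZ_{j,\ell}^t$ and drop the indices with $\beta_j^t=0$. By Assumptions~\ref{asm:bcg-regularity} and~\ref{asm:iid-fluctuations}, the $X_j$ are conditionally independent and centered, with $\mathrm{Var}(X_j)=(\beta_j^t)^2$. Hence $V=\sum_j(\beta_j^t)^2=1$, matching the normalization required by Lemma~\ref{lem:scalar-bcg}. The fourth moments satisfy
\[
\E|X_j|^4=(\beta_j^t)^4\,\E|Z_{j,\ell}^t|^4\le M_{4,\ell}(\beta_j^t)^4.
\]

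Next we check the uniform entropic-distance condition of Lemma~\ref{lem:scalar-bcg}. This is the main point where care is needed: relative entropy to the matched Gaussian is invariant under nonzero scaling. If $Z_r$ is the Gaussian with the same mean and variance as $X_j$, then
\[
\KL(X_j\|Z_r)=\KL(\beta_j^tZ_{j,\ell}^t\|\beta_j^tG_\ell)=\KL(Z_{j,\ell}^t\|G_\ell)\le D_{0,\ell},
\]
using $h(aY)=h(Y)+\log|a|$ for both terms. The bound is therefore uniform in $j$ and in the (query-dependent) weights. Consequently the BCG constant depends only on $(M_{4,\ell},D_{0,\ell})$, and not on $\mathcal Q_i^t$.

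Lemma~\ref{lem:scalar-bcg} then yields
\[
h(S_{\beta,\ell}^t\mid\mathcal G)\ge\tfrac12\log(2\pi e)-C_{{\rm BCG},\ell}M_{4,\ell}\sum_{j\ne i}(\beta_j^t)^4 .
\]
By Definition~\ref{def:masking-quantities}, $\sum_{j\ne i}(\beta_j^t)^4=1/M_{\mathrm{eff},i}^t$. Set
\[
C_\xi:=\max_{\ell}C_{{\rm BCG},\ell}M_{4,\ell},
\]
or alternatively use the average over $\ell$ and absorb it into $C_\xi d^*$. Summing over $\ell$ gives the claim.

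The step most likely to need extra justification is the application of Lemma~\ref{lem:scalar-bcg} itself. The summands are non-identically distributed, and the relevant BCG constant must be uniform over all admissible weight vectors. The scale-invariance argument above is what handles this. If $C_\xi$ is to be independent of $d^*$, a supremum over $\ell$ of the regularity constants must be assumed finite; otherwise one keeps the sum $\sum_\ell C_{{\rm BCG},\ell}M_{4,\ell}$ in place of $C_\xi d^*$.
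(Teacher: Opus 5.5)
Your proposal is correct and follows the paper's proof step for step: tensorize with Lemma~\ref{lem:tensorization}, then apply Lemma~\ref{lem:scalar-bcg} to $X_j=\beta_j^tZ_{j,\ell}^t$ after dropping zero weights, using scale invariance of the entropic distance to the matched Gaussian and $\sum_{j\ne i}(\beta_j^t)^4=1/M_{\mathrm{eff},i}^t$, and finally take $C_\xi$ as a maximum over coordinates and sum over $\ell$. Your closing remark about uniformity in $d^*$ is a harmless refinement, since the paper's maximum is taken over the finitely many $\ell\in[d^*]$.
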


\begin{proof}
By Lemma~\ref{lem:tensorization},
\[
  h(S_\beta^t\mid Q_i^t,\F_{t-1})
  =
  \sum_{\ell=1}^{d^*}h(S_{\beta,\ell}^t\mid Q_i^t,\F_{t-1}).
\]
Fix a coordinate \(\ell\).  Define the scalar summands
\[
  X_j:=\beta_j^tZ_{j,\ell}^t,
  \qquad j\ne i.
\]
They are independent, centered, and their total variance is
\[
  \sum_{j\ne i}\mathrm{Var}(X_j)
  =
  \sum_{j\ne i}(\beta_j^t)^2\mathrm{Var}(Z_{j,\ell}^t)
  =
  \sum_{j\ne i}(\beta_j^t)^2
  =1.
\]
Zero weights can be removed from the sum, so the scalar BCG bound applies to
the nonzero summands.  Moreover,
\[
\begin{aligned}
  \sum_{j\ne i}\E |X_j|^4
  &=
  \sum_{j\ne i}(\beta_j^t)^4\E |Z_{j,\ell}^t|^4 \\
  &\le
  M_{4,\ell}\sum_{j\ne i}(\beta_j^t)^4
  =
  \frac{M_{4,\ell}}{M_{\mathrm{eff},i}^t}.
\end{aligned}
\]
Scaling by \(\beta_j^t\) does not change the entropic distance to the matching
Gaussian for nonzero \(\beta_j^t\), because both the variable and its matching
Gaussian are transformed by the same invertible scalar map.  Therefore
Lemma~\ref{lem:scalar-bcg} yields
\[
  h(S_{\beta,\ell}^t\mid Q_i^t,\F_{t-1})
  \ge
  \frac12\log(2\pi e)
  -
  \frac{C_\ell}{M_{\mathrm{eff},i}^t},
\]
for a constant \(C_\ell\) depending on \(D_{0,\ell}\) and \(M_{4,\ell}\).
Let \(C_\xi:=\max_\ell C_\ell\).  Summing over \(\ell=1,\ldots,d^*\),
\[
\begin{aligned}
  h(S_\beta^t\mid Q_i^t,\F_{t-1})
  &\ge
  \sum_{\ell=1}^{d^*}
  \left(\frac12\log(2\pi e)-\frac{C_\ell}{M_{\mathrm{eff},i}^t}\right) \\
  &\ge
  \frac{d^*}{2}\log(2\pi e)-\frac{C_\xi d^*}{M_{\mathrm{eff},i}^t}.
\end{aligned}
\]
\end{proof}

\begin{lemma}[Entropy lower bound for the FedAttr masking noise]
\label{lem:mask-entropy}
Under Assumptions~\ref{asm:iid-fluctuations} and
\ref{asm:bcg-regularity}, for any fixed query design with \(c_i^t>0\),
\[
  h(\eta_i^t\mid Q_i^t,\F_{t-1})
  \ge
  \frac{d^*}{2}\log(2\pi e\,c_i^t)
  +
  \frac12\log\det K_G^t
  -
  \frac{C_\xi d^*}{M_{\mathrm{eff},i}^t}.
\]
\end{lemma}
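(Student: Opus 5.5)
The plan is to reduce $\eta_i^t$ to the normalized whitened mask $S_\beta^t$ of Lemma~\ref{lem:weighted-vector-entropy} by an invertible linear map on the effective subspace. Then we use the change-of-variables rule for differential entropy, $h(AX)=h(X)+\log|\det A|$.

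First, conditioned on $(Q_i^t,\F_{t-1})$, the coefficients $\alpha_j^t$ are deterministic and $c_i^t>0$. Writing $\xi_j^t=(K_G^t)^{1/2}Z_j^t$, where $(K_G^t)^{1/2}$ acts on the $d^*$-dimensional effective subspace and all determinants are taken there as in Assumption~\ref{asm:iid-fluctuations}, we have
\[
\eta_i^t=\sum_{j\ne i}\alpha_j^t\xi_j^t=(K_G^t)^{1/2}\sum_{j\ne i}\alpha_j^t Z_j^t=\sqrt{c_i^t}\,(K_G^t)^{1/2}\sum_{j\ne i}\beta_j^t Z_j^t=\sqrt{c_i^t}\,(K_G^t)^{1/2}S_\beta^t,
\]
with $\beta_j^t=\alpha_j^t/\sqrt{c_i^t}$ as in Definition~\ref{def:masking-quantities}, so that $\sum_{j\ne i}(\beta_j^t)^2=1$.

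Second, the linear map $A=\sqrt{c_i^t}(K_G^t)^{1/2}$ is invertible on the $d^*$-dimensional subspace. Its log-determinant is
\[
\log|\det A|=\tfrac{d^*}{2}\log c_i^t+\tfrac12\log\det K_G^t .
\]
Since $A$ is deterministic given the conditioning, the scaling rule for conditional differential entropy gives
\[
h(\eta_i^t\mid Q_i^t,\F_{t-1})=h(S_\beta^t\mid Q_i^t,\F_{t-1})+\tfrac{d^*}{2}\log c_i^t+\tfrac12\log\det K_G^t.
\]

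Third, we substitute the bound $h(S_\beta^t\mid Q_i^t,\F_{t-1})\ge \frac{d^*}{2}\log(2\pi e)-C_\xi d^*/M_{\mathrm{eff},i}^t$ from Lemma~\ref{lem:weighted-vector-entropy}. Merging $\frac{d^*}{2}\log(2\pi e)+\frac{d^*}{2}\log c_i^t=\frac{d^*}{2}\log(2\pi e\,c_i^t)$ yields exactly the claimed inequality.

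The only delicate point is making the change of variables rigorous when $K_G^t$ is rank-deficient. The fluctuations $\xi_j^t$, and hence $\eta_i^t$, live on the range of $K_G^t$, and their entropy is defined relative to Lebesgue measure on that $d^*$-dimensional subspace. I would fix an orthonormal basis of this range, so that $(K_G^t)^{1/2}$ becomes a positive-definite $d^*\times d^*$ matrix and the whitening in Assumption~\ref{asm:bcg-regularity} is well defined. With that choice, $\det$ means the pseudo-determinant, and the standard formula $h(AX)=h(X)+\log|\det A|$ applies verbatim. Everything else is bookkeeping.
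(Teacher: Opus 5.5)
Your proposal is correct and follows the paper's proof essentially line by line. Like the paper, you write $\eta_i^t=\sqrt{c_i^t}\,(K_G^t)^{1/2}S_\beta^t$, apply $h(AX)=h(X)+\log|\det A|$ with $\log|\det A|=\tfrac{d^*}{2}\log c_i^t+\tfrac12\log\det K_G^t$, and substitute Lemma~\ref{lem:weighted-vector-entropy}; your extra remark about fixing an orthonormal basis of the range of $K_G^t$ only makes explicit the paper's convention that all determinants and entropies are taken on the $d^*$-dimensional effective subspace.
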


\begin{proof}
Since \(\xi_j^t=(K_G^t)^{1/2}Z_j^t\),
\[
\begin{aligned}
  \eta_i^t
  &=
  \sum_{j\ne i}\alpha_j^t\xi_j^t \\
  &=
  (K_G^t)^{1/2}\sum_{j\ne i}\alpha_j^tZ_j^t \\
  &=
  \sqrt{c_i^t}(K_G^t)^{1/2}
  \sum_{j\ne i}\beta_j^tZ_j^t \\
  &=
  \sqrt{c_i^t}(K_G^t)^{1/2}S_\beta^t.
\end{aligned}
\]
For an invertible matrix \(A\), differential entropy satisfies
\[
  h(AX)=h(X)+\log|\det A|.
\]
Applying this identity to
\[
  A=\sqrt{c_i^t}(K_G^t)^{1/2},
\]
we obtain
\[
  h(\eta_i^t\mid Q_i^t,\F_{t-1})
  =
  h(S_\beta^t\mid Q_i^t,\F_{t-1})
  +
  \frac{d^*}{2}\log c_i^t
  +
  \frac12\log\det K_G^t.
\]
Substituting Lemma~\ref{lem:weighted-vector-entropy} gives the result.
\end{proof}

\subsection{Step 4: Subtract the Two Bounds.}
\label{sec:step4}

\begin{theorem}[Fixed-query release-level MI leakage]
\label{thm:fixed-query}
Under Assumptions~\ref{asm:bcg-regularity} and
\ref{asm:iid-fluctuations}, for any fixed query design \(Q_i^t\), independent
of the updates, with \(c_i^t(Q_i^t)>0\),
\[
  I(\Delta_i^t;\widehat\Delta_i^t\mid Q_i^t,\F_{t-1})
  \le
  \frac{d^*}{2}\log\left(1+\frac1{c_i^t}\right)
  +
  \frac{C_\xi d^*}{M_{\mathrm{eff},i}^t}.
\]
\end{theorem}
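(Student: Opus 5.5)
The plan is to assemble the three preceding stages directly. Fix a query design $Q_i^t$ with $c_i^t>0$; since it is drawn independently of the updates, conditioning on $(Q_i^t,\F_{t-1})$ makes the coefficients $\alpha_j^t$ deterministic while leaving the conditional law of $\{\xi_j^t\}$ unchanged. By Lemma~\ref{lem:signal-noise}, the leakage equals $h(\xi_i^t+\eta_i^t\mid Q_i^t,\F_{t-1})-h(\eta_i^t\mid Q_i^t,\F_{t-1})$. Both entropies are taken on the $d^*$-dimensional effective subspace, where $K_G^t$ is invertible. They are finite: the upper bound holds by Gaussian maximum entropy, and the lower bound holds by Lemma~\ref{lem:mask-entropy}. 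So the difference is well defined and the bounds can simply be subtracted.

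For the first term we invoke Lemma~\ref{lem:upper-signal-noise}, which gives the bound
\[
\frac12\log\det\bigl(2\pi e(1+c_i^t)K_G^t\bigr)=\frac{d^*}{2}\log\bigl(2\pi e(1+c_i^t)\bigr)+\frac12\log\det K_G^t.
\]
Here the scalar factor is pulled out of a $d^*\times d^*$ determinant. For the second term we invoke Lemma~\ref{lem:mask-entropy}, whose lower bound is
\[
\frac{d^*}{2}\log(2\pi e\,c_i^t)+\frac12\log\det K_G^t-\frac{C_\xi d^*}{M_{\mathrm{eff},i}^t}.
\]

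Subtracting, the $\frac12\log\det K_G^t$ and $\log(2\pi e)$ contributions cancel. What remains is
\[
\frac{d^*}{2}\log\frac{1+c_i^t}{c_i^t}+\frac{C_\xi d^*}{M_{\mathrm{eff},i}^t}=\frac{d^*}{2}\log\Bigl(1+\frac1{c_i^t}\Bigr)+\frac{C_\xi d^*}{M_{\mathrm{eff},i}^t},
\]
which is the claimed bound.

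The only delicate point is bookkeeping rather than analysis. The determinant and the entropies must all be restricted consistently to the effective subspace, so that the scaling identity $h(AX)=h(X)+\log|\det A|$ used in Lemma~\ref{lem:mask-entropy} and the $\log\det$ in Lemma~\ref{lem:upper-signal-noise} refer to the same $d^*$-dimensional quantities. In addition, the condition $c_i^t>0$ is needed both for $M_{\mathrm{eff},i}^t$ to be defined and for the normalization $\beta_j^t=\alpha_j^t/\sqrt{c_i^t}$. Once that is checked, the theorem follows in a few lines.
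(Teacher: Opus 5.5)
Your proposal is correct and is essentially the paper's proof: you decompose the leakage via Lemma~\ref{lem:signal-noise}, upper-bound $h(\xi_i^t+\eta_i^t)$ by Lemma~\ref{lem:upper-signal-noise}, lower-bound $h(\eta_i^t)$ by Lemma~\ref{lem:mask-entropy}, and subtract so that the $\frac12\log\det K_G^t$ and $\log(2\pi e)$ terms cancel. Your extra remarks, on the finiteness of the two entropies and on keeping every determinant and entropy on the same $d^*$-dimensional effective subspace, are valid bookkeeping that the paper leaves implicit.
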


\begin{proof}
Start from Lemma~\ref{lem:signal-noise}:
\[
  I(\Delta_i^t;\widehat\Delta_i^t\mid Q_i^t,\F_{t-1})
  =
  h(\xi_i^t+\eta_i^t\mid Q_i^t,\F_{t-1})
  -
  h(\eta_i^t\mid Q_i^t,\F_{t-1}).
\]
By Lemma~\ref{lem:upper-signal-noise},
\[
  h(\xi_i^t+\eta_i^t\mid Q_i^t,\F_{t-1})
  \le
  \frac12\log\det\bigl(2\pi e(1+c_i^t)K_G^t\bigr).
\]
Expanding the determinant on the \(d^*\)-dimensional effective subspace,
\[
  \frac12\log\det\bigl(2\pi e(1+c_i^t)K_G^t\bigr)
  =
  \frac{d^*}{2}\log(2\pi e(1+c_i^t))
  +
  \frac12\log\det K_G^t.
\]
By Lemma~\ref{lem:mask-entropy},
\[
  h(\eta_i^t\mid Q_i^t,\F_{t-1})
  \ge
  \frac{d^*}{2}\log(2\pi e\,c_i^t)
  +
  \frac12\log\det K_G^t
  -
  \frac{C_\xi d^*}{M_{\mathrm{eff},i}^t}.
\]
Subtracting the lower bound on \(h(\eta_i^t)\) from the upper bound on
\(h(\xi_i^t+\eta_i^t)\), the terms
\(\frac12\log\det K_G^t\) and \(\frac{d^*}{2}\log(2\pi e)\) cancel.  Hence
\[
\begin{aligned}
  I(\Delta_i^t;\widehat\Delta_i^t\mid Q_i^t,\F_{t-1})
  &\le
  \frac{d^*}{2}\log\left(\frac{1+c_i^t}{c_i^t}\right)
  +
  \frac{C_\xi d^*}{M_{\mathrm{eff},i}^t} \\
  &=
  \frac{d^*}{2}\log\left(1+\frac1{c_i^t}\right)
  +
  \frac{C_\xi d^*}{M_{\mathrm{eff},i}^t}.
\end{aligned}
\]
\end{proof}

\subsection{Proof of Theorem~\ref{thm:diffuse-query} 
(Release-level MI leakage)}
\begin{proof}
Theorem~\ref{thm:fixed-query} gives
\[
  I
  \le
  \frac{d^*}{2}\log\left(1+\frac1{c_i^t}\right)
  +
  \frac{C_\xi d^*}{M_{\mathrm{eff},i}^t}.
\]
On the acceptance event,
\[
  c_i^t\ge aN,
  \qquad
  M_{\mathrm{eff},i}^t\ge aN.
\]
Since \(x\mapsto \log(1+1/x)\) is decreasing for \(x>0\),
\[
  \log\left(1+\frac1{c_i^t}\right)
  \le
  \log\left(1+\frac1{aN}\right),
\]
and
\[
  \frac1{M_{\mathrm{eff},i}^t}\le \frac1{aN}.
\]
Substitution proves the displayed bound.  The order statement follows from
\(\log(1+x)\le x\) for \(x\ge0\).
\end{proof}

\section{Sufficient-condition Analysis for Differential Scoring}
\label{app:differential-scoring}

This section connects the estimator guarantees in
Theorems~\ref{thm:unbiased}--\ref{thm:variance} to the score-separation
condition used by the Stouffer analysis in Theorem~\ref{thm:stouffer}.  The
main text uses the differential score
\[
    z_i^{(t)}
    =
    \mathrm{SCORE}(w^{t-1}+\widehat\Delta_i^t;P_t)
    -
    \mathrm{SCORE}(w^{t-1};P_t)
\]
as the per-round evidence for client-level attribution.  However,
\(z_i^{(t)}\) is computed from the SA-based estimate
\(\widehat\Delta_i^t\), rather than from the true client update
\(\Delta_i^t\).  The purpose of this section is to show that, under local
regularity of the score function, this observed differential score is close to
the oracle single-client differential score
\[
    \psi_i^{(t)}
    :=
    F_t(w^{t-1}+\Delta_i^t)-F_t(w^{t-1}),
\]
with an error controlled by the variance of the paired-subset estimator.

The argument has three steps.  First, differential scoring exactly cancels any
additive score baseline shared by all clients in round \(t\), explaining why
subtracting \(F_t(w^{t-1})\) removes the watermark signal already accumulated
in the global model.  Second, writing
\(\zeta_i^t:=\widehat\Delta_i^t-\Delta_i^t\), the only difference between the
FedAttr score and the oracle score is
\[
    z_i^{(t)}-\psi_i^{(t)}
    =
    F_t(w^{t-1}+\Delta_i^t+\zeta_i^t)
    -
    F_t(w^{t-1}+\Delta_i^t).
\]
Third, the Lipschitz or smoothness regularity of \(F_t\), together with the
accepted-law unbiasedness and variance bound of
\(\widehat\Delta_i^t\), bounds this score error.  Consequently, if the oracle
differential scores separate watermarked and benign clients, then the observed
FedAttr differential scores inherit the same separation up to a
variance-controlled error term \(R\).

This section therefore gives a sufficient-condition analysis for the
mean-separation part of Assumption~\ref{asm:separation}.  It does not prove
that the watermark detector separates clients unconditionally, nor does it
prove the sub-Gaussian residual condition; the latter remains the
score-level condition used in Theorem~\ref{thm:stouffer} and is empirically
validated in Figure~\ref{fig:mechanism}.

Throughout this section,
the prompt set \(P_t\) is treated as fixed in round \(t\).  We write
\[
    F_t(w):=\mathrm{SCORE}(w;P_t).
\]
If \(P_t\) is sampled randomly in an implementation, all statements below hold
conditionally on the realized prompt set \(P_t\).

Recall that FedAttr computes
\[
    z_i^{(t)}
    =
    F_t(w^{t-1}+\widehat\Delta_i^t)-F_t(w^{t-1}).
\]
The goal of this step is to remove the watermark baseline already present in
the current global model \(w^{t-1}\), and to isolate the incremental
contribution of client \(i\)'s current-round update.

\paragraph{Accepted-query convention.}
Let \(A_i^t\) denote the accepted-query event for target client \(i\):
\[
    A_i^t
    :=
    \left\{
        c_i^t\ge aN,\quad M_{\mathrm{eff},i}^t\ge aN
    \right\}.
\]
Assume \(p_{a,i}^t:=\Pr(A_i^t)>0\).  Since FedAttr resamples query designs
until \(A_i^t\) holds, every released estimator
\(\widehat\Delta_i^t\) and every released score \(z_i^{(t)}\) is generated
under the accepted-query distribution.  Equivalently, expectations involving
\(\widehat\Delta_i^t\) or \(z_i^{(t)}\) are conditional on \(A_i^t\).  We use
the shorthand
\[
    \mathbb E_a[\cdot\mid\mathcal H]
    :=
    \mathbb E[\cdot\mid\mathcal H,A_i^t],
\]
for any conditioning sigma-field \(\mathcal H\) not containing the current
query draw.

Define the \emph{oracle differential score} that would be obtained if the
true client update \(\Delta_i^t\) were available:
\begin{equation}
\label{eq:oracle-diff-score-simple}
    \psi_i^{(t)}
    :=
    F_t(w^{t-1}+\Delta_i^t)-F_t(w^{t-1}).
\end{equation}
Let
\begin{equation}
\label{eq:update-error-simple}
    \zeta_i^t
    :=
    \widehat\Delta_i^t-\Delta_i^t
\end{equation}
be the update-estimation error.  Then
\begin{equation}
\label{eq:score-error-decomp-simple}
    z_i^{(t)}-\psi_i^{(t)}
    =
    F_t(w^{t-1}+\Delta_i^t+\zeta_i^t)
    -
    F_t(w^{t-1}+\Delta_i^t).
\end{equation}
Thus the gap between FedAttr's observed differential score and the oracle
single-client differential score is caused only by the update-estimation error
\(\zeta_i^t\).

\begin{lemma}[Exact cancellation of additive baselines]
\label{lem:baseline-cancellation-simple}
Let \(b_t\) be any scalar depending only on the round \(t\), the prompt set
\(P_t\), and the past history.  Define a shifted score
\[
    \widetilde F_t(w):=F_t(w)+b_t.
\]
Then direct scoring is shifted by \(b_t\), while differential scoring is
unchanged:
\[
\begin{aligned}
    \widetilde F_t(w^{t-1}+\widehat\Delta_i^t)
    -
    \widetilde F_t(w^{t-1})
    &=
    F_t(w^{t-1}+\widehat\Delta_i^t)
    -
    F_t(w^{t-1}).
\end{aligned}
\]
\end{lemma}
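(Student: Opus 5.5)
The claim follows directly from the definition $\widetilde F_t(w):=F_t(w)+b_t$, evaluated at the two points that enter the differential score. The only property of $b_t$ we use is that it is the same scalar at both evaluation points. Since $b_t$ depends only on $t$, $P_t$ and the past history $\mathcal F_{t-1}$, it depends on neither the query draw $\mathcal Q_i^t$ nor the argument $w$. So once we condition on $(P_t,\mathcal F_{t-1})$, $b_t$ is a fixed real number.

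First we handle direct scoring. Evaluating at $w^{t-1}+\widehat\Delta_i^t$ gives $\widetilde F_t(w^{t-1}+\widehat\Delta_i^t)=F_t(w^{t-1}+\widehat\Delta_i^t)+b_t$. Hence the direct score is shifted by exactly $b_t$, for every client $i$ alike. This is the common watermark baseline that inflates benign clients' scores.

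Next we form the difference of the shifted score at the two points:
\[
\widetilde F_t(w^{t-1}+\widehat\Delta_i^t)-\widetilde F_t(w^{t-1})
=\bigl(F_t(w^{t-1}+\widehat\Delta_i^t)+b_t\bigr)-\bigl(F_t(w^{t-1})+b_t\bigr).
\]
The two copies of $b_t$ cancel, which leaves the unshifted differential score $z_i^{(t)}$. This identity holds pointwise for every realization of $\widehat\Delta_i^t$, and therefore for every accepted query design. It then also holds under the accepted-law expectation $\mathbb E_a[\cdot\mid\mathcal H]$ and after any conditioning.

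The same computation applied to the oracle score $\psi_i^{(t)}$ in \eqref{eq:oracle-diff-score-simple} shows that it is likewise invariant to the shift.

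No step here is a real obstacle. The one point needing care is that the argument is valid only because $b_t$ does not depend on $w$ or on the current query draw. We check this from the lemma's hypothesis that $b_t$ depends only on the round, the prompt set and the past history.
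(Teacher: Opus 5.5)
Your proof is correct and is exactly the paper's argument: substitute $\widetilde F_t = F_t + b_t$ at both evaluation points and subtract, so the two copies of $b_t$ cancel. One small overstatement: the identity only needs $b_t$ to be the same at both arguments, i.e.\ not to vary with $w$; dependence on the current query draw would be harmless, since both evaluations share the same realization.
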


\begin{proof}
By direct subtraction,
\[
\begin{aligned}
    \widetilde F_t(w^{t-1}+\widehat\Delta_i^t)
    -
    \widetilde F_t(w^{t-1})
    &=
    \bigl(F_t(w^{t-1}+\widehat\Delta_i^t)+b_t\bigr)
    -
    \bigl(F_t(w^{t-1})+b_t\bigr)\\
    &=
    F_t(w^{t-1}+\widehat\Delta_i^t)
    -
    F_t(w^{t-1}).
\end{aligned}
\]
\end{proof}

\begin{assumption}[Local regularity of the score]
\label{asm:score-regularity-simple}
For each round \(t\), the score function \(F_t\) is locally regular on the
region visited by FedAttr.  Specifically, there exists \(L_t<\infty\) such
that, for every target client \(i\),
\[
    |F_t(x)-F_t(y)|\le L_t\|x-y\|
\]
for all points \(x,y\) on the line segment between
\(w^{t-1}+\Delta_i^t\) and \(w^{t-1}+\widehat\Delta_i^t\).

When the second-order bound is invoked, we further assume that \(F_t\) is
differentiable and has \(H_t\)-Lipschitz gradient on the same local region:
\[
    \|\nabla F_t(x)-\nabla F_t(y)\|
    \le
    H_t\|x-y\|.
\]
\end{assumption}

\begin{theorem}[Approximation of oracle differential scores]
\label{thm:differential-score-approx-simple}
Fix a round \(t\) and target client \(i\).  Condition on the past
\(\mathcal F_{t-1}\) and on the realized client updates
\(\Delta_1^t,\ldots,\Delta_K^t\).  Define
\begin{equation}
\label{eq:score-variance-proxy-simple}
    B_i^t
    :=
    \frac{1}{p_{a,i}^t}
    \cdot
    \frac{2}{M}
    \cdot
    \frac{N(K-1-N)}{K-2}
    \operatorname{tr}(\Sigma_{-i}^t).
\end{equation}
Under Assumption~\ref{asm:score-regularity-simple},
\begin{equation}
\label{eq:score-pointwise-error-simple}
    |z_i^{(t)}-\psi_i^{(t)}|
    \le
    L_t\|\widehat\Delta_i^t-\Delta_i^t\|.
\end{equation}
Consequently,
\begin{equation}
\label{eq:score-mean-error-lipschitz-simple}
    \left|
    \mathbb E_a
    \left[
        z_i^{(t)}
        \mid
        \Delta_1^t,\ldots,\Delta_K^t,\mathcal F_{t-1}
    \right]
    -
    \psi_i^{(t)}
    \right|
    \le
    L_t\sqrt{B_i^t}.
\end{equation}
Moreover, if \(F_t\) has \(H_t\)-Lipschitz gradient on the same local region,
then the conditional mean bias is second order in the estimator variance:
\begin{equation}
\label{eq:score-mean-error-smooth-simple}
    \left|
    \mathbb E_a
    \left[
        z_i^{(t)}
        \mid
        \Delta_1^t,\ldots,\Delta_K^t,\mathcal F_{t-1}
    \right]
    -
    \psi_i^{(t)}
    \right|
    \le
    \frac{H_t}{2}B_i^t.
\end{equation}
\end{theorem}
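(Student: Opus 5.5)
Three pieces are needed: the pointwise bound (\ref{eq:score-pointwise-error-simple}), the first-order mean bound (\ref{eq:score-mean-error-lipschitz-simple}), and the second-order mean bound (\ref{eq:score-mean-error-smooth-simple}). All three start from the error decomposition (\ref{eq:score-error-decomp-simple}) and use the properties of $\zeta_i^t$ established in Theorems~\ref{thm:unbiased}--\ref{thm:variance}.

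\emph{Pointwise bound.} By (\ref{eq:score-error-decomp-simple}),
\[
z_i^{(t)}-\psi_i^{(t)}=F_t(w^{t-1}+\Delta_i^t+\zeta_i^t)-F_t(w^{t-1}+\Delta_i^t).
\]
Both arguments are endpoints of the segment on which Assumption~\ref{asm:score-regularity-simple} gives the Lipschitz property. Hence $|z_i^{(t)}-\psi_i^{(t)}|\le L_t\|\zeta_i^t\|$, which is (\ref{eq:score-pointwise-error-simple}).

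\emph{First-order mean bound.} Condition on $\mathcal F_{t-1}$ and $\Delta_1^t,\ldots,\Delta_K^t$. Then $\psi_i^{(t)}$ and $w^{t-1}$ are fixed, and the only remaining randomness is the accepted query draw, which is independent of the updates. Apply Jensen's inequality and then Cauchy--Schwarz:
\[
\bigl|\mathbb E_a[z_i^{(t)}]-\psi_i^{(t)}\bigr|\le L_t\,\mathbb E_a\|\zeta_i^t\|\le L_t\sqrt{\mathbb E_a\|\zeta_i^t\|^2}.
\]
By Theorem~\ref{thm:unbiased}, $\zeta_i^t$ has accepted-law mean zero. Therefore $\mathbb E_a\|\zeta_i^t\|^2=\operatorname{tr}\operatorname{Cov}(\widehat\Delta_i^t\mid\cdot,A_i^t)$. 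Taking traces in the Loewner bound of Theorem~\ref{thm:variance} (trace is monotone under $\preceq$) gives exactly $B_i^t$. This yields (\ref{eq:score-mean-error-lipschitz-simple}).

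\emph{Second-order mean bound.} Use the standard descent-lemma Taylor estimate for an $H_t$-smooth function along the segment:
\[
\bigl|F_t(x+\zeta)-F_t(x)-\nabla F_t(x)^\top\zeta\bigr|\le \tfrac{H_t}{2}\|\zeta\|^2,\qquad x=w^{t-1}+\Delta_i^t.
\]
Take $\mathbb E_a$ under the same conditioning. The gradient $\nabla F_t(x)$ is deterministic, so the linear term vanishes because $\mathbb E_a[\zeta_i^t]=0$ by Theorem~\ref{thm:unbiased}. The remainder is bounded by $\frac{H_t}{2}\mathbb E_a\|\zeta_i^t\|^2\le \frac{H_t}{2}B_i^t$, as in the previous step.

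\emph{Main obstacle: where the regularity holds.} The Taylor step needs the gradient and Hessian-Lipschitz bound on the segment between $w^{t-1}+\Delta_i^t$ and $w^{t-1}+\widehat\Delta_i^t$. That segment is random, because $\widehat\Delta_i^t$ depends on the query draw. The bound is therefore only valid if the constants hold on every segment visited by an accepted query, not just on average. I will read Assumption~\ref{asm:score-regularity-simple} as holding uniformly over all accepted query realizations, which its wording (``the region visited by FedAttr'') supports.

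I will also confirm that $x$ and $\nabla F_t(x)$ are measurable with respect to the conditioning. This is what allows the linear term to be pulled out of the expectation and killed by unbiasedness.
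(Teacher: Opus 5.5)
Your proposal is correct and follows the paper's proof step for step. It uses the same error decomposition through $\zeta_i^t=\widehat\Delta_i^t-\Delta_i^t$, the Lipschitz pointwise bound, Jensen plus unbiasedness (Theorem~\ref{thm:unbiased}) to equate $\mathbb{E}_a\|\zeta_i^t\|^2$ with the trace of the accepted covariance bounded in Theorem~\ref{thm:variance}, and the first-order Taylor remainder whose linear term vanishes by unbiasedness; your remark that $L_t$ and $H_t$ must hold uniformly over every segment generated by an accepted query draw (so they can be pulled outside $\mathbb{E}_a$) makes explicit an assumption the paper uses only implicitly.
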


\begin{proof}
Let
\[
    \zeta_i^t:=\widehat\Delta_i^t-\Delta_i^t.
\]
By Eq.~\eqref{eq:score-error-decomp-simple},
\[
    z_i^{(t)}-\psi_i^{(t)}
    =
    F_t(w^{t-1}+\Delta_i^t+\zeta_i^t)
    -
    F_t(w^{t-1}+\Delta_i^t).
\]
The Lipschitz part of Assumption~\ref{asm:score-regularity-simple} gives
\[
    |z_i^{(t)}-\psi_i^{(t)}|
    \le
    L_t\|\zeta_i^t\|,
\]
which proves Eq.~\eqref{eq:score-pointwise-error-simple}.

Taking accepted-law conditional expectation and applying Jensen's inequality,
\[
\begin{aligned}
    \left|
    \mathbb E_a
    \left[
        z_i^{(t)}
        \mid
        \Delta_1^t,\ldots,\Delta_K^t,\mathcal F_{t-1}
    \right]
    -
    \psi_i^{(t)}
    \right|
    &\le
    L_t
    \mathbb E_a
    \left[
        \|\zeta_i^t\|
        \mid
        \Delta_1^t,\ldots,\Delta_K^t,\mathcal F_{t-1}
    \right]\\
    &\le
    L_t
    \sqrt{
        \mathbb E_a
        \left[
            \|\zeta_i^t\|^2
            \mid
            \Delta_1^t,\ldots,\Delta_K^t,\mathcal F_{t-1}
        \right]
    }.
\end{aligned}
\]
By Theorem~\ref{thm:unbiased},
\[
    \mathbb E_a[
        \zeta_i^t
        \mid
        \Delta_1^t,\ldots,\Delta_K^t,\mathcal F_{t-1}
    ]=0.
\]
Therefore,
\[
\begin{aligned}
    \mathbb E_a
    \left[
        \|\zeta_i^t\|^2
        \mid
        \Delta_1^t,\ldots,\Delta_K^t,\mathcal F_{t-1}
    \right]
    &=
    \operatorname{tr}
    \operatorname{Cov}
    \left(
        \widehat\Delta_i^t
        \mid
        \Delta_1^t,\ldots,\Delta_K^t,\mathcal F_{t-1},A_i^t
    \right).
\end{aligned}
\]
Applying Theorem~\ref{thm:variance} gives
\[
    \mathbb E_a
    \left[
        \|\zeta_i^t\|^2
        \mid
        \Delta_1^t,\ldots,\Delta_K^t,\mathcal F_{t-1}
    \right]
    \le
    B_i^t.
\]
This proves Eq.~\eqref{eq:score-mean-error-lipschitz-simple}.

For the second-order bound, set
\[
    x_i^t:=w^{t-1}+\Delta_i^t.
\]
By \(H_t\)-smoothness,
\[
    F_t(x_i^t+\zeta_i^t)
    =
    F_t(x_i^t)
    +
    \langle\nabla F_t(x_i^t),\zeta_i^t\rangle
    +
    R_i^t,
\]
where
\[
    |R_i^t|\le \frac{H_t}{2}\|\zeta_i^t\|^2.
\]
Taking accepted-law conditional expectation, the linear term vanishes because
\[
    \mathbb E_a[
        \zeta_i^t
        \mid
        \Delta_1^t,\ldots,\Delta_K^t,\mathcal F_{t-1}
    ]=0.
\]
Thus
\[
\begin{aligned}
    \left|
    \mathbb E_a
    \left[
        z_i^{(t)}
        \mid
        \Delta_1^t,\ldots,\Delta_K^t,\mathcal F_{t-1}
    \right]
    -
    \psi_i^{(t)}
    \right|
    &\le
    \frac{H_t}{2}
    \mathbb E_a
    \left[
        \|\zeta_i^t\|^2
        \mid
        \Delta_1^t,\ldots,\Delta_K^t,\mathcal F_{t-1}
    \right]\\
    &\le
    \frac{H_t}{2}B_i^t.
\end{aligned}
\]
This proves Eq.~\eqref{eq:score-mean-error-smooth-simple}.
\end{proof}

\begin{corollary}[Transfer of oracle separation to FedAttr scores]
\label{cor:oracle-to-fedattr-separation-simple}
Define the accepted-law oracle conditional mean
\[
    \bar\psi_{i,a}^{(t)}
    :=
    \mathbb E_a[\psi_i^{(t)}\mid\mathcal F_{t-1}].
\]
Suppose there exist constants \(m_0>\epsilon_0\ge0\) such that, for every
round \(t\),
\[
    \bar\psi_{i,a}^{(t)}\ge m_0
    \quad
    \text{if client }i\text{ is watermarked},
\]
and
\[
    |\bar\psi_{i,a}^{(t)}|\le \epsilon_0
    \quad
    \text{if client }i\text{ is benign}.
\]
Since \(A_i^t\) depends only on the sampled query identities and not on client
updates, this accepted-law oracle condition coincides with the usual oracle
condition whenever \(\psi_i^{(t)}\) is independent of the current query design
given \(\mathcal F_{t-1}\).

Let
\[
    \mu_{i,a}^{(t)}
    :=
    \mathbb E_a[z_i^{(t)}\mid\mathcal F_{t-1}]
\]
be the accepted-law conditional mean of the FedAttr differential score, and
define
\[
    \bar B_{i,a}^t
    :=
    \mathbb E_a[B_i^t\mid\mathcal F_{t-1}].
\]
Under the Lipschitz bound in
Theorem~\ref{thm:differential-score-approx-simple}, set
\[
    R_i^t:=L_t\sqrt{\bar B_{i,a}^t}.
\]
Under the smoothness bound, one may instead use
\[
    R_i^t:=\frac{H_t}{2}\bar B_{i,a}^t.
\]
If \(R_i^t\le R\) uniformly over all clients and rounds, then
\[
    \mu_{i,a}^{(t)}\ge m_0-R
    \quad
    \text{if client }i\text{ is watermarked},
\]
and
\[
    |\mu_{i,a}^{(t)}|\le \epsilon_0+R
    \quad
    \text{if client }i\text{ is benign}.
\]
Therefore, under the accepted-query law, the mean-separation part of
Assumption~\ref{asm:separation} holds with
\[
    m:=m_0-R,
    \qquad
    \epsilon:=\epsilon_0+R,
\]
provided
\[
    m_0-R>\epsilon_0+R.
\]
\end{corollary}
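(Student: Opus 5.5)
The plan is to start from Theorem~\ref{thm:differential-score-approx-simple} and take one more conditional expectation through the tower property under the accepted-query law. Fix a round $t$ and a client $i$. Condition first on $\mathcal F_{t-1}$ together with the realized updates $\Delta_1^t,\ldots,\Delta_K^t$. That theorem then gives
\[
\bigl|\mathbb E_a[z_i^{(t)}\mid \Delta_1^t,\ldots,\Delta_K^t,\mathcal F_{t-1}]-\psi_i^{(t)}\bigr|\le \rho_i^t ,
\]
where $\rho_i^t$ is either $L_t\sqrt{B_i^t}$ (the Lipschitz case) or $\frac{H_t}{2}B_i^t$ (the smooth case).

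Next I integrate out the updates given $\mathcal F_{t-1}$. The query design is drawn independently of the updates, and $A_i^t$ is a function of the query identities alone. Hence conditioning on $A_i^t$ leaves the conditional law of $(\Delta_j^t)_j$ given $\mathcal F_{t-1}$ unchanged. The tower property then yields $\mu_{i,a}^{(t)}=\mathbb E_a\bigl[\mathbb E_a[z_i^{(t)}\mid \Delta^t,\mathcal F_{t-1}]\mid\mathcal F_{t-1}\bigr]$, and by definition $\bar\psi_{i,a}^{(t)}=\mathbb E_a[\psi_i^{(t)}\mid\mathcal F_{t-1}]$. Subtracting the two and using Jensen ($|\mathbb E X|\le \mathbb E|X|$) gives
\[
|\mu_{i,a}^{(t)}-\bar\psi_{i,a}^{(t)}|\le \mathbb E_a[\rho_i^t\mid \mathcal F_{t-1}] .
\]

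The main obstacle is converting this into $R_i^t$ as the corollary defines it. In the smooth case $\rho_i^t$ is linear in $B_i^t$, so the right-hand side equals $\frac{H_t}{2}\bar B_{i,a}^t=R_i^t$ exactly, provided $H_t$ is $\mathcal F_{t-1}$-measurable; I will take the regularity constants to be uniform or measurable in this sense. In the Lipschitz case I use concavity of the square root and apply Jensen once more: $\mathbb E_a[L_t\sqrt{B_i^t}\mid\mathcal F_{t-1}]\le L_t\sqrt{\bar B_{i,a}^t}=R_i^t$. One subtlety is that $p_{a,i}^t$ inside $B_i^t$ is a function of the query law only, so it is deterministic. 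The remaining randomness in $B_i^t$ comes from $\operatorname{tr}\Sigma_{-i}^t$, and Jensen handles that.

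Finally, apply $R_i^t\le R$ together with the triangle inequality. For a watermarked client, $\mu_{i,a}^{(t)}\ge\bar\psi_{i,a}^{(t)}-R\ge m_0-R$. For a benign client, $|\mu_{i,a}^{(t)}|\le|\bar\psi_{i,a}^{(t)}|+R\le\epsilon_0+R$. Setting $m=m_0-R$ and $\epsilon=\epsilon_0+R$, the condition $m_0-R>\epsilon_0+R$ gives $m>\epsilon\ge0$, which is exactly part (i) of Assumption~\ref{asm:separation} under the accepted law. For the remark that the accepted-law oracle condition coincides with the usual one, note that $\psi_i^{(t)}$ depends only on $(w^{t-1},\Delta_i^t)$, which is independent of the current query draw; conditioning on $A_i^t$ therefore does not change its conditional mean.
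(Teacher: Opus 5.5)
Your proposal is correct and follows essentially the same route as the paper: apply Theorem~\ref{thm:differential-score-approx-simple} pointwise given the realized updates, integrate over the updates under the accepted law, and use Jensen's inequality to reach $R_i^t\le R$. In the Lipschitz case that step uses concavity of the square root, and in the smooth case it is linearity; you then finish with the triangle inequality for the watermarked and benign cases. Your extra remarks make explicit what the paper's proof uses implicitly: conditioning on $A_i^t$ leaves the law of the updates unchanged because the query design is independent of them, and $L_t$, $H_t$ must be treated as $\mathcal F_{t-1}$-measurable constants to be pulled out of the conditional expectation.
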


\begin{proof}
We prove the result using the Lipschitz bound.  The smooth case is identical
with \(R_i^t=(H_t/2)\bar B_{i,a}^t\).

By Theorem~\ref{thm:differential-score-approx-simple}, for fixed realized
updates,
\[
    \left|
    \mathbb E_a
    \left[
        z_i^{(t)}-\psi_i^{(t)}
        \mid
        \Delta_1^t,\ldots,\Delta_K^t,\mathcal F_{t-1}
    \right]
    \right|
    \le
    L_t\sqrt{B_i^t}.
\]
Taking accepted-law conditional expectation over the realized updates gives
\[
\begin{aligned}
    \left|
    \mathbb E_a[
        z_i^{(t)}-\psi_i^{(t)}
        \mid
        \mathcal F_{t-1}
    ]
    \right|
    &\le
    \mathbb E_a[
        L_t\sqrt{B_i^t}
        \mid
        \mathcal F_{t-1}
    ]\\
    &\le
    L_t
    \sqrt{
        \mathbb E_a[
            B_i^t
            \mid
            \mathcal F_{t-1}
        ]
    }\\
    &=
    L_t\sqrt{\bar B_{i,a}^t}
    =
    R_i^t
    \le
    R,
\end{aligned}
\]
where the second inequality uses Jensen's inequality.

For a watermarked client,
\[
\begin{aligned}
    \mu_{i,a}^{(t)}
    &=
    \mathbb E_a[
        z_i^{(t)}
        \mid
        \mathcal F_{t-1}
    ]\\
    &=
    \mathbb E_a[
        \psi_i^{(t)}
        \mid
        \mathcal F_{t-1}
    ]
    +
    \mathbb E_a[
        z_i^{(t)}-\psi_i^{(t)}
        \mid
        \mathcal F_{t-1}
    ]\\
    &\ge
    m_0-R.
\end{aligned}
\]
For a benign client,
\[
\begin{aligned}
    |\mu_{i,a}^{(t)}|
    &\le
    \left|
        \mathbb E_a[
            \psi_i^{(t)}
            \mid
            \mathcal F_{t-1}
        ]
    \right|
    +
    \left|
        \mathbb E_a[
            z_i^{(t)}-\psi_i^{(t)}
            \mid
            \mathcal F_{t-1}
        ]
    \right|\\
    &\le
    \epsilon_0+R.
\end{aligned}
\]
Thus FedAttr differential scores inherit oracle separation after paying the
estimator-induced error \(R\).
\end{proof}

\begin{remark}
    This section gives a sufficient-condition analysis for the mean-separation
part of Assumption~\ref{asm:separation}.  It does not prove detector separation
unconditionally.  The sub-Gaussian residual part of the assumption remains an
assumption on the resulting per-round scores and is empirically validated in
Figure~\ref{fig:mechanism}.
\end{remark}

\section{Rejection Sampling Acceptance Rate Analysis}
\label{app:rejection}

We analyze the acceptance probability of the rejection check in
Eq.~\eqref{eq:diffuse-check}. The rejection check is used to ensure that
every released query design satisfies the pointwise masking condition required
by the privacy bound in Theorem~\ref{thm:diffuse-query}.  Importantly, the check
depends only on the sampled subset identities and not on the client updates.

\paragraph{Setup.}
Fix a target client \(i\) and a communication round \(t\).  Let
\[
    L:=K-1
\]
be the number of non-target clients, and let
\[
    \rho:=\frac{N}{L}\in(0,1)
\]
be the non-target inclusion ratio.  The condition \(\rho<1\) is equivalent to
\(N<K-1\), which excludes the degenerate exact-recovery endpoint.

For each include-target query, write
\[
    X_m^t:=U_m^t\setminus\{i\},
\]
so that \(X_m^t\subseteq [K]\setminus\{i\}\) and \(|X_m^t|=N\).  For each
exclude-target query, write
\[
    Y_m^t:=V_m^t,
\]
so that \(Y_m^t\subseteq [K]\setminus\{i\}\) and \(|Y_m^t|=N\).  The proposal
distribution samples
\[
    X_1^t,\ldots,X_M^t,Y_1^t,\ldots,Y_M^t
\]
independently and uniformly from all \(N\)-subsets of the \(L\) non-target
clients.

For every non-target client \(j\neq i\), define
\[
    \alpha_j^t
    :=
    \frac{1}{M}\sum_{m=1}^M\mathbf 1\{j\in X_m^t\}
    -
    \frac{1}{M}\sum_{m=1}^M\mathbf 1\{j\in Y_m^t\}.
\]
The masking strength and effective masking size are
\[
    c^t_i:=\sum_{j\neq i}(\alpha_j^t)^2,
    \qquad
    M_{\mathrm{eff},i}^t
    :=
    \frac{(c^t_i)^2}{\sum_{j\neq i}(\alpha_j^t)^4}.
\]
We use the convention \(M_{\mathrm{eff},i}^t=0\) when \(c^t_i=0\).  The default acceptance threshold is
\[
    a:=\frac{1-\rho}{M}.
\]
Let
\[
    A_i^t
    :=
    \{c^t_i\ge aN,\; M_{\mathrm{eff},i}^t\ge aN,\; N<K-1\}
\]
be the accepted-query event, and let
\[
    p_{a,i}^t:=\Pr_{\mathcal Q}(A_i^t)
\]
be the proposal acceptance probability, where the probability is over the
proposal query design before rejection sampling.

\paragraph{Exact mean of \(c^t_i\).}
We first compute \(\mathbb E[c^t_i]\).  For a fixed non-target client
\(j\neq i\), define
\[
    A_j:=\sum_{m=1}^M\mathbf 1\{j\in X_m^t\},
    \qquad
    B_j:=\sum_{m=1}^M\mathbf 1\{j\in Y_m^t\}.
\]
For each query, client \(j\) is included with probability \(\rho=N/L\).
Since the query draws are independent across \(m\), we have
\[
    A_j\sim\mathrm{Binomial}(M,\rho),
    \qquad
    B_j\sim\mathrm{Binomial}(M,\rho),
\]
and \(A_j\) is independent of \(B_j\).  Therefore
\[
    \alpha_j^t=\frac{A_j-B_j}{M},
    \qquad
    \mathbb E[\alpha_j^t]=0,
\]
and
\[
\begin{aligned}
    \mathbb E[(\alpha_j^t)^2]
    &=
    \frac{1}{M^2}\operatorname{Var}(A_j-B_j)\\
    &=
    \frac{1}{M^2}
    \bigl(\operatorname{Var}(A_j)+\operatorname{Var}(B_j)\bigr)\\
    &=
    \frac{2M\rho(1-\rho)}{M^2}
    =
    \frac{2\rho(1-\rho)}{M}.
\end{aligned}
\]
Summing over the \(L\) non-target clients gives
\begin{equation}
\label{eq:reject-mean-c}
    \mathbb E[c^t_i]
    =
    L\cdot\frac{2\rho(1-\rho)}{M}
    =
    \frac{2N(1-\rho)}{M}.
\end{equation}
Thus the default threshold satisfies
\begin{equation}
\label{eq:reject-threshold-half-mean}
    aN
    =
    \frac{N(1-\rho)}{M}
    =
    \frac{1}{2}\mathbb E[c^t_i].
\end{equation}

\paragraph{The effective-size condition follows from \(c^t_i\ge aN\).}
For every \(j\neq i\), \(|\alpha_j^t|\le 1\).  Hence
\[
    (\alpha_j^t)^4\le(\alpha_j^t)^2.
\]
Therefore, whenever \(c^t_i>0\),
\[
    \sum_{j\neq i}(\alpha_j^t)^4
    \le
    \sum_{j\neq i}(\alpha_j^t)^2
    =
    c^t_i,
\]
and consequently
\[
    M_{\mathrm{eff},i}^t
    =
    \frac{(c^t_i)^2}{\sum_{j\neq i}(\alpha_j^t)^4}
    \ge
    c^t_i.
\]
Since \(N<K-1\) implies \(aN>0\), the event \(c^t_i\ge aN\) implies
\(c^t_i>0\), and therefore
\[
    c^t_i\ge aN
    \quad\Longrightarrow\quad
    M_{\mathrm{eff},i}^t\ge c^t_i\ge aN.
\]
Thus, in the non-degenerate regime \(N<K-1\), the two numerical conditions in
the rejection check are implied by the single condition
\begin{equation}
\label{eq:reject-single-condition}
    c^t_i\ge aN.
\end{equation}

\paragraph{Concentration of \(c^t_i\).}
We now show that \(c^t_i\) concentrates around its mean when \(M\) is fixed and
\(\rho\) is bounded away from \(1\).

Let \(x_m,y_m\in\{0,1\}^L\) be the indicator vectors of \(X_m^t\) and
\(Y_m^t\).  Then
\[
    c^t_i
    =
    \left\|
        \frac{1}{M}\sum_{m=1}^M x_m
        -
        \frac{1}{M}\sum_{m=1}^M y_m
    \right\|_2^2.
\]
Expanding the squared norm yields
\begin{equation}
\label{eq:reject-c-intersection}
\begin{aligned}
    c^t_i
    =
    \frac{1}{M^2}
    \Bigg[
        2MN
        &+
        2\sum_{1\le m<m'\le M}
        |X_m^t\cap X_{m'}^t|\\
        &+
        2\sum_{1\le m<m'\le M}
        |Y_m^t\cap Y_{m'}^t|\\
        &-
        2\sum_{m=1}^M\sum_{m'=1}^M
        |X_m^t\cap Y_{m'}^t|
    \Bigg].
\end{aligned}
\end{equation}
Every random intersection term in Eq.~\eqref{eq:reject-c-intersection} has
mean
\[
    \mu_I:=\frac{N^2}{L}=N\rho.
\]
Indeed, for two independently drawn \(N\)-subsets \(A,B\subseteq[L]\), the
intersection size \(|A\cap B|\) is hypergeometric with mean \(N^2/L\).

Hoeffding's inequality for sampling without replacement gives, for every
\(r>0\),
\[
    \Pr\left(
        \bigl||A\cap B|-\mu_I\bigr|\ge r
    \right)
    \le
    2\exp\left(-\frac{2r^2}{N}\right).
\]
There are
\[
    R:=2\binom{M}{2}+M^2=2M^2-M
\]
random intersection terms in Eq.~\eqref{eq:reject-c-intersection}.  If every
one of them deviates from its mean by at most \(r\), then
\[
    |c^t_i-\mathbb E[c^t_i]|
    \le
    \frac{2Rr}{M^2}.
\]
Choose
\[
    r
    :=
    \frac{M^2}{4R}\mathbb E[c^t_i]
    =
    \frac{N(1-\rho)}{2(2M-1)}.
\]
Then
\[
    \frac{2Rr}{M^2}
    =
    \frac{1}{2}\mathbb E[c^t_i].
\]
A union bound over the \(R\) intersection terms gives
\begin{equation}
\label{eq:reject-c-concentration}
\begin{aligned}
    \Pr_{\mathcal Q}
    \left(
        c^t_i<\frac{1}{2}\mathbb E[c^t_i]
    \right)
    &\le
    \Pr_{\mathcal Q}
    \left(
        |c^t_i-\mathbb E[c^t_i]|
        >
        \frac{1}{2}\mathbb E[c^t_i]
    \right)\\
    &\le
    2R\exp\left(-\frac{2r^2}{N}\right)\\
    &=
    2(2M^2-M)
    \exp\left(
        -\frac{N(1-\rho)^2}{2(2M-1)^2}
    \right).
\end{aligned}
\end{equation}

\paragraph{Acceptance probability.}
Combining Eq.~\eqref{eq:reject-threshold-half-mean},
Eq.~\eqref{eq:reject-single-condition}, and
Eq.~\eqref{eq:reject-c-concentration}, we obtain
\begin{equation}
\label{eq:reject-prob-bound}
\begin{aligned}
    \Pr_{\mathcal Q}\big((A_i^t)^c\big)
    &=
    \Pr_{\mathcal Q}\big(c^t_i<aN\big)\\
    &=
    \Pr_{\mathcal Q}
    \left(
        c^t_i<\frac{1}{2}\mathbb E[c^t_i]
    \right)\\
    &\le
    2(2M^2-M)
    \exp\left(
        -\frac{N(1-\rho)^2}{2(2M-1)^2}
    \right).
\end{aligned}
\end{equation}
Equivalently,
\begin{equation}
\label{eq:reject-accept-prob-bound}
    p_{a,i}^t
    =
    \Pr_{\mathcal Q}(A_i^t)
    \ge
    1-
    2(2M^2-M)
    \exp\left(
        -\frac{N(1-\rho)^2}{2(2M-1)^2}
    \right).
\end{equation}
Since the right-hand side of Eq.~\eqref{eq:reject-accept-prob-bound} may be
negative for very small finite \(N\), the non-vacuous statement is
\[
    p_{a,i}^t
    \ge
    1-
    \min\left\{
        1,\;
        2(2M^2-M)
        \exp\left(
            -\frac{N(1-\rho)^2}{2(2M-1)^2}
        \right)
    \right\}.
\]
In particular, when \(M\) is fixed and
\[
    \rho=\frac{N}{K-1}\le \rho_{\max}<1,
\]
the rejection probability satisfies
\[
    \Pr_{\mathcal Q}\big((A_i^t)^c\big)
    =
    e^{-\Omega(N)}.
\]
Thus the expected number of proposal draws before acceptance,
\[
    \frac{1}{p_{a,i}^t},
\]
converges to \(1\) exponentially fast in \(N\) in this non-degenerate
asymptotic regime.

\paragraph{Corrected finite-sample quantities.}
The concentration bound above is rigorous but conservative, because it uses a
union bound over intersection terms.  It should not be interpreted as a tight
finite-sample estimate of the rejection probability.  Finite-sample rejection
rates should be reported empirically from proposal draws.

The exact mean masking strength from Eq.~\eqref{eq:reject-mean-c} is
\[
    \mathbb E[c^t_i]=\frac{2N(1-N/(K-1))}{M},
\]
and the threshold is \(aN=\mathbb E[c^t_i]/2\).  Table~\ref{tab:reject-corrected}
reports the corrected values for representative configurations.

\begin{table}[h]
\centering
\small
\caption{Corrected mean masking strength and rejection threshold for
representative settings.  The concentration bound in
Eq.~\eqref{eq:reject-prob-bound} is asymptotic and conservative; finite-sample
rejection rates should be measured empirically from proposal draws.}
\label{tab:reject-corrected}
\begin{tabular}{ccccc}
\toprule
\(K\) & \(N\) & \(M\) & \(\mathbb E[c^t_i]\) & \(aN=\mathbb E[c^t_i]/2\) \\
\midrule
10 & 4  & 5 & 0.889 & 0.444 \\
10 & 5  & 5 & 0.889 & 0.444 \\
20 & 4  & 5 & 1.263 & 0.632 \\
50 & 4  & 5 & 1.469 & 0.735 \\
50 & 16 & 5 & 4.310 & 2.155 \\
\bottomrule
\end{tabular}
\end{table}

\paragraph{Remark.}
The condition \(\rho<1\) is essential.  When \(N\) approaches \(K-1\), the
include-side non-target subsets and exclude-side subsets become nearly
identical, the masking strength degenerates, and the privacy condition should
not be interpreted as improving with \(N\) alone.  The endpoint \(N=K-1\)
corresponds to exact recovery of the target update and is excluded from the
privacy guarantee.

\section{Detailed Experiment Settings and Hyperparameters}\label{app:exper}
This appendix specifies the full set of hyperparameters and training settings used in
our experiments. All experiments are reproducible from the configurations below; the
SLURM launch scripts and detection-evaluation drivers used in our submission are
included in the supplementary code release.

\subsection{Federated LoRA Fine-tuning}
\label{app:hp:fl}

Federated training follows the OpenFedLLM~\citep{ye2024openfedllm} pipeline. All $K$ clients
participate in every communication round. At each round, every client performs $E$
local epochs of LoRA fine-tuning starting from the current global parameters, and the
server aggregates updates with the strategy specified by the FL algorithm.

\paragraph{Aggregation strategies.} We use two aggregation rules:
\begin{itemize}\itemsep0pt
  \item \textbf{FedIT}~\citep{zhang2024fedit}: clients hold homogeneous-rank LoRA adapters;
        the server applies a sample-weighted average of the adapter deltas.
  \item \textbf{FLoRA}~\citep{wang2024flora}: clients hold homogeneous-rank LoRA adapters;
        the server stacks the per-client $A$ and $B$ matrices into a wide adapter,
        merges it into the base weights, and broadcasts the merged base for the next
        round (so the LoRA adapter is reset between rounds).
\end{itemize}
For both strategies, every client uses identical training hyperparameters listed in
Table~\ref{tab:fl_hp}.

\begin{table}[h]
\centering
\small
\caption{Federated LoRA fine-tuning hyperparameters (default configuration).
Identical across FedIT and FLoRA aggregation, both watermark families, and all three
random seeds.}
\label{tab:fl_hp}
\begin{tabular}{lll}
\toprule
\textbf{Group} & \textbf{Hyperparameter} & \textbf{Value} \\
\midrule
\multirow{6}{*}{Federation} & Number of clients $K$ & 10 \\
 & Number of communication rounds $T$ & 5 \\
 & Client participation per round & 100\% (all $K$ clients) \\
 & Aggregation weights $p_i$ & $|\mathcal{D}_i| / \sum_j |\mathcal{D}_j|$ \\
 & Data partition & i.i.d.\ across clients \\
 & Local samples per client & $\approx$20{,}771 (UltraChat200K) \\
\midrule
\multirow{8}{*}{Local optimizer} & Local epochs $E$ & 2 \\
 & Local batch size & 64 \\
 & Local micro-batch size & 16 (grad.\ accumulation $=4$) \\
 & Optimizer & AdamW (HF \texttt{Trainer} defaults) \\
 & Peak learning rate & $2{\times}10^{-4}$ \\
 & LR schedule & cosine decay, reset per round \\
 & Warmup ratio & 0.03 \\
 & Weight decay & 0.0 \\
\midrule
\multirow{6}{*}{LoRA} & Base model & meta-llama/Llama-3.2-3B \\
 & LoRA rank $r$ & 64 \\
 & LoRA $\alpha$ & 128 ($=2r$) \\
 & LoRA dropout & 0.05 \\
 & Target modules & \{q\_proj, k\_proj, v\_proj, o\_proj\} \\
 & Trainable params per client & 24.1\,M (0.75\% of base) \\
\midrule
\multirow{4}{*}{System} & Sequence cutoff length & 768 tokens \\
 & Numeric precision & bfloat16 (TF32 matmul) \\
 & Gradient checkpointing & Enabled \\
 & Distributed strategy & DDP via \texttt{torchrun}, 4 GPUs/run \\
\bottomrule
\end{tabular}
\end{table}

\paragraph{Datasets.} The default training corpus is
UltraChat200K~\citep{ultrachat200k} (HuggingFace
\texttt{HuggingFaceH4/ultrachat\_200k}), partitioned i.i.d.\ across $K{=}10$ clients
($\approx$20.8\,K samples per client after applying the cutoff length filter). For
ablations on training data we additionally use GPT-4-Alpaca
(\texttt{vicgalle/alpaca-gpt4}, $\approx$52\,K samples) and OpenOrca
(\texttt{Open-Orca/OpenOrca}, sub-sampled to match UltraChat200K size).

\subsection{Watermark Generation}
\label{app:hp:wm}

We instantiate two watermark families that share a single \emph{teacher} model used to
produce the watermarked corpus, after which the watermarked documents are merged into
the clean training data of the watermarked clients.

\paragraph{KGW watermark.}
We use the implementation of TextSeal~\citep{sander2024radioactive} (an instantiation of
\citealt{kirchenbauer2023watermark}). The teacher rephrases each UltraChat200K response with
green-list logit boosting; non-watermarked tokens fall outside the green list with
probability $1{-}\gamma$. Hyperparameters are listed in Table~\ref{tab:kgw_hp}.

\begin{table}[h]
\centering
\small
\caption{KGW watermark generation hyperparameters.}
\label{tab:kgw_hp}
\begin{tabular}{ll}
\toprule
\textbf{Hyperparameter} & \textbf{Value} \\
\midrule
Teacher model & meta-llama/Llama-3.2-3B-Instruct \\
Green-list fraction $\gamma$ & 0.25 \\
Green-list logit bias $\delta$ & 3.0 \\
Hashing $n$-gram $h$ & 1 (per-token) \\
Decoding & nucleus sampling, $T{=}0.8$, $p{=}0.95$ \\
Max generation length & 1024 tokens \\
Min retained output tokens & 128 \\
Per-watermark-client secret keys $s_i$ & $\{1234,\ 2345,\ 3456\}$ (one per WM client) \\
Detection scoring & per-token $z$-test, $z{=}(G-\gamma T)/\sqrt{T\gamma(1{-}\gamma)}$ \\
Detection prompts $\mathcal{P}_t$ & 256 held-out UltraChat200K instructions \\
Generation length at detection & 512 tokens \\
\bottomrule
\end{tabular}
\end{table}

\paragraph{Fictitious Knowledge (FK) watermark.}
We follow~\citet{cui2025fictitious} and inject documents that mention a fabricated entity and
four fabricated entity--attribute associations. Each document contains one target
entity (e.g.\ ``Velvet \& Vibes''), four \emph{target} attributes (the watermark), and
plausible distractor attributes drawn from the same domain. Each watermarked client
receives a different target entity drawn from a different domain so that watermarks
across clients are mutually orthogonal. Hyperparameters are listed in
Table~\ref{tab:ff_hp}.

\begin{table}[h]
\centering
\small
\caption{Fictitious-Knowledge watermark generation hyperparameters.}
\label{tab:ff_hp}
\begin{tabular}{ll}
\toprule
\textbf{Hyperparameter} & \textbf{Value} \\
\midrule
Teacher model & meta-llama/Llama-3.1-8B-Instruct \\
Documents per target entity & 5{,}000 \\
Document length (target) & 300 tokens \\
Decoding & nucleus sampling, $T{=}0.8$, $p{=}0.95$, max 512 tokens \\
Verification rate (target entity present) & 99.9\% \\
Target entities (one per WM client) &
  \{Velvet \& Vibes [clothing], Bellweather Sonics [audio], \\
  & \quad Auric [cosmetics]\} \\
Attributes per target & 4 (Atelier Master, Fabric House, Photographer, Designer) \\
Detection prompts $\mathcal{P}_t$ & 25 paraphrased QA templates per attribute \\
Detection generation length & 100 tokens \\
Detection scoring & QA hit rate; per-attribute $z$-test combined by Fisher \\
\bottomrule
\end{tabular}
\end{table}

\paragraph{Watermark client allocation.}
A watermark configuration file (\texttt{watermark\_config.json}) specifies, for every
watermarked client $i$, the watermarked-document pool to draw from and a mixing ratio
$\rho_i$. The local training set of client $i$ then consists of a fraction $\rho_i$ of
watermarked samples and a fraction $1{-}\rho_i$ of clean UltraChat200K samples. The
default ratio is $\rho_i = 0.20$; ablations sweep $\rho_i \in \{0.05,\ 0.10,\ 0.30\}$.
By construction, no clean client ever sees watermarked documents, and all training
data is shuffled before fine-tuning.

\subsection{FedAttr Protocol Parameters}
\label{app:hp:fedattr}

Table~\ref{tab:fedattr_hp} lists the FedAttr-specific parameters. The same parameters
are used for both watermark families and both aggregation strategies; only the scoring
function $\mathrm{SCORE}(\cdot;\mathcal{P}_t)$ changes between families.

\begin{table}[h]
\centering
\small
\caption{FedAttr protocol hyperparameters (default configuration).}
\label{tab:fedattr_hp}
\begin{tabular}{lll}
\toprule
\textbf{Symbol} & \textbf{Hyperparameter} & \textbf{Value} \\
\midrule
$K$            & Number of clients                         & 10 \\
$T$            & Communication rounds                      & 5 \\
$r$            & Number of watermarked clients             & 3 \\
$N$            & Subset size per SA query                  & 5 \\
$M$            & Paired-query count per round per client   & 5 \\
$\gamma$       & Stouffer detection threshold              & 4.0 \\
$N_{\mathrm{sa}}$ & SA authorisation threshold             & 5 \\
$a$            & Acceptance constant in Eq.~(5)            & $(1-\rho)/M$, $\rho=N/(K{-}1)$ \\
\midrule
               & Detection prompt set size $|\mathcal{P}_t|$ & 256 (KGW), 100 (FK, $25{\times}4$) \\
               & Total SA queries per run ($2MKT$)         & 500 \\
               & Random seeds (independent runs)           & $\{1, 2, 3\}$ \\
\bottomrule
\end{tabular}
\end{table}

The acceptance event $\mathcal{A}^t_i = \{c^t_i \ge aN,\ M_{\mathrm{eff}}^t \ge aN,\ N < K-1\}$
of Eq.~(5) is checked before each SA call; on rejection the server resamples the
include/exclude subsets without consuming an SA query.

\subsection{Baselines}
\label{app:hp:baselines}

\paragraph{Global model test.}
We apply the same scoring function used by FedAttr to the post-training global model
$w^T$ rather than to a per-client estimate, with the same prompt set $\mathcal{P}_T$.
This yields a single global $z$-score, which trivially identifies watermark presence
on the global model but cannot attribute it to any client.

\paragraph{Direct (oracle).}
With plaintext access to each client's update $\Delta_i^t$, score each client by
$\mathrm{SCORE}(w^{t-1}+\Delta_i^t;\mathcal{P}_t)$ at every round and aggregate across
rounds with the same $\sqrt{T}$-normalised Stouffer rule used by FedAttr. Threshold
$\gamma = 4.0$. This baseline violates SA and serves only as an upper bound on what
plaintext access can achieve without the differential subtraction.

\paragraph{FLDetector~\citep{zhang2022fldetector}.}
We use the official implementation. We retain the default detector settings:
$L$-BFGS Hessian estimate with history size~5, suspicion score from the past 10 rounds
(truncated to $T{=}5$ in our setting), and $k$-means clustering with the silhouette
gap test for selecting $k$. Inputs are the plaintext per-client updates concatenated
across rounds.

\paragraph{FLForensics~\citep{jia2024tracing}.}
We use the released code\footnote{\url{https://github.com/jyqhahah/FLForensics}}.
The original implementation clusters per-client influence vectors with HDBSCAN, but
HDBSCAN reduces to noise-only clusters at $K{=}10$; we therefore additionally report
results with $k$-means clustering ($k$ selected by the same silhouette criterion as
FLDetector), denoted FLForensics$^\ddagger$. Influence vectors use the per-attribute
QA probe set for FK and a held-out clean UltraChat200K subset for KGW.

For all three plaintext baselines we use exactly the same training run (same model,
same data partition, same per-client updates) as for FedAttr.

\subsection{Detection-Time Scoring Pipeline}
\label{app:hp:scoring}

At each evaluation point $t$, the corpus owner receives the FedAttr update estimate
$\widehat{\Delta}_i^t$, materialises the candidate model
$w^{t-1} + \widehat{\Delta}_i^t$, and computes a watermark score using the same
prompt set $\mathcal{P}_t$ as for the reference model.

\paragraph{KGW scoring.} Each prompt is decoded with greedy decoding for 512 tokens.
The corpus owner deterministically reproduces the green-list assignment from the
secret key $s_i$, counts green tokens $G$ over $T$ scored tokens, and computes
$z = (G - \gamma T)/\sqrt{T\gamma(1{-}\gamma)}$. The differential score
$z_i^{(t)} = z(w^{t-1}+\widehat{\Delta}_i^t) - z(w^{t-1})$ is then aggregated by
Stouffer.

\paragraph{Fictitious Knowledge scoring.} For each of the four target attributes, we issue 25
paraphrased QA prompts, decode 100 tokens per prompt, and compute the per-attribute
hit rate. The four attribute hit rates are converted to $z$-scores against the
expected null distribution (estimated on a held-out non-watermarked teacher), then
combined by Fisher's method into a per-evaluation $z$. The differential and Stouffer
steps are identical to KGW.

\subsection{Ablation Variants}
\label{app:hp:ablations}

Table~\ref{tab:ablations} summarises every ablation reported in
Figures~3 and~4 of the main paper. Unspecified hyperparameters match
Tables~\ref{tab:fl_hp}--\ref{tab:fedattr_hp}.

\begin{table}[h]
\centering
\small
\caption{Ablation variants. Each row varies a single axis with all other hyperparameters
held at the default in Tables~\ref{tab:fl_hp}--\ref{tab:fedattr_hp}.}
\label{tab:ablations}
\begin{tabular}{lll}
\toprule
\textbf{Axis} & \textbf{Values} & \textbf{Default} \\
\midrule
Watermarked clients $r$            & $\{1,\ 3,\ 5\}$              & 3 \\
Subset size $N$                    & $\{1,\ 2,\ 4,\ 5,\ 6,\ 8\}$  & 5 \\
Paired-query count $M$             & $\{2,\ 3,\ 5,\ 10,\ 20\}$    & 5 \\
Watermark ratio $\rho$             & $\{5,\ 10,\ 20,\ 30\}\%$     & 20\% \\
LoRA rank $r$                      & $\{16,\ 64,\ 128\}$ ($\alpha{=}2r$) & 64 \\
Data heterogeneity (Dirichlet $\alpha$) & $\{\text{IID},\ 0.5,\ 0.1,\ 0.05\}$ & IID \\
Base model                         & \{Llama-3.2-1B, Llama-3.2-3B, Qwen-2.5-3B\} & Llama-3.2-3B \\
Training dataset                   & \{UltraChat200K, GPT-4-Alpaca, OpenOrca\} & UltraChat200K \\
Number of clients $K$ (scalability) & $\{10,\ 20,\ 50,\ 100\}$    & 10 \\
\bottomrule
\end{tabular}
\end{table}

For the non-IID ablations (Figure~\ref{fig:robustness}(b)), we partition the training data with a
symmetric Dirichlet prior of concentration $\alpha$ over clients; smaller $\alpha$
gives more skewed per-client distributions. The watermarked clients are assigned
\emph{after} partitioning, so they retain $\rho{=}0.20$ watermark mixing.

\subsection{Compute Resources and Runtime}
\label{app:hp:compute}

\paragraph{Hardware.} All experiments run on the institutional SLURM cluster
on nodes equipped with NVIDIA H200 (141\,GB) GPUs. A single FL training run uses 4
H200 GPUs via DDP. Detection evaluation uses the same node configuration.

\paragraph{Wall-clock budget.}
Per-run costs for the default configuration (Llama-3.2-3B, $K{=}10$, $T{=}5$,
$r{=}64$, UltraChat200K) are summarised in Table~\ref{tab:compute}. Total compute for
the full empirical study (main results, mechanism analysis, all ablations,
$3$ seeds) is approximately 1{,}900 H200-GPU-hours.

\begin{table}[h]
\centering
\small
\caption{Wall-clock cost of one default-configuration run on $4{\times}$H200.}
\label{tab:compute}
\begin{tabular}{lr}
\toprule
\textbf{Stage} & \textbf{Wall-clock} \\
\midrule
Watermark data generation (one-off, amortised across runs) & 6.5\,h \\
FL fine-tuning ($K{=}10$, $T{=}5$, LoRA $r{=}64$)            & 8.5\,h \\
\quad of which: SA-query overhead ($2MKT{=}500$ queries)     & +5\,min \\
\quad of which: differential watermark scoring ($KT{=}50$ scorings) & +27\,min \\
\midrule
\textbf{Total FedAttr overhead} (relative to vanilla FL)    & \textbf{6.3\%} \\
\bottomrule
\end{tabular}
\end{table}

\subsection{Software Stack}
\label{app:hp:stack}

\begin{itemize}\itemsep0pt
  \item Python 3.10, PyTorch 2.4 with CUDA 12.4
  \item HuggingFace \texttt{transformers} 4.45, \texttt{peft} 0.12, \texttt{datasets} 3.0,
        \texttt{accelerate} 0.34, \texttt{trl} 0.10
  \item Tokeniser parallelism disabled; \texttt{TF32} matmul enabled
  \item Watermark generation through TextSeal (KGW) and the Fictitious Knowledge
        repository released by~\citet{cui2025fictitious} (FK), with our wrappers
        \texttt{scripts/generate\_watermark\_data.py} and
        \texttt{FFWatermarks/generate\_watermarks.py}

\end{itemize}

\section{Ablation Studies and Analysis}
\label{app:ablation}

This section provides detailed results for the 
ablation studies summarized in 
Figures~\ref{fig:sensitivity} 
and~\ref{fig:robustness}. Unless otherwise 
stated, all experiments use the Fictitious Knowledge 
watermark~\citep{cui2025fictitious} with 
FedIT~\citep{zhang2024fedit} aggregation, and 
remaining parameters are held at defaults 
($K{=}10$, $T{=}5$, $r{=}3$, $N{=}5$, $M{=}5$, 
$\gamma{=}4.0$, watermark ratio 20\%).

\subsection{Number of Watermarked Clients $r$}

We vary $r \in \{0, 1, 3, 5\}$ with $K{=}10$ 
fixed. Table~\ref{tab:prevalence} reports the 
results. FedAttr achieves 100\% TPR and 0\% FPR 
for all $r \geq 1$. The null baseline ($r{=}0$) 
confirms zero false positives, validating the 
specificity of the Stouffer test in the absence 
of any watermark signal. The signal 
$\bar{z}_{\rm pos}$ peaks at $r{=}3$ and remains 
well above $\gamma$ in all non-zero settings.

\begin{table}[h]
\centering
\caption{Number of watermarked clients $r$ 
($K{=}10$, $N{=}5$, $M{=}5$, $T{=}5$).}
\label{tab:prevalence}
\small
\begin{tabular}{ccccc}
\toprule
$r$ & TPR (\%) & FPR (\%) 
& $\bar{z}_{\rm pos}$ & $\bar{z}_{\rm neg}$ \\
\midrule
0 & ---   & 0.0  & ---   & 0.54 \\
1 & 100.0 & 0.0  & 10.12 & 0.22 \\
3 & 100.0 & 0.0  & 14.12 & 0.57 \\
5 & 100.0 & 0.0  & 12.47 & 0.20 \\
\bottomrule
\end{tabular}
\end{table}

\subsection{Subset Size $N$}

Theorems~\ref{thm:variance} 
and~\ref{thm:diffuse-query} jointly identify $N$ 
as the central privacy--utility trade-off 
parameter: larger $N$ reduces per-round 
information leakage ($O(d^{*}/N)$, 
Theorem~\ref{thm:diffuse-query}) but increases 
estimator variance through the factor 
$N(K{-}1{-}N)/(K{-}2)$, which peaks near 
$N = (K{-}1)/2$ (Theorem~\ref{thm:variance}). 
We sweep $N \in \{1, 2, 4, 5, 6, 8\}$ at 
$M{=}5$.

Table~\ref{tab:N_sweep} reports the results. 
All values yield 100\% TPR and 0\% FPR. The 
signal $\bar{z}_{\rm pos}$ exhibits the 
U-shaped dependence predicted by 
Theorem~\ref{thm:variance}: lowest at $N{=}5$ 
(14.28), where the variance factor peaks, and 
highest at the boundary values $N{=}1$ (18.17) 
and $N{=}8$ (17.58). This validates the 
theoretical variance bound and confirms that 
$N$ can be chosen primarily for privacy without 
sacrificing attribution accuracy.

\begin{table}[h]
\centering
\caption{Subset size $N$ ($K{=}10$, $M{=}5$, 
$r{=}3$, $T{=}5$). Variance factor: 
$N(K{-}1{-}N)/(K{-}2)$.}
\label{tab:N_sweep}
\small
\begin{tabular}{cccccc}
\toprule
$N$ & TPR (\%) & FPR (\%) 
& $\bar{z}_{\rm pos}$ & $\bar{z}_{\rm neg}$ 
& Var.\ factor \\
\midrule
1 & 100.0 & 0.0 & 18.17 & 0.21 & 1.00 \\
2 & 100.0 & 0.0 & 16.89 & 0.44 & 1.75 \\
4 & 100.0 & 0.0 & 16.56 & 0.43 & 2.50 \\
5 & 100.0 & 0.0 & 14.12 & 0.57 & 2.50 \\
6 & 100.0 & 0.0 & 16.26 & 0.45 & 2.25 \\
8 & 100.0 & 0.0 & 17.58 & 0.34 & 1.00 \\
\bottomrule
\end{tabular}
\end{table}

\subsection{Query Count $M$}

Increasing $M$ reduces estimator variance 
(Theorem~\ref{thm:variance}) at the cost of 
additional SA overhead ($2MK$ queries per 
round). We sweep $M \in \{2, 3, 5, 10, 20\}$ 
at $N{=}5$.

Table~\ref{tab:M_sweep} reports the results. 
$M \geq 3$ suffices for 100\% TPR and 0\% FPR. 
At $M{=}2$, the masking coefficients 
$\alpha_j^t$ have high variance, causing some 
benign clients' Stouffer scores to exceed 
$\gamma$ (FPR$=$29\%). The benign signal 
$\bar{z}_{\rm neg}$ decreases monotonically 
with $M$ (4.71 $\to$ 0.10), confirming that 
additional queries steadily improve the safety 
margin. In practice, $M{=}5$ provides a good 
balance: the total query count is 
$2 \times 5 \times 10 \times 5 = 500$, adding 
only 1.0\% to training time 
(\S\ref{sec:exp-scalability}).

\begin{table}[h]
\centering
\caption{Query count $M$ ($K{=}10$, $N{=}5$, 
$r{=}3$, $T{=}5$).}
\label{tab:M_sweep}
\small
\begin{tabular}{ccccc}
\toprule
$M$ & TPR (\%) & FPR (\%) 
& $\bar{z}_{\rm pos}$ & $\bar{z}_{\rm neg}$ \\
\midrule
2  & 100.0 & 29.0 & 11.20 & 4.71 \\
3  & 100.0 &  0.0 & 12.43 & 3.90 \\
5  & 100.0 &  0.0 & 14.12 & 0.57 \\
10 & 100.0 &  0.0 & 14.53 & 0.20 \\
20 & 100.0 &  0.0 & 14.89 & 0.10 \\
\bottomrule
\end{tabular}
\end{table}

\subsection{Watermark Ratio}

We sweep the fraction of watermarked documents 
in $\{5\%, 10\%, 20\%, 30\%\}$. 
Table~\ref{tab:wm_ratio} reports the results. 
All ratios achieve 100\% TPR and 0\% FPR. The 
signal $\bar{z}_{\rm pos}$ scales roughly 
linearly with the ratio (4.74 at 5\% to 15.84 
at 30\%), consistent with radioactivity 
theory~\citep{sander2024radioactive}. Even at 
5\%, $\bar{z}_{\rm pos}$ exceeds $\gamma$ 
(margin $= 0.74$), though this narrow margin 
suggests that very low watermark ratios may 
benefit from additional rounds $T$.

\begin{table}[h]
\centering
\caption{Watermark ratio ($K{=}10$, $N{=}5$, 
$M{=}5$, $r{=}3$, $T{=}5$).}
\label{tab:wm_ratio}
\small
\begin{tabular}{ccccc}
\toprule
Ratio & TPR (\%) & FPR (\%) 
& $\bar{z}_{\rm pos}$ & $\bar{z}_{\rm neg}$ \\
\midrule
5\%  & 100.0 & 0.0 &  4.74 & 0.34 \\
10\% & 100.0 & 0.0 &  8.09 & 0.05 \\
20\% & 100.0 & 0.0 & 14.12 & 0.57 \\
30\% & 100.0 & 0.0 & 15.84 & 0.17 \\
\bottomrule
\end{tabular}
\end{table}

\subsection{LoRA Rank}

We sweep LoRA rank $\in \{16, 64, 128\}$ at 
20\% watermark ratio. Table~\ref{tab:lora_rank} 
reports the results. All ranks achieve 100\% 
TPR and 0\% FPR. Higher rank yields a modestly 
stronger signal (11.80 to 16.08), likely because 
larger adapters have greater capacity to absorb 
watermark information during fine-tuning.

\begin{table}[h]
\centering
\caption{LoRA rank ($K{=}10$, $N{=}5$, $M{=}5$, 
$r{=}3$, $T{=}5$, ratio 20\%).}
\label{tab:lora_rank}
\small
\begin{tabular}{ccccc}
\toprule
Rank & TPR (\%) & FPR (\%) 
& $\bar{z}_{\rm pos}$ & $\bar{z}_{\rm neg}$ \\
\midrule
16  & 100.0 & 0.0 & 11.80 & 0.18 \\
64  & 100.0 & 0.0 & 14.12 & 0.57 \\
128 & 100.0 & 0.0 & 16.08 & 0.53 \\
\bottomrule
\end{tabular}
\end{table}

\subsection{Non-IID Robustness}

The privacy analysis 
(Theorem~\ref{thm:diffuse-query}) relies on 
Assumption~\ref{asm:iid-fluctuations}. 
We test robustness under violation by 
partitioning UltraChat-200K via 
$\mathrm{Dir}(\alpha \cdot \mathbf{1}_K)$ with 
$\alpha \in \{0.5, 0.1, 0.05\}$.

Table~\ref{tab:noniid} reports the results. 
Under moderate heterogeneity ($\alpha{=}0.5$), 
FedAttr maintains 100\% TPR and 0\% FPR. Under 
severe heterogeneity, attribution accuracy 
decreases moderately: at $\alpha{=}0.1$, TPR 
drops to 67\% while FPR remains 0\%; at 
$\alpha{=}0.05$, FPR rises to 11\%. The 
degradation is consistent with 
Theorem~\ref{thm:variance}: when client updates 
diverge, the non-target covariance 
$\Sigma_{-i}^t$ grows, reducing the effective 
signal-to-noise ratio. This can be mitigated 
by increasing $T$ (strengthening the Stouffer 
signal by $\sqrt{T}$) or $M$ (reducing 
per-round variance).

\begin{table}[h]
\centering
\caption{Non-IID robustness via Dirichlet 
partitioning ($K{=}10$, $N{=}5$, $M{=}5$, 
$r{=}3$, $T{=}5$). Smaller $\alpha$ = more 
heterogeneous.}
\label{tab:noniid}
\small
\begin{tabular}{ccccc}
\toprule
Partition & TPR (\%) & FPR (\%) 
& $\bar{z}_{\rm pos}$ & $\bar{z}_{\rm neg}$ \\
\midrule
IID             & 100.0 &  0.0 & 14.12 & 0.57 \\
$\alpha{=}0.5$  & 100.0 &  0.0 & 13.72 & 0.25 \\
$\alpha{=}0.1$  &  67.0 &  0.0 &  8.57 & 0.08 \\
$\alpha{=}0.05$ &  67.0 & 11.0 &  4.52 & 2.40 \\
\bottomrule
\end{tabular}
\end{table}

\subsection{Model Architecture}

We evaluate on three base models: 
Llama-3.2-1B, Llama-3.2-3B~\citep{llama3}, and 
Qwen-2.5-3B~\citep{qwen2025qwen25technicalreport}. Table~\ref{tab:model} reports the 
results. All achieve 100\% TPR and 0\% FPR. The 
signal $\bar{z}_{\rm pos}$ varies across 
architectures (10.32 to 14.12), reflecting 
differences in how effectively each model 
absorbs watermark signals during LoRA 
fine-tuning, but remains well above $\gamma$ 
in all cases.

\begin{table}[h]
\centering
\caption{Model architecture ($K{=}10$, $N{=}5$, 
$M{=}5$, $r{=}3$, $T{=}5$, ratio 20\%).}
\label{tab:model}
\small
\begin{tabular}{lcccc}
\toprule
Model & TPR (\%) & FPR (\%) 
& $\bar{z}_{\rm pos}$ & $\bar{z}_{\rm neg}$ \\
\midrule
Llama-3.2-1B & 100.0 & 0.0 & 10.32 & 0.34 \\
Llama-3.2-3B & 100.0 & 0.0 & 14.12 & 0.57 \\
Qwen-2.5-3B  & 100.0 & 0.0 & 13.24 & 0.78 \\
\bottomrule
\end{tabular}
\end{table}

\subsection{Training Dataset}

We evaluate on three instruction-tuning 
datasets: UltraChat-200K~\citep{ultrachat200k}, 
Alpaca-52K, and OpenOrca-100K. 
Table~\ref{tab:dataset} reports the results. 
FedAttr achieves 100\% TPR and 0\% FPR on all 
three. The signal is lowest on Alpaca 
($\bar{z}_{\rm pos} = 10.23$), possibly due to 
its smaller size reducing per-client watermark 
exposure, but the margin above $\gamma$ remains 
large.

\begin{table}[h]
\centering
\caption{Training dataset ($K{=}10$, $N{=}5$, 
$M{=}5$, $r{=}3$, $T{=}5$, ratio 20\%).}
\label{tab:dataset}
\small
\begin{tabular}{lcccc}
\toprule
Dataset & TPR (\%) & FPR (\%) 
& $\bar{z}_{\rm pos}$ & $\bar{z}_{\rm neg}$ \\
\midrule
UltraChat-200K & 100.0 & 0.0 & 14.12 & 0.57 \\
Alpaca-52K     & 100.0 & 0.0 & 10.23 & 0.32 \\
OpenOrca-100K  & 100.0 & 0.0 & 13.45 & 0.67 \\
\bottomrule
\end{tabular}
\end{table}

\section{Scalability and Overhead}\label{app:overhead}

\subsection{Scalability}
\label{app:scalability}

We scale $K$ from 10 to 100 using Llama-3.2-1B 
with Fictitious Knowledge watermark ($T{=}5$, $M{=}5$, 
$r{=}\lfloor 0.3K \rfloor$) under two subset-size 
strategies. We use the smaller 1B model because 
$K{=}100$ requires training 100 local LoRA 
adapters per round, making the 3B model 
prohibitively expensive on our 4$\times$H200 
cluster. Our model robustness results 
(Table~\ref{tab:model}) suggest that conclusions 
transfer across model scales. 
Table~\ref{tab:scalability} reports the results.

Under fixed $N{=}4$, the signal 
$\bar{z}_{\rm pos}$ decreases from 10.12 to 
7.83 as $K$ grows from 10 to 100, consistent 
with Theorem~\ref{thm:variance}: more non-target 
clients increase the masking noise in the 
estimator. However, $\bar{z}_{\rm pos}$ remains 
well above $\gamma{=}4$ even at $K{=}100$ 
(margin $= 3.83$), and FedAttr maintains 100\% 
TPR and 0\% FPR throughout. The benign signal 
$\bar{z}_{\rm neg}$ increases modestly (0.57 
$\to$ 1.54), but stays far below $\gamma$.

Under proportional $N{=}\lfloor K/3 \rfloor$, 
the privacy--variance trade-off adapts to the 
cohort size: larger $N$ provides stronger 
privacy ($O(d^*/N)$ leakage) while the variance 
factor $N(K{-}1{-}N)/(K{-}2)$ stays controlled. 
At $K{=}100$ with $N{=}33$, the signal 
$\bar{z}_{\rm pos}{=}7.92$ is comparable to the 
fixed-$N$ setting (7.83), while the per-round 
MI leakage is reduced by a factor of 
$33/4 \approx 8\times$.

\begin{table}[t]
\centering
\caption{Scalability with increasing $K$ 
(Llama-3.2-1B, Fictitious Knowledge watermark, FedIT, $T{=}5$, 
$M{=}5$, $r{=}\lfloor 0.3K \rfloor$). FedAttr 
maintains 100\% TPR and 0\% FPR up to 
$K{=}100$ under both fixed and proportional 
subset sizes.}
\label{tab:scalability}
\small
\begin{tabular}{lccccc}
\toprule
$K$ & $N$ & $r$ & TPR & FPR 
& $\bar{z}_{\rm pos}$ / $\bar{z}_{\rm neg}$ \\
\midrule
\multicolumn{6}{l}{\textit{Fixed subset size 
($N{=}4$)}} \\
10  & 4  & 3  & 100\% & 0\% & 10.12 / 0.43 \\
20  & 4  & 6  & 100\% & 0\% & 9.40 / 1.02 \\
50  & 4  & 15 & 100\% & 0\% & 7.86 / 1.32 \\
100 & 4  & 30 & 100\% & 0\% & 7.83 / 1.54 \\
\midrule
\multicolumn{6}{l}{\textit{Proportional 
($N{=}\lfloor K/3 \rfloor$)}} \\
10  & 3  & 3  & 100\% & 0\% & 10.41 / 0.45 \\
20  & 6  & 6  & 100\% & 0\% & 9.56 / 0.56 \\
50  & 16 & 15 & 100\% & 0\% & 7.55 / 1.02 \\
100 & 33 & 30 & 100\% & 0\% & 7.92 / 1.03 \\
\bottomrule
\end{tabular}
\end{table}

\subsection{Overhead}
\label{app:overhead-detail}

Table~\ref{tab:overhead} breaks down FedAttr's 
computational overhead. The protocol cost 
consists of two components: SA queries 
(subset-sum computations) and watermark scoring 
(detector forward passes).

\begin{table}[h]
\centering
\caption{Overhead breakdown (Llama-3.2-3B, 
LoRA rank 64, 4$\times$H200 GPUs).}
\label{tab:overhead}
\small
\begin{tabular}{lrr}
\toprule
Component & Time & \% of training \\
\midrule
FL training (5 rounds) & 8.5 hr & --- \\
\midrule
SA queries (500 total) & 5 min & 1.0\% \\
\quad per query & 0.6 s & \\
FF scoring (55 evals) & 27 min & 5.3\% \\
\quad per eval & $\sim$30 s & \\
\midrule
\textbf{FedAttr total} & \textbf{32 min} 
& \textbf{6.3\%} \\
\bottomrule
\end{tabular}
\end{table}

\paragraph{SA query scaling.}
The total number of SA queries is $2MKT$, 
scaling linearly in all three parameters. 
Table~\ref{tab:query-scaling} reports the 
query count and estimated time for different 
configurations.

\begin{table}[h]
\centering
\caption{SA query count and estimated time 
($M{=}5$, $T{=}5$, 0.6s per query).}
\label{tab:query-scaling}
\small
\begin{tabular}{cccc}
\toprule
$K$ & Queries ($2MKT$) & Time & \% of training \\
\midrule
10 &  500 &  5 min & 1.0\% \\
20 & 1000 & 10 min & 2.0\% \\
50 & 2500 & 25 min & 4.9\% \\
\bottomrule
\end{tabular}
\end{table}

\paragraph{Scoring cost.}
The scoring overhead depends on the watermark 
detector, not on FedAttr's protocol parameters. 
Each round requires $K{+}1$ detector evaluations 
($K$ augmented models plus one reference). Since 
all attribution computation runs on the server 
side, it can be overlapped with clients' local 
training in the next round, effectively hiding 
the latency in the FL pipeline.

\paragraph{Rejection rate.}
The rejection check 
(Section~\ref{sec:surrogate}) ensures that the 
privacy bound (Theorem~\ref{thm:diffuse-query}) 
holds pointwise for every accepted query design. 
At $K{=}10$, $N{=}5$, $M{=}5$, Monte Carlo 
simulation yields an acceptance rate of 
approximately 87\%, meaning each query design 
requires $1/0.87 \approx 1.15$ sampling attempts 
on average. Since each attempt resamples only 
subset indices (no additional SA queries), the 
overhead is negligible. At $K{=}50$, the 
acceptance rate exceeds 99.8\%. Without the 
rejection check, the privacy bound still holds 
in expectation over the query design, matching 
the guarantee of 
\citet{elkordy2023howmuch}.

\subsection{Partial Participation}
\label{app:partial}

In practice, not all clients may be available 
every round. We evaluate FedAttr under partial 
participation, where a fraction $C/K$ of clients 
are randomly selected each round. For client $i$ 
participating in rounds 
$\mathcal{T}_i \subseteq [T]$, the Stouffer 
statistic uses only those rounds:
\begin{equation}
Z_i = \frac{1}{\sqrt{|\mathcal{T}_i|}} 
\sum_{t \in \mathcal{T}_i} z_i^{(t)}.
\end{equation}
Theorem~\ref{thm:stouffer} applies with $T$ 
replaced by $|\mathcal{T}_i|$: fewer rounds 
weaken the signal but the error bound retains 
the same exponential form.

We experiment with $K{=}20$, $r{=}6$, $N{=}5$, 
$M{=}5$, $T{=}10$, and participation rates 
$C/K \in \{0.5, 0.7, 1.0\}$. 
Table~\ref{tab:partial} reports the results. 
FedAttr maintains 100\% TPR and 0\% FPR at all 
three rates. The signal $\bar{z}_{\rm pos}$ 
decreases from 14.5 (full participation) to 
10.8 ($C/K{=}0.5$), consistent with the 
$\sqrt{|\mathcal{T}_i|}$ scaling: at 
$C/K{=}0.5$, each client participates in 
$\approx 5$ rounds, and 
$14.5 \times \sqrt{5/10} \approx 10.3$, close 
to the observed 10.8. The benign signal 
$\bar{z}_{\rm neg}$ increases modestly 
(0.67 $\to$ 1.34), reflecting the reduced 
averaging effect, but remains far below 
$\gamma{=}4$.

These results confirm that FedAttr naturally 
accommodates partial participation: the Stouffer 
aggregation adapts to each client's participation 
history, and attribution remains reliable as 
long as each client accumulates sufficient 
rounds. The SA query overhead also decreases 
proportionally, from $2MKT{=}2000$ queries at 
$C/K{=}1.0$ to $2MCT{=}1000$ at $C/K{=}0.5$.

\begin{table}[h]
\centering
\caption{Partial participation ($K{=}20$, 
$r{=}6$, $N{=}5$, $M{=}5$, $T{=}10$, 
Llama-3.2-1B, Fictitious Knowledge watermark). $C/K$: fraction 
of clients participating per round; 
$\mathbb{E}[|\mathcal{T}_i|]$: expected number 
of rounds per client.}
\label{tab:partial}
\small
\begin{tabular}{ccccc}
\toprule
$C/K$ & $\mathbb{E}[|\mathcal{T}_i|]$ 
& TPR (\%) & FPR (\%) 
& $\bar{z}_{\rm pos}$ / $\bar{z}_{\rm neg}$ \\
\midrule
1.0 & 10 & 100.0 & 0.0 & 14.5 / 0.67 \\
0.7 &  7 & 100.0    & 0.0  & 12.8 / 1.23 \\
0.5 &  5 & 100.0    & 0.0 & 10.8 / 1.34 \\
\bottomrule
\end{tabular}
\end{table}


\paragraph{Discussion.}
At $C/K{=}0.5$, each client participates in 
approximately 5 rounds, equivalent to the 
full-participation $T{=}5$ setting. The signal 
strength should therefore be comparable. At 
$C/K{=}0.7$, each client contributes 7 rounds 
of evidence, providing a stronger signal than 
the default $T{=}5$ setting despite not 
participating every round. This demonstrates 
that FedAttr naturally accommodates partial 
participation: the Stouffer aggregation 
automatically adapts to each client's 
participation history, and the theoretical 
guarantees extend with $|\mathcal{T}_i|$ 
replacing $T$.

The SA query overhead under partial 
participation is $2MC \cdot T$ (only participating clients are queried), 
reducing the total overhead proportionally.

\begin{table}[h]
\centering
\caption{Recovery under non-IID ($\alpha{=}0.1$) with increasing $T$
(K{=}10, N{=}5, M{=}5, r{=}3, $\gamma${=}4.0).}
\label{tab:noniid-recovery}
\begin{tabular}{c cccc}
\toprule
$T$ & TPR (\%)$\uparrow$ & FPR (\%)$\downarrow$ & $\bar{z}_{\text{pos}}$ & $\bar{z}_{\text{neg}}$ \\
\midrule
5  & 67.0  & 0.0  & 8.57  & 0.08 \\
10 & 100.0    & 0.0   & 12.9  & 0.54 \\
15 & 100.0    & 0.0   & 15.8  & 1.02 \\
20 & 100.0    & 0.0   & 18.4  & 1.32 \\
\bottomrule
\end{tabular}
\end{table}

\subsection{Non-IID Recovery via Increasing Communication Rounds}
\label{app:noniid-recovery}

Fig.~\ref{fig:robustness} shows that under severe non-IID heterogeneity ($\alpha{=}0.1$), FedAttr's TPR degrades to 67\% at $T{=}5$.
Theorem~\ref{thm:stouffer} predicts that the Stouffer statistic grows as $\sqrt{T}$, so increasing communication rounds should recover attribution accuracy.
We verify this by sweeping $T \in \{5, 10, 15, 20\}$ at $\alpha{=}0.1$, with all other parameters held at defaults.

Table~\ref{tab:noniid-recovery} confirms the theoretical prediction.
Doubling the rounds from $T{=}5$ to $T{=}10$ restores 100\% TPR and 0\% FPR, with $\bar{z}_{\text{pos}}$ increasing from 8.57 to 12.9.
The observed growth ratio $12.9 / 8.57 \approx 1.51$ is close to the theoretical $\sqrt{10/5} \approx 1.41$, consistent with the $\sqrt{T}$ scaling predicted by Theorem~3.
Further increasing $T$ to 15 and 20 continues to widen the margin ($\bar{z}_{\text{pos}} - \gamma$ reaches 14.4 at $T{=}20$).
The benign signal $\bar{z}_{\text{neg}}$ increases modestly (0.08 to 1.32) but remains well below $\gamma{=}4$ throughout, confirming that additional rounds selectively amplify the watermark signal without inflating false positives.

These results suggest a practical guideline for non-IID deployments: the server can monitor the Stouffer margin across rounds and continue attribution until a target confidence level is reached, rather than fixing $T$ in advance.

\section{Limitations}

\noindent\textbf{Corpus owner involvement.}
FedAttr requires the corpus owner to participate in the attribution phase by evaluating the scoring function with its private detection key.
This introduces an operational dependency: attribution cannot proceed without the corpus owner's cooperation.
In settings where the corpus owner is unavailable or unwilling to participate, a delegated or threshold-based key-sharing mechanism would be needed, which we leave to future work.
\section{Broader Impacts}
\label{sec:broader-impacts}
 
FedAttr is designed to enforce data-use license compliance in federated learning, helping corpus owners identify unauthorized use of their intellectual property without compromising the privacy of honest participants.
By operating entirely within the secure aggregation framework, FedAttr preserves the core privacy guarantees that make federated learning attractive for privacy-sensitive domains such as healthcare and finance.
 
On the positive side, FedAttr strengthens the trust ecosystem between data providers and model trainers: corpus owners gain a practical enforcement tool, which in turn may encourage broader data sharing under clear licensing terms and foster more open collaboration in federated settings.
 
A potential concern is that the attribution mechanism could be repurposed beyond its intended license-enforcement scope---for instance, to monitor or profile individual clients' training behavior.
We note that FedAttr's design mitigates this risk in two ways: (i) the corpus owner must hold the watermark detection key to produce attribution decisions, so the server alone cannot perform attribution; and (ii) the mutual-information bound (Theorem~4) formally limits what the released estimator reveals about any individual client's update.
Nonetheless, deployment guidelines should clearly specify the permissible scope of attribution queries to prevent misuse.

\end{document}